\newtheorem{theorem}{Theorem}[section]
\newtheorem{proposition}[theorem]{Proposition}
\newtheorem{lemma}[theorem]{Lemma}
\theoremstyle{definition}
\newtheorem{definition}[theorem]{Definition}
\newtheorem{assumption}[theorem]{Assumption}
\newtheorem{remark}[theorem]{Remark}
\numberwithin{equation}{section}
\numberwithin{theorem}{section}
\renewcommand{\epsilon}{\varepsilon}
\newcommand{\ve}{\varepsilon}
\newcommand{\mc}[1]{{\mathcal #1}}
\newcommand{\bb}[1]{{\mathbb #1}}
\newcommand{\R}{\mathbb{R}}
\newcommand{\eps}{\varepsilon}
\newcommand{\id}{{1 \mskip -5mu {\rm I}}}
\newcommand{\varch}{\mathop{\rm ch}\nolimits}
\newcommand{\varash}{\mathop{\rm ash}\nolimits}
\newcommand{\de}{\mathop{}\!\mathrm{d}}
\title{A gradient flow approach to linear Boltzmann equations}
\author[G.\ Basile]{Giada Basile}
\address{Giada Basile \hfill\break \indent
   Dipartimento di Matematica, Universit\`a di Roma `La Sapienza'
   \hfill\break \indent
   P.le Aldo Moro 2, 00185 Roma, Italy}
 \email{basile@mat.uniroma1.it}
 \author[D.\ Benedetto]{Dario Benedetto}
 \address{Dario Benedetto \hfill\break \indent
   Dipartimento di Matematica, Universit\`a di Roma `La Sapienza'
   \hfill\break \indent
   P.le Aldo Moro 2, 00185 Roma, Italy}
 \email{benedetto@mat.uniroma1.it}
\author[L.\ Bertini]{Lorenzo Bertini}
\address{Lorenzo Bertini \hfill\break \indent
   Dipartimento di Matematica, Universit\`a di Roma `La Sapienza'
   \hfill\break \indent
   P.le Aldo Moro 2, 00185 Roma, Italy}
 \email{bertini@mat.uniroma1.it}
\begin{document}
\begin{abstract}
We introduce a gradient flow  formulation of linear Boltzmann equations. Under a diffusive scaling we derive a diffusion equation  by using the machinery of  gradient flows.
\end{abstract}

\keywords{Linear Boltzmann equation, gradient flow, diffusive limit}
\subjclass[2010]{35Q20 %Boltzmann
  82C40 %kinetic theory of gases
  49Q20 %Variational problems in a geometric measure-theoretic setting 
}

\maketitle
\thispagestyle{empty}

\section{Introduction}
The Boltzmann equation describes the evolution of the one-particle distribution on position and velocity of a rarefied gas. 
It has become a paradigmatic equation since it encodes most of the conceptual and technical issues in the description of the statistical properties for out of equilibrium systems.
In particular,  from a mathematical point of view, a global existence and uniqueness result 
is still lacking. In the kinetic regime, some transport phenomena can be described by linear Boltzmann equations. Typical examples are the charge (or mass) transport in the Lorentz gas \cite{Lo}, 
the evolution of a tagged particle in a Newtonian 
system in thermal equilibrium \cite{Sp}, and  the propagation of lattice vibrations in insulating crystals \cite{BOS}. 
Since the evolution equations are linear, their analysis is simpler. From one side, these equations have been derived from an underlying microscopic dynamics globally in time \cite{Ga, vBLLS, BGS-R, BOS}. 
From the other side, several results on the asymptotic behavior of the one-particle distribution have been obtained. In particular, by considering non degenerate scattering rates, under a diffusive rescaling 
the linear Boltzmann equation converges to the heat equation \cite{LK, BLP,  BSS, EP}.
 
In the present paper, inspired by the general theory in \cite{AGS}, we
propose a formulation of linear Boltzmann equations in terms of
gradient flows.  Recently there have been some attempts to formulate
the Fokker-Planck equation associated to continuous time reversible
Markov chains, equivalently homogeneous linear kinetic equations, as
gradient flows \cite{Ma,Mi,Er}, essentially in terms of energy
variational inequalities, and the potential applications of this
approach have yet to be fully investigated. The case of homogeneous
non linear Boltzmann equations is considered in \cite{Er2}.  The
present approach is based on an entropy dissipation inequality and can
be applied naturally to the inhomogeneous case, that appears novel. In
perspective, this approach could be adapted to the non linear, non
homogeneous Boltzmann equation.

We consider a  linear Boltzmann equation of the form
 \begin{equation}\label{BE}
  (\partial_t +b(v)\cdot \nabla_x )
  f(t,x,v)=
  \int\pi(\de v')\sigma(v,v')\big[f(t,x,v')- f(t,x,v)   \big]
\end{equation}
where $\pi(\de v)$ is a reference probability measure on the velocity space, $b$ is the drift, $\sigma(v,v')\geq 0$ is the scattering kernel and $f$ is the density of the one-particle distribution 
with respect to $\de x\,\pi(\de v)$. We assume the detailed balance condition, i.e. $\sigma(v,v')=\sigma(v',v)$. The entropy $\mathcal{H}(f)=\int \de x\int\pi(\de v)f\log f$ is a Lyapunov functional for the evolution 
\eqref{BE}, and the transport term do not affect its rate of decrease.  This observation will allow to formulate \eqref{BE} as the following entropy dissipation inequality
\begin{equation}\label{EDI}
 \mathcal{H}\big(f(T)\big) +\int_0^T \de t\, \mathcal{E}\big( f(t)\big) + \mathcal R_0 (f)\leq \mathcal H \big(f(0)\big),
\end{equation}
where $\mathcal E\big( f\big)=\displaystyle\int \de x\,\iint \pi(\de
v)\pi(\de v')\sigma(v,v')\big[\sqrt{f(x,v')}- \sqrt{f(x,v)} \big]^2$
is the Dirichlet form of the square root of $f$ and $\mathcal R_0\geq
0$ is a kinematic term that will be defined later.

As an application of \eqref{EDI} we  discuss the diffusive limit of the linear Boltzmann equation. This is a classical result, 
but the gradient flow formulation provides a transparent proof and allows to consider more general 
initial conditions, which are only required to satisfy an entropy bound.
More precisely, we will show that in the diffusive scaling limit the particle density
%$\rho^\ve(t,x)=\int\pi(\de v)f(\ve^{-2}t,\ve^{-1}x, v)$
converges 
to the solution of the heat equation. The proof will be achieved by taking the limit in the rescaled entropy dissipation inequality and deducing the corresponding inequality
for the heat equation.

\section{A gradient flow formulation}

In this section we introduce a gradient flow formulation of
non-homogeneous linear kinetic equations. Both for ease of
presentation and for future use, we however first review the gradient
flow formulation of the heat equation, that is here considered in
somewhat different setting that includes the current as a dynamical
variable.

Throughout the whole paper, the space domain is the $d$-dimensional
torus $\bb T^d:= \bb R^d/\bb Z^d$ and we denote by $\de x$ the Haar
measure on $\bb T^d$. The set of Borel probability measures on $\bb
T^d$ is denoted by $\mc P(\bb T^d)$ that we consider endowed with the
(metrizable) topology induced by the weak convergence. Recall finally
that the \emph{entropy} is the convex lower semicontinuous functional
$H\colon \mc P(\bb T^d) \to [0,+\infty]$ defined by $H(\mu) =
\int\!\de x\, \rho \log \rho$ if $\de\mu=\rho\,\de x$ and $H(\mu)=+\infty$
otherwise.

\subsection*{Heat equation}
We start by an informal discussion. 
Consider the heat equation on $\bb T^d$
\begin{equation*}
 \partial_t \rho=\nabla\cdot D\nabla \rho  
\end{equation*}
where $\rho$ is a probability density and the diffusion coefficient
$D$ is a positive symmetric $d\times d$ matrix. 
We introduce the currents as vector fields on $\bb T^d$, denoted by
$j$.  Given $\rho$, we define the associated current $j^\rho:=
-D\nabla\rho$. We can then rewrite the heat equation as
 \begin{equation}\label{he0}
 \begin{cases}
 \partial_t\rho+\nabla \cdot j=0\\
 j=j^\rho.
 \end{cases}
 \end{equation}
We shall rewrite this system as a variational inequality that
expresses the decrease of the entropy. 
 
Fix $T>0$. On the set of paths $(\rho(t),j(t))$, $t\in [0,T]$, 
satisfying the continuity equation $\partial_t \rho+\nabla \cdot j=0$
consider the action functional
\begin{equation*}
  I(\rho,j) = \frac 12 \int_0^T\!\de t \int\! \de x \frac 1{\rho(t)} 
  \big[ j(t) +D \nabla\rho(t) \big] \cdot D^{-1}
  \big[ j(t) +D \nabla \rho(t) \big],
\end{equation*}
where $\cdot$ denotes the inner product in $\bb R^d$.
This functional arises naturally by analyzing the large deviation
asymptotics of $N$ independent Brownians \cite{DG, KO} and its connection
with the gradient flow formulation of the heat equation is discussed
in \cite{ADPZ}. To be precise, the rate function in \cite{DG, KO, ADPZ} does not include the current as a
 dynamical variable but it can be extended
to this case, see \cite{BDGJL} for a similar functional  in the context of stochastic lattice gases.

Observe that $I\geq 0$ and $I(\rho, j)=0$ if and only if
$j=j^\rho$. Hence the second equation in \eqref{he0} is equivalent to
$I(\rho, j)\leq 0$.  By expanding the square we deduce 
\begin{equation}\label{ineq-calore}
\int_0^T\!\de t \int\!\de x\,
  \Big[ \frac 12 \frac 1{\rho(t)} j(t) \cdot D^{-1} j(t)  
  + \frac 12 \frac 1{\rho(t)}   \nabla\rho(t) \cdot D \nabla \rho(t)
  + \frac  1{\rho(t)} \nabla \rho(t) \cdot j(t) \Big]\leq 0.
\end{equation}
Since $(\nabla \rho)/\rho =\nabla \log\rho$, integrating by
parts and using the continuity equation, the last term is the total
derivative of $H(\rho(t))$.

We now introduce the \emph{Fisher information} $E$ as the Dirichlet
form of square root, namely
\begin{equation*}
  E(\rho)=\frac 12\int\! \de x  \frac 1{\rho}   \nabla\rho \cdot D \nabla
  \rho = 2\int \,\de x\, \nabla\sqrt\rho\cdot D\nabla\sqrt\rho. 
\end{equation*}
Let also the \emph{kinematic term} $R$ be the functional on the set  the
path $(\rho(t),j(t))$ defined by 
\begin{equation*}
 R (\rho, j)=\frac 12\int_0^T\! \de t\,\int\!\de x\,
    \frac 1{\rho(t)} j(t) \cdot D^{-1} j(t), 
\end{equation*}
then \eqref{ineq-calore} reads
\begin{equation}
  \label{gfhe}
 H(\rho(T))+\int_0^T \!\de t\,  E(\rho(t))+ R (\rho, j) \leq H(\rho(0)) 
\end{equation}
which is the gradient flow formulation of the heat equation that we
will use here. 

We now specify the precise formulation in which we consider a family of probabilities $\mu_t(\de x)=\rho(t,x)\de x$, 
$t\in [0, T]$, while the currents are the vector valued measures $J(\de t, \de x)=j(t,x)\de t \de x$.
Given $T>0$ let $C\big([0,T]; \mc P (\bb T^d)\big)$ be the set of continuous paths on
$\mc P(\bb T^d)$ endowed with the topology of uniform convergence. 
%and denote by $C_\mathrm{be} \big( [0,T] ; \mc P( \bb T^d)\big)$ the subset of $C\big([0,T]; \mc P (\bb T^d)\big)$
%such that $H(\mu_t)<+\infty$ for any $t\in [0,T]$. 
Let also  $\mc M\big([0,T]\times \bb T^d; \bb R^d \big)$  be the set of
vector valued Radon measures on $[0,T]\times \bb T^d$ endowed with
the weak* topology.  
Set $S:=C\big([0,T]; \mc P (\bb T^d)\big)\times \mc M\big([0,T]\times
\bb T^d; \bb R^d \big)$ endowed with the product topology.
%and denote by $S_\mathrm{be}:=C_\mathrm{be} \big( [0,T] ; \mc P( \bb T^d)\big)\times \mc M\big([0,T]\times \bb T^d; \bb R^d \big)$.

Given a positive $d\times d$ matrix $D$, the Fisher information 
$E\colon \mc P(\bb T^d)\to[0,\infty]$ can be defined by the variational formula
\begin{equation}
\label{varD-calore}
  E (\mu)=2\sup_{\phi\in C^2(\bb T^d)}
  \Big\{ -\int \!\de \mu \, e^{-\phi}\nabla \cdot D \nabla e^\phi\Big\},
\end{equation}
which implies its lower semicontinuity and convexity.
%%% regolaritÃ  di D?

The kinematic term $ R \colon S\to [0,\infty]$ admits the
variational representation 
\begin{equation}\label{varR-calore}
 R (\mu,J)=\sup_{w\in C([0,T]\times \bb T^d; \bb R^d)}
\Big\{J(w)-\frac 12 \int_0 ^T \!\de t\int \!\de\mu_t\,w\cdot Dw \Big\},
\end{equation}
which implies its lower semicontinuity and convexity.

\begin{definition}
\label{he}
Let $\nu\in\mc P(\bb T^d)$ with $H(\nu)<+\infty$. A path $(\mu, J)\in
S$ is a solution of the heat equation with initial condition $\nu$
iff $\mu_0=\nu$ and
\begin{eqnarray}
  \label{eqcont}
 &&\int_0^T \!\de t \,\mu_t(\partial_t\phi)   +J(\nabla \phi)=0,\qquad
 \phi\in C_c^1\big((0,T)\times \bb T^d  \big)\\
\label{ineq1-calore}
 && H(\mu_T)+\int_0^T \!\de t\, E(\mu_t)+ R (\mu, J) \leq H(\nu).
\end{eqnarray}
\end{definition}
%Observe that if \eqref{ineq1-calore} holds, then the same inequality
%also holds when $T$ is replaced by any $t\in [0,T]$. 

The standard formulation of the heat equation as gradient flow of the
entropy is recovered from \eqref{ineq1-calore} by projecting on
the density. Indeed, by the Benamou-Brenier lemma \cite{BeBr}, we deduce that if 
$(\mu,J)$ is a solution to the heat equation according to
Definition~\ref{he}, then $\mu=(\mu_t)_{t\in[0,T]}$ satisfies 
\begin{equation}
\label{hemu}
  H(\mu_T)+\int_0^T \!\de t\, \Big\{ E(\mu_t)+ 
  \frac 12 \big| \dot \mu_t  \big|^2  \Big\} \leq H(\nu)
\end{equation}
where $ \big| \dot \mu_t  \big|$ is the metric derivative of
 $t\mapsto \mu_t$ with respect to the Wasserstein-$2$ distance,
namely $\big| \dot \mu_t  \big| =\lim_{h\to 0}
\mathrm{d}_{W_2}\big(\mu_{t+h},\mu_t\big)/h$, 
where $\mathrm{d}_{W_2}$ denotes the Wasserstein-$2$ distance on $\mc
P(\bb T^d)$. 

Conversely, let $\mu$ be a solution to \eqref{hemu} satisfying
$\mu_0=\nu$. Introduce the functional $J^\mu$ on $C^1\big([0,T]\times
\bb T^d;\bb R^d\big)$ defined by 
$J^\mu(w) = \int_0^T\!\de t\,\mu_t \big(\nabla\cdot D w_t\big)$. 
Since $\int_0^T\!\de t \, E(\mu_t) \le H(\nu)$, the functional $J^\mu$
extends to an element of $\mc M\big( [0,T]\times \bb T^d;\bb
R^d\big)$, still denoted by $J^\mu$. Using again the Benamou-Brenier lemma
it is then straightforward to check that the pair $(\mu,J^\mu)$ is a
solution to the heat equation in the sense of Definition~\ref{he}.   

The previous remarks, together with the existence and uniqueness result
for the formulation \eqref{hemu} in \cite{Gi}, imply the following
statement. 

\begin{proposition}\label{t:uniq}
  For each $\nu\in\mc P(\bb T^d)$, with $H(\nu)< \infty$, there exists
  a unique solution of the heat equation with initial condition $\nu$.
\end{proposition}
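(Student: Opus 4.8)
The plan is to reduce the statement entirely to the cited results by showing that Definition~\ref{he} and the metric formulation \eqref{hemu} define the same solution set, modulo the passage between a path $\mu$ and its canonically associated current. The existence and uniqueness for \eqref{hemu} with $\mu_0=\nu$ and $H(\nu)<\infty$ is precisely the content of the theory of gradient flows of the entropy in the Wasserstein space (Ambrosio--Gigli--Savar\'e, and in the torus setting \cite{Gi}), so once the equivalence is in place there is nothing left to prove. Concretely, I would first invoke the discussion preceding the statement in two directions: (i) if $(\mu,J)\in S$ solves the heat equation in the sense of Definition~\ref{he}, then by the Benamou--Brenier lemma \eqref{ineq1-calore} implies \eqref{hemu} for $\mu$ alone; (ii) conversely, given a solution $\mu$ of \eqref{hemu}, the functional $J^\mu(w)=\int_0^T\!\de t\,\mu_t(\nabla\cdot D w_t)$ extends, thanks to the bound $\int_0^T\!\de t\,E(\mu_t)\le H(\nu)$, to a vector-valued Radon measure, and $(\mu,J^\mu)$ solves the heat equation in the sense of Definition~\ref{he}.

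The existence half is then immediate: take the unique metric gradient flow $\mu$ from \cite{Gi} and set $J:=J^\mu$; by (ii) this pair lies in $S$ and satisfies \eqref{eqcont}--\eqref{ineq1-calore}. For uniqueness, suppose $(\mu^1,J^1)$ and $(\mu^2,J^2)$ are two solutions with the same initial datum $\nu$. By (i) each $\mu^i$ solves \eqref{hemu}, so by the uniqueness statement in \cite{Gi} we get $\mu^1=\mu^2=:\mu$. It remains to identify the currents. Here I would argue that for a \emph{fixed} path $\mu$ the continuity equation \eqref{eqcont} together with the finiteness of $R(\mu,J)$ forces $J$ to be uniquely determined: \eqref{eqcont} pins down $J$ tested against gradients $\nabla\phi$ of functions $\phi\in C^1_c((0,T)\times\bb T^d)$, and the strict convexity of $R(\mu,\cdot)$ — visible from the variational representation \eqref{varR-calore}, which makes $R(\mu,\cdot)$ a Legendre-type transform of a strictly convex quadratic form weighted by $\mu_t$ — rules out the addition of any nonzero divergence-free component as long as $R$ stays finite. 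Combining, $J^1=J^2$, hence $(\mu^1,J^1)=(\mu^2,J^2)$.

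The main obstacle I anticipate is the current-identification step in the uniqueness argument: one must show that the only $J$ with $\partial_t\mu+\nabla\cdot J=0$ and $R(\mu,J)<\infty$ is $J^\mu$. The subtlety is that a divergence-free measure $J_0$ need not be absolutely continuous with respect to $\mu_t\,\de t$, so one cannot naively say $R(\mu,J+J_0)=R(\mu,J)$ plus a nonnegative cross term; rather, $R(\mu,J+J_0)=+\infty$ unless $J_0\ll\mu_t\de t$, in which case $J_0=g\,\mu_t\de t$ and the quadratic expansion together with $\int_0^T\!\de t\int\!\de\mu_t\, g\cdot D\nabla\phi$-type orthogonality (from $\nabla\cdot J_0=0$ tested, after regularization, against the potential generating $J^\mu$) gives $R(\mu,J^\mu+J_0)=R(\mu,J^\mu)+\tfrac12\int_0^T\!\de t\int\!\de\mu_t\,g\cdot Dg$, which is $>R(\mu,J^\mu)$ unless $g=0$ $\mu_t\de t$-a.e. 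This is essentially the Benamou--Brenier minimality statement again, so the honest version of this paragraph is simply a careful citation of \cite{BeBr,AGS} rather than a new computation; I would keep it short and lean on those references.
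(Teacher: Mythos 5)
Your overall route coincides with the paper's: project a solution in the sense of Definition~\ref{he} onto the metric formulation \eqref{hemu} via the Benamou--Brenier lemma, invoke the existence and uniqueness result of \cite{Gi} for \eqref{hemu}, and lift back through the current $J^\mu$; the paper's proof of Proposition~\ref{t:uniq} is exactly this deferral to the preceding remarks. The existence half and the identification $\mu^1=\mu^2$ in your argument are therefore fine.

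The weak point is the identification of the current, and as written it has a genuine hole. The claim that the continuity equation \eqref{eqcont} together with finiteness (or strict convexity) of $R(\mu,\cdot)$ ``rules out'' a divergence-free addition is not a valid mechanism: if, for instance, $\mu_t$ has a smooth density bounded away from zero and $J_0=g\,\mu_t\,\de t$ with $g$ a fixed nonzero divergence-free field, then $(\mu,J^\mu+J_0)$ still satisfies \eqref{eqcont} and has finite kinematic term. Your quadratic expansion only yields $R(\mu,J^\mu+J_0)>R(\mu,J^\mu)$, and this does not by itself contradict \eqref{ineq1-calore}, which imposes merely an \emph{upper} bound on $R$. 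The missing ingredient is that along the Gigli solution the dissipation inequality is saturated: the heat flow satisfies the energy--dissipation \emph{equality}, equivalently any admissible pair satisfies the reverse inequality $H(\mu_T)+\int_0^T\!\de t\,E(\mu_t)+R(\mu,J)\ge H(\nu)$ (the heat-equation analogue of \eqref{jj1t}, obtained from the chain rule for the entropy, or from the equality case in the pointwise Young inequality between $E$ and the $R$-integrand). Combined with \eqref{ineq1-calore} this forces $R(\mu,J)=\frac12\int_0^T\!\de t\,|\dot\mu_t|^2$, i.e.\ $J$ attains the Benamou--Brenier minimum, and uniqueness of the minimal (tangent) current then gives $J=J^\mu$. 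This extra step is indeed available in \cite{AGS,Gi}, so your plan of leaning on those references is sound, but the citation has to carry precisely the saturation/chain-rule point, not only the minimality statement of \cite{BeBr}.
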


\subsection*{Linear Boltzmann equations}

We do not need any particular hypotheses on the velocity space $\mc V$
that is assumed to be a Polish space, i.e.\ a metrizable complete and
separable topological space. We denote by $\mc P(\bb T^d\times \mc V)$
the set of probabilities on $\bb T^d\times \mc V$, that we consider
endowed with the topology of weak convergence.  We suppose given a
Borel probability measure $\pi$ on $\mc V$, a symmetric scattering
kernel $\sigma$, i.e.\ a Borel function $\sigma\colon \mc V\times \mc
V \to [0,+\infty)$ satisfying $\sigma(v,v')=\sigma(v',v)$, $v,v'\in
\mc V$, and a \emph{drift} $b\colon \mc V\to \bb R^d$. Given $P\in \mc
P(\bb T^d\times\mc V)$, we denote by $\mc H(P)$ the relative entropy
of $P$ with respect to the probability $\de x\,\pi(\de v)$ namely, $\mc H(P)
= \iint \!\de x \pi(\de v)\, f \log f $ if $\de P=f \,\de x\,\pi(\de
v)$ and $\mc H(P)=+\infty$ otherwise.

Also in this case we start by an informal discussion.
Fix $T>0$. Given a path $(P_t)_{t\in[0,T]}$ on 
$\mc P(\bb T^d\times \mc V)$ with $\de P_t = f(t,x,v) \, \de x
\,\pi(\de v)$, we use the 
shorthand notation $f=f(t,x,v)$, $f'=f(t,x,v')$ and set 
\begin{equation}
  \label{etaf}
  \eta^f =\eta^f(t,x,v,v') 
  := \sigma (f-f') = \sigma(v,v') \big[ f(t,x,v)-f(t,x,v')\big].
\end{equation}
We then rewrite the linear Boltzmann equation \eqref{BE} in the form
\begin{equation}\label{continuita}
  \begin{cases}
  \big(\partial_t +b(v)\cdot \nabla_x \big) f(t,x,v)
  +\int\! \pi(dv')\, \eta(t,x,v,v')=0\\
  \eta = \eta^f
  \end{cases}
\end{equation}
We understand that the first equation has to be
satisfied weakly and we shall refer to it as the \emph{balance}
equation. 
%Denote by $\mc J$ the space of functions $\eta = \eta(t,x,v,v')$
%antisymmetric in the exchange $v,v'$. 
We are going to rewrite the condition $\eta=\eta^f$ as an inequality that
expresses the decrease of the relative entropy $\mc H$. To this end, given
$\varkappa\geq 0$ let $\Phi_\varkappa\colon \bb R_+\times\bb R_+\times \bb
R\to [0,+\infty)$ be the convex function defined by 
\begin{equation*}
  \Phi_\varkappa (p,q;\xi) :=\sup_{\lambda\in\bb R}\Big\{\lambda \xi -
  \varkappa p \big(e^\lambda -1) - \varkappa q \big(e^{-\lambda}
  -1\big)\Big\}
\end{equation*}
observing that given $p,q\in\bb R_+$ the map $\xi\mapsto
\Phi_\varkappa(p,q;\xi)$ is positive (take $\lambda=0$), and equal to
zero iff $\xi=\varkappa (p-q)$.
Explicitly, as few computations shows, $\Phi_\varkappa$ reads
\begin{equation}
  \label{Phi}
  \begin{split}
    \Phi_\varkappa (p,q;\xi)  &=
        \xi \Big[ \varash \frac \xi{2 \varkappa\sqrt{pq}}
      -\varash \frac {\varkappa (p-q)}{2 \varkappa\sqrt{pq}} \Big]
    \\
    & - \Big[ \sqrt{ \xi^2 + 4 \varkappa^2 pq} - 
      \sqrt{ \big[\varkappa(p-q)\big]^2 + 4 \varkappa^2 pq}\Big]
  \end{split}
\end{equation}
where we recall that $\varash(z) = \log(z+\sqrt{1+z^2})$. We note that if $\varkappa=0$ then $\Phi_0(p,q;0)=0$ while $\Phi_0(p,q;\xi)=+\infty$ if $\xi\neq 0$.

Fix a path $(f(t),\eta(t))$, $t\in[0,T]$ satisfying
$\eta(t,x,v,v')=-\eta(t,x,v',v)$ and the balance equation in
\eqref{continuita}.  
The condition $\eta(t)= \eta^{f(t)}$, $t\in[0,T]$ is equivalent to
\begin{equation}
\label{jj1}
  \mc I (f,\eta) := \int_0^T\!\de t \int\! \de x
  \iint\! \pi(\de v)\,\pi(\de v')\,
  \Phi_\sigma (f,f';\eta) \le 0.
\end{equation}
This functional is connected with the large deviations asymptotic of
a Markov chain on $\mc V$ with transition rates $\sigma(v',v)\pi(\de
v')$, see \cite{BFG, MPR}.

We next write 
\begin{equation}
  \label{psi}
  \Phi_\varkappa (p,q;\xi) = \Phi_\varkappa(p,q;0) 
  +\xi \frac{\partial}{\partial \xi}
  \Phi_\varkappa(p,q;0)  +\Psi_\varkappa(p,q;\xi).
\end{equation}
By few explicit computations,
\begin{equation*}
  \begin{split}
    &\Phi_\varkappa(p,q;0) = \varkappa \big(\sqrt{p} -\sqrt{q}\big)^2\\
    &\frac{\partial}{\partial \xi} \Phi_\varkappa(p,q;0) 
    = \frac 12 \log \frac qp\\
    &\Psi_\varkappa(p,q;\xi) =
    \xi \varash \frac \xi{2 \varkappa\sqrt{pq}}
    - \Big[ \sqrt{ \xi^2 + 4 \varkappa^2 pq} - 2\varkappa \sqrt{pq} \Big].
  \end{split}
\end{equation*}
Observe that $\Psi_\varkappa$ has the variational representation
\begin{equation}
  \label{legpsi}
  \Psi_\varkappa(p,q;\xi) =\sup_{\lambda\in \bb R} \Big\{ \lambda \xi -
  2\varkappa \sqrt{pq} \big[ \varch \lambda -1\big] \Big\}.
\end{equation}
In particular, $\Psi_\varkappa\ge 0$. Moreover, while  the map
$(p,q;\xi)\mapsto \Psi_\varkappa(p,q;\xi)$ is convex, 
the map  $\xi\mapsto 
\Psi_\varkappa(p,q;\xi)$ is strictly convex. Finally, 
$\Psi_\varkappa(p,q;\xi)\sim \xi^2$ for $\xi$ small and 
$\Psi_\varkappa(p,q;\xi)\sim |\xi|\log|\xi|$ for $\xi$ large. 

Observe now that for the path $(f(t), \eta(t))$, $t\in[0,T]$
satisfying the balance equation in \eqref{continuita} we have 
\begin{equation*}
  \begin{split}
    \frac{d}{dt} \mc H(f(t)) &= \int\! \de x \int\! \pi(\de v) \, \log f 
    \Big[ - b(v)\cdot\nabla_x f
    - \int \pi(\de v') \, \eta(t,x,v,v')\Big]
    \\
    &= - \int\!\de x \iint \!
    \pi(\de v)\pi(\de v')\, \eta\log f
  \end{split}  
\end{equation*}
since the first term is a total derivative in $x$. 
Hence, by the antisymmetry of $\eta$,
\begin{equation*}
  \int\! \de x 
  \iint\! \pi(\de v)\pi(\de v') \,
   \eta \frac{\partial}{\partial \xi} \Phi_\varkappa(f,f';0)  
  = \frac{d}{dt} \mc H(f(t)).
\end{equation*}
for any $\varkappa >0$. 
Setting $\varkappa = \sigma$, 
inserting \eqref{psi} and integrating in time we obtain that,
for any $(f(t), \eta(t))$, $t\in[0,T]$ satisfying the balance
equation, it holds
\begin{equation}
  \label{uguh}
  \begin{split}
    \mc H(f(T)) +  \int_0^T\!\de t \int\! \de x \iint\!
    \pi(\de v) \pi(\de v') \,
    \big[ \Phi_{\sigma}(f,f';0)+\Psi_{\sigma}(f,f';\eta)\big]
    \\
    = \mc H(f(0)) + \int_0^T\!\de t \int \!\de x \iint\!
    \pi(\de v) \pi(\de v') \,
    \Phi_{\sigma}(f,f';\eta).
  \end{split}
\end{equation}
Gathering the above computations we conclude that \eqref{jj1} can be
rewritten as 
\begin{equation}
  \label{jj2}
  \mc H(f(T)) + \int_0^T\!\de t \, \mc E({f(t)}) + \mc R (f,\eta)
  \le \mc H(f(0))
\end{equation}
where
\begin{equation}
  \label{dfsr}
  \mc E({f}) = 
  \int \!\de x 
  \iint \!\pi(\de v)\pi(\de v')\, \sigma(v,v')
  \big[ \sqrt{f'} -\sqrt{f} \big]^2
\end{equation}
and
\begin{equation}
  \label{metric}
  \mc R (f,\eta) = \int_0^T\!\de t \int \!\de x 
  \iint\!\pi(dv)\pi(dv')\,
  \Psi_\sigma (f,f'; \eta).
\end{equation}
The inequality \eqref{jj2}, formally analogous to \eqref{gfhe}, is the
proposed gradient flow formulation of the linear Boltzmann equation \eqref{BE}.

\medskip
We now discuss the precise formulation in which we introduce the
measures $\de P = f(x,v) \, \de x\, \pi(\de v)$ and $\Theta(\de t,\de x,
\de v, \de v')= \eta(t,x,v,v') \, \de t\, \de x\, \de v\, \de v'$.  
We first specify the hypotheses on the scattering rate $\sigma$ and
the drift $b$ that are assumed to hold throughout the whole paper.

\begin{assumption}
\label{t:asb}$\phantom{i}$
 \begin{itemize} 
  \item[(i)] The \emph{scattering kernel} is a Borel function 
    $\sigma\colon \mc V \times \mc V\to [0,+\infty)$ satisfying
    $\sigma(v,v')=\sigma(v',v)$, $(v,v') \in \mc V\times \mc V$. 
  \item[(ii)] The \emph{scattering rate} $\lambda\colon \mc V \to
    [0,+\infty)$ is defined by $\lambda(v):=\int\!\pi(\de v')\,
    \sigma(v,v')$. We require that it has all exponential moments with
    respect to $\pi$ namely, $\pi\big[ e^{\gamma \lambda}\big]
    <+\infty$ for any $\gamma\in \bb R_+$.
  \item[(iii)] The \emph{drift} is a Borel function $b\colon \mc V \to
    \bb R^d$.  We require that it has all exponential moments with
    respect to $\pi$ namely, $\pi\big[ e^{\gamma |b| }\big]
    <+\infty$ for any $\gamma\in \bb R_+$, where $|b|$ is the
    Euclidean norm of $b$.
  \end{itemize}  
\end{assumption}

Given $T>0$ let $C\big([0,T]; \mc P (\bb T^d\times \mc V)\big)$ be the
set of continuous paths on $\mc P(\bb T^d \times \mc V)$ endowed with
the topology of uniform convergence. Denote by $\mc
M_\mathrm{a}\big([0,T]\times \bb T^d\times \mc V \times \mc V\big)$
the set of finite Radon measures on $[0,T]\times \bb T^d\times \mc V\times
\mc V$ antisymmetric with respect to the exchange of the last two
variables endowed with the weak* topology.  Set $\mc S:=C\big([0,T];
\mc P (\bb T^d \times \mc V)\big) \times \mc M_\mathrm{a}
\big([0,T]\times \bb T^d \times \mc V\times \mc V \big)$ endowed with
the product topology. 
%Denote by $\mc P_\mathrm{be}(\bb T^d\times \mc V)$ the set of probabilities with
%bounded entropy with respect to the probability $\de x\, \pi(dv)$. 
Let  
also $C_\mathrm{be}\big([0,T]; \mc P (\bb T^d\times \mc V)\big)$  the
set of paths $(P_t)_{t\in [0.T]}$ in $C\big([0,T]; \mc P (\bb
T^d\times \mc V)\big)$ such that $\sup_{t\in[0,T]}\mathcal H(P_t)<+\infty$ and let 
 finally 
$\mc S_\mathrm{be} :=C_\mathrm{be} \big([0,T];
\mc P (\bb T^d \times \mc V)\big) \times \mc M_\mathrm{a}
\big([0,T]\times \bb T^d \times \mc V\times \mc V \big)$

% and denote by $\mc S_0$ its closed subspace of
% paths satisfying weakly the balance equation, that is the collection
% of the elments $(P, \eta)\in \mc S$ such that
% \begin{equation}
%   \label{beq}
%   \int_0^T \!\de t \,P_t(\partial_t\phi + \nabla_x \phi \cdot b) 
%   = \frac 12 \int\! \eta(\de t,\de x,\de v, \de v') \, 
%   \big[\phi(t,x,v) -\phi(t,x,v')\big]
% \end{equation}
% for all continuos functions $\phi\colon (0,T)\times \bb T^d\times \mc
% V$ with compact support and continuously differentiable with respect
% to $t$ and $x$.  
If $P\in \mc P(\bb T^d\times \mc V)$ has finite entropy,
the Dirichlet form of the square root $\mc E$ can be defined by the
variational formula
\begin{equation}
\label{varD}
\mc E(P) := 2 \sup_{\phi \in C_\mathrm{b}(\bb T^d\times \mc V)} 
\iint \! P(\de x,\de v) \pi(\de v') \, \sigma(v,v') 
\Big[ 1 - e^{\phi(x,v')-\phi(x,v)} \Big].
\end{equation}
Note indeed the right hand side  is well defined for any $\phi\in C_\mathrm{b}(\bb T^d\times \mc V)$ in view of Assumption~\ref{t:asb} and the basic
entropy inequality $P(\psi ) \le \mc H (P) + \log \int\!\de x
\pi(\de v)\, e^\psi$, $\psi\colon \bb T^d\times \mc V\to \mathbb R$. 
The representation \eqref{varD} corresponds to the Donsker-Varadhan large deviation  for the empirical measure
of the continuous time Markov chain on $\mc V$ with transition rates $\sigma(v',v)\pi(\de v')$ \cite{DV}. Indeed,
$\mc E(P)=\sup_{\phi}\{-P(e^{-\phi}\mc L e^\phi) \}$, where
\begin{equation}\label{def:L}
 \mc L g (v)=\int \pi(\de v')\sigma(v',v)[g(v')-g(v)].
\end{equation}

%This variational representation readily implies  
%the convexity and lower semicontinuity of $\mc E$.

A variational representation for the kinematic term $\mc R$ is 
obtained by combining \eqref{legpsi} with the simple observation that
for $p,q\in \bb R_+$ we have $-2\sqrt{pq} =\sup_{a>0}\big\{ -
a p - a^{-1} q \big\}$. We thus let 
$\mc R \colon \mc S_\mathrm{be} \to [0,+\infty]$ be the functional defined by 
\begin{equation}
  \label{varR}
  \begin{split}
    \mc R (P,\Theta)
    &:= \sup_{\zeta,\alpha} \bigg\{
    \Theta (\zeta) - 
    \int_0^T\!\de t \iiint \! P_t(\de x,\de v)\pi(\de v') \, 
    \sigma(v,v') 
    \\ & \qquad \qquad
    \times \big[ \varch \zeta(t,x,v,v') -1 \big] \big[ \alpha(t,x,v,v') +
    \alpha(t,x,v',v)^{-1} \big]
    \bigg\},
\end{split}
\end{equation}
where the supremum is carried out over the continuous functions $\zeta
\colon [0,T]\times \bb T^d\times \mc V\times \mc V \to \bb R$ with
compact support and antisymmetric with respect to the exchange of the
last two variables and the bounded continuous functions $\alpha \colon
[0,T]\times \bb T^d\times \mc V\times \mc V \to (0,+\infty)$ uniformly
bounded away from zero. As before, the basic entropy inequality
implies that $\mc R$ is well defined.
%moreover it is convex and lower semicontinuous.   

At this point the gradient flow formulation of the linear Boltzmann
equations is simply specified by the following entropy dissipation
inequality. 

\begin{definition}
\label{t:dlbe}
Let $Q\in\mc P(\bb T^d\times \mc V)$ with $\mc H(Q)<+\infty$. An element
$(P, \Theta)\in \mc S_\mathrm{be}$ is a solution to the linear Boltzmann equation 
with initial condition $Q$ iff $P_0=Q$ and
\begin{eqnarray}
\label{beq}
&&\!\!\!\!\!\!\!\!\!\!\!\!\!\!\!\!\!\!\!\!
\displaystyle{  
\int_0^T \!\de t \,P_t(\partial_t\phi + b\cdot \nabla_x \phi) 
= \frac 12 \int\!\Theta(\de t,\de x,\de v, \de v') \, 
\big[\phi(t,x,v) -\phi(t,x,v')\big], }
\\
\label{ineq1-calore2}
&&\!\!\!\!\!\!\!\!\!\!\!\!\!\!\!\!\!\!\!\!
\displaystyle{  
  \mc H(P_T)+\int_0^T \!\de t\, \mc E(P_t)+ \mc R (P, \Theta) 
  \leq \mc H(Q).}
\end{eqnarray}
for all continuous functions $\phi\colon (0,T)\times \bb T^d\times \mc
V$ with compact support and continuously differentiable
in the first two variables.
\end{definition}

\begin{remark}
\label{t:rem}
  If $(P,\Theta)$ is a solution to the linear Boltzmann equation in
  the time interval $[0,T]$ then it solves the same problem   in the time
  interval $[0,t]$, $t\le T$ as well. This follows from the fact that any
  element $(P,\Theta)\in \mc S_\mathrm{be}$ satisfies, for $0\le s < t
  \le T$, the inequality
  \begin{equation}
    \label{jj1t}
      \mc H(P_t)+\int_s^t \!\de u\, \mc E(P_u)+ \mc R^{s,t} (P, \Theta_{[s,t]}) 
      \geq \mc H(P_s)
  \end{equation}
  where $\Theta_{[s,t]}$ is the restriction of $\Theta$ to the
  interval $[s,t]$ and the kinematic term $\mc R^{s,t}$ is defined as in
  \eqref{varR} with the interval $[0,T]$ replaced by $[s,t]$.  
  This inequality corresponds in fact to the trivial inequality $\mc
  I_{[s,t]}(P,\Theta)\ge 0$ where the action functional $\mc I_{[s,t]}$ is
  defined as in \eqref{jj1} with the interval $[0,T]$ replaced by
  $[s,t]$. The actual proof of \eqref{jj1t} is detailed  in Appendix \ref{app2}.
\end{remark}

It is of course possible to obtain a formulation only in terms of the
one particle distribution. More precisely, the formulation \eqref{EDI}
is obtained from \eqref{ineq1-calore2} simply by letting $\mc R_0
\colon C_\mathrm{be}([0,T]; \mc P(\bb T^d\times \mc V))\to [0,+\infty]$ be the
functional defined by $\mc R_0(P) = \inf_{\Theta} \mc R(P,\Theta)$ where
the infimum is carried out over all $\Theta\in \mc
M_\mathrm{a}([0,T]\times \bb T^d\times \mc V\times \mc V)$ such that
the pair $(P,\Theta)$ satisfies the balance equation \eqref{beq}. It is
however unclear to us whether $\mc R_0(P)$ could be represented by the
metric derivative of $t\mapsto P_t$ with respect to a suitable
distance on $\mc P(\bb T^d\times \mc V)$.

We now state an existence and uniqueness result for the above formulation, together
with a continuous dependence on the initial condition and the
coefficients. In particular uniqueness implies that solutions to  \eqref{BE}
with bounded entropy  are characterized by the gradient flow formulation  in Definition \ref{t:dlbe}.

\begin{theorem}
\label{t:eu}
For each $Q\in\mc P(\bb T^d\times \mc V)$, with $\mc H(Q)< +\infty$,
there exists a unique solution $(P,\Theta)$ to the linear Boltzmann
equation with initial condition $Q$. Furthermore
  \begin{itemize}
  \item[(i)] Set $\Theta^P(\de t,\de x, \de v, \de v'):=\de t\,\sigma
    (v, v')\big[P_t(\de x, \de v)\pi (\de v')-\pi (\de v) P_t(\de x,
    \de v') \big]$. Then $\Theta=\Theta^P$.
  \item [(ii)] 
    Let $\{Q^n\}\subset \mc P(\bb T^d\times \mc V)$ be such that
    $Q^n\to Q$ and $\mc H(Q^n)\to \mc H(Q)$ and denote by
    $(P^n,\Theta^n)\in \mc S_\mathrm{be}$ the solution to the linear Boltzmann
    equation with initial condition $Q^n$. Then the sequence
    $\{(P^n,\Theta^n)\}$  converges to $(P,\Theta)$.
  \item [(iii)] 
    Fix $Q\in \mc P(\bb T^d\times \mc V)$ with $\mc H(Q) <+\infty$ and
    consider coefficients $b$, $\sigma$ together with sequences $b^n$,
    $\sigma^n$ all satisfying Assumption~\ref{t:asb}.  Denote by
    $(P^n,\Theta^n)\in \mc S_\mathrm{be}$ the solution to the linear
    Boltzmann equation with initial condition $Q$ and coefficients
    $b^n$, $\sigma^n$.  If $b_n\to b$ in $\pi$ probability, 
    $\sigma^n\to \sigma$ in $\pi\times \pi$
    probability, and $\lim_n \big\{ \log \pi \big[ e^{\gamma |b^n -b |}
    \big] +\log \pi\big[ e^{\gamma |\lambda^n -\lambda|}\big] \big\}=
    0$ for any $\gamma>0$ then the sequence $\{(P^n,\Theta^n)\}$
    converges to $(P,\Theta)$.
  \end{itemize}
\end{theorem}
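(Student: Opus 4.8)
\medskip

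The plan is to follow the standard gradient-flow scheme adapted to the present dissipation-inequality setting, in four stages: (1) existence by approximation and compactness; (2) identification of the limiting current as $\Theta^P$, yielding (i); (3) uniqueness; (4) the continuity statements (ii) and (iii). For existence, I would first solve the equation for nice data — say $Q$ with bounded density $f_0$ and $\mc V$ replaced by a compact set, or $\sigma$ bounded — where \eqref{BE} has a classical solution $f$; set $P_t = f(t)\,\de x\,\pi(\de v)$ and $\Theta = \Theta^P$ as in (i). A direct computation (exactly the formal one preceding \eqref{jj2}, now rigorous because $f$ is smooth and bounded below on compact velocity sets or by a maximum-principle lower bound) shows $\mc I(P,\Theta)=0$, hence \eqref{ineq1-calore2} holds with equality, and \eqref{beq} is just the weak form of \eqref{BE}. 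Then one removes the regularization: given general $Q$ with $\mc H(Q)<\infty$, approximate by $Q^n\to Q$, $\mc H(Q^n)\to\mc H(Q)$ with $Q^n$ nice, obtain solutions $(P^n,\Theta^n)$, and pass to the limit. The a priori bounds are: $\mc H(P^n_t)\le\mc H(Q^n)$ for all $t$ (monotonicity of entropy along the flow, which also gives the $\mc S_\mathrm{be}$ membership), $\int_0^T\mc E(P^n_t)\,\de t\le\mc H(Q^n)$, and $\mc R(P^n,\Theta^n)\le\mc H(Q^n)$. The entropy bound gives tightness of $\{P^n_t\}$ uniformly in $t$; the $\mc R$-bound plus the superlinear growth $\Psi_\sigma(p,q;\xi)\sim|\xi|\log|\xi|$ gives, via a de la Vallée–Poussin / Dunford–Pettis argument, that $\{\Theta^n\}$ is tight in $\mc M_\mathrm a$ and even uniformly integrable, so no mass escapes to $v=\infty$ despite $\mc V$ being only Polish — this is where Assumption~\ref{t:asb}(ii)–(iii) and the basic entropy inequality do their work. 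Equicontinuity of $t\mapsto P^n_t$ in $\mc P(\bb T^d\times\mc V)$ follows from \eqref{beq} together with the $\mc R$-bound (testing against a fixed $\phi$ and controlling the current term). By Arzelà–Ascoli and Prokhorov, pass to a subsequential limit $(P,\Theta)\in\mc S_\mathrm{be}$; the balance equation \eqref{beq} is linear and passes to the limit, and \eqref{ineq1-calore2} survives because $\mc H$, $\mc E$, $\mc R$ are all lower semicontinuous (this is precisely why the variational representations \eqref{varD}, \eqref{varR} were recorded — they exhibit these functionals as suprema of continuous affine functionals) and $\mc H(P_0)=\lim\mc H(Q^n)=\mc H(Q)$.

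\medskip

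For part (i) and uniqueness I would argue simultaneously. Given \emph{any} solution $(P,\Theta)$, combine the inequality \eqref{ineq1-calore2} with the reverse inequality \eqref{jj1t} of Remark~\ref{t:rem} (the trivial bound $\mc I\ge 0$): this forces
\begin{equation*}
  \mc H(P_T) + \int_0^T\!\de t\,\mc E(P_t) + \mc R(P,\Theta) = \mc H(Q),
\end{equation*}
i.e.\ $\mc I(P,\Theta)=0$. Now $\mc I(P,\Theta)=\int\!\de t\,\de x\,\pi(\de v)\pi(\de v')\,\Phi_\sigma(f,f';\eta)$ with $\Phi_\sigma\ge 0$ and, crucially, $\Phi_\sigma(p,q;\xi)=0$ iff $\xi=\sigma(p-q)$; hence $\eta=\eta^f$ a.e., which is exactly $\Theta=\Theta^P$ (so (i) holds for every solution, not just the constructed one), and then the balance equation \eqref{beq} reduces to the weak form of \eqref{BE}. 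Uniqueness then follows from uniqueness for \eqref{BE} in the class of finite-entropy weak solutions — which one obtains from the $L^1$-contraction / linear semigroup structure of \eqref{BE}, or alternatively from projecting onto the density as in the heat-equation discussion (Benamou–Brenier-type reduction plus the uniqueness input analogous to \cite{Gi}). Note that once uniqueness is known, the limit point in the existence argument is independent of the subsequence, so the \emph{whole} sequence converges, which is already most of (ii).

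\medskip

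Parts (ii) and (iii) are then stability arguments of the same flavour. For (ii): given $Q^n\to Q$ with $\mc H(Q^n)\to\mc H(Q)$, the solutions $(P^n,\Theta^n)$ satisfy the same a priori bounds with $\mc H(Q^n)$ on the right, which is bounded; extract a subsequential limit $(P,\Theta)\in\mc S_\mathrm{be}$ as above; lower semicontinuity of $\mc H,\mc E,\mc R$ and $\mc H(P_0)=\lim\mc H(Q^n)=\mc H(Q)$ give that $(P,\Theta)$ is a solution with initial datum $Q$; by uniqueness it is \emph{the} solution, and since every subsequence has a further subsequence converging to the same limit, the full sequence converges. For (iii) the scheme is identical, but one must also pass to the limit in the coefficient-dependent functionals. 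Here the hypotheses $b^n\to b$, $\sigma^n\to\sigma$ in probability together with the exponential-moment convergence $\log\pi[e^{\gamma|b^n-b|}]\to 0$, $\log\pi[e^{\gamma|\lambda^n-\lambda|}]\to 0$ are designed so that: (a) the a priori bounds are uniform in $n$ (the basic entropy inequality with the uniformly controlled exponential moments); (b) $\mc E^n\to\mc E$ and $\mc R^n\to\mc R$ in the sense of $\Gamma$-liminf along the converging $(P^n,\Theta^n)$ — one shows a $\Gamma$-liminf inequality from the variational representations \eqref{varD}, \eqref{varR} (a fixed test function $\phi$ or $(\zeta,\alpha)$ gives an affine functional whose coefficients converge because $\sigma^n\to\sigma$ in probability and the exponential moments converge, so dominated convergence applies), and a matching $\Gamma$-limsup from plugging the optimal test functions for the limit; (c) the balance equation \eqref{beq} with $b^n$ passes to the limit because $b^n\to b$ in $\pi$-probability with uniformly integrable (exponential-moment) bounds, so $P^n_t(b^n\cdot\nabla_x\phi)\to P_t(b\cdot\nabla_x\phi)$.

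\medskip

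The main obstacle I anticipate is the tightness and — more delicately — the \emph{uniform integrability} of the currents $\Theta^n$ in the non-compact velocity variable: one must rule out that portions of $\eta^n$ concentrate at large $|v|$ or large $\sigma$, and this is exactly what the superlinear growth of $\Psi_\sigma$ combined with Assumption~\ref{t:asb}(ii) and the basic entropy inequality are meant to control, but making the Dunford–Pettis argument airtight — with $\sigma$ appearing inside $\Psi_\sigma$ and the bound $\int\!\Psi_\sigma(f^n,(f^n)';\eta^n)\le\mc H(Q^n)$ — requires care. A secondary difficulty is the identification step needing $\mc I(P,\Theta)=0$ to survive the limit: one has lower semicontinuity of $\mc I$ (good) but one must also verify the reverse inequality \eqref{jj1t} holds for the \emph{limit} pair, which is guaranteed since it holds for every element of $\mc S_\mathrm{be}$ by Remark~\ref{t:rem}; so in fact the pincer $\mc H(Q)\ge\cdots\ge\mc H(Q)$ closes automatically, and this is the clean route I would take.
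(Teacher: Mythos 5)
Most of your scheme coincides with the paper's: existence is obtained by first solving \eqref{BE} classically for bounded continuous coefficients and nice densities (taking $\Theta=\Theta^P$), and then reaching general finite-entropy data through the stability statements; precompactness of $\{\Theta^n\}$ follows from the bound $\sup_n\sup_{\|\zeta\|_\infty\le 1}\Theta^n(\zeta)<+\infty$ and Banach--Alaoglu (your Dunford--Pettis/uniform-integrability worry is unnecessary: \eqref{beq} and the variational formulas \eqref{varD}, \eqref{varR} only test against compactly supported or bounded continuous functions, so weak* convergence of the currents suffices, and lower semicontinuity of $\mc H$, $\mc E$, $\mc R$ on entropy sublevel sets is supplied by Lemma~\ref{t:lem}); equicontinuity of $t\mapsto P^n_t$ does require a quantitative step you only sketch, namely the $\varch$-based estimate with the choice $\gamma\sim\log(1/|t-s|)$, reflecting that $\Psi_\sigma$ controls the current only at the $|\xi|\log|\xi|$ scale. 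Your identification of $\Theta=\Theta^P$ by combining \eqref{ineq1-calore2} with \eqref{jj1t} to force $\Phi_\sigma(f,f';\eta)=0$ a.e.\ is a legitimate alternative to the paper's argument (which instead writes $\Theta=\Theta^P+\tilde\Theta$, uses gradient-type test functions $\zeta=z'-z$ in \eqref{varR}, and the strict convexity of $\Theta\mapsto\mc R(P,\Theta)$ together with uniqueness), but note that what you call ``the trivial bound $\mc I\ge 0$'' is in fact the nontrivial chain rule: it rests on the entropy balance proved by truncation in Appendix~\ref{app2}, which you must invoke.

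The genuine gap is uniqueness. You reduce it to uniqueness of finite-entropy weak solutions of \eqref{BE}, invoking ``$L^1$-contraction / linear semigroup structure'' or a Benamou--Brenier-type projection. Neither is available off the shelf in the generality of Assumption~\ref{t:asb}: $b$ and $\lambda$ are unbounded (only exponential moments with respect to $\pi$), the data are measure-valued with finite entropy only, and uniqueness of the forward equation for a jump dynamics with unbounded rates requires its own duality/non-explosion argument --- exactly the PDE input the gradient-flow formulation is meant to bypass; moreover the metric-space route of \cite{Gi} cannot be transplanted literally, since the paper itself points out it is unclear whether $\mc R_0$ is a metric derivative. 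What the paper actually uses is the convexity mechanism behind \cite{Gi}, applied directly to the dissipation inequality: if $(P^1,\Theta^1)$ and $(P^2,\Theta^2)$ are solutions with $P^1_t\neq P^2_t$ for some $t$, the average $(\bar P,\bar\Theta)$ satisfies the inequality \eqref{ineq1-calore2} on $[0,t]$ (via Remark~\ref{t:rem}) with \emph{strict} inequality, by convexity of $\mc E$, $\mc R$ and strict convexity of $\mc H$, contradicting \eqref{jj1t}; strict convexity of $\Theta\mapsto\mc R(P,\Theta)$ then yields $\Theta^1=\Theta^2$. Without this argument (or an honest proof of weak uniqueness for \eqref{BE} in the stated class), your uniqueness step --- and with it the convergence of the full sequences in items (ii) and (iii), which relies on uniqueness of the subsequential limits --- remains unproved.
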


While uniqueness will be  proven by using  the argument in \cite{Gi},
the key ingredient for the continuity result is the following lemma. Its proof, whose details are omitted, is achieved by truncating with continuous and bounded functions and using the basic entropy inequality.

\begin{lemma}
  \label{t:lem}
  Let $\{P^n\}\subset \mc P(\bb T^d\times \mc V)$ be a sequence
  converging to $P$ and satisfying the entropy bound $\sup_n
  \mc H(P^n) < +\infty$. Then  $P^{n}(\phi) \to P(\phi)$ for any function
  $\phi$ having all exponential moments with respect to $\pi$.
  Moreover, if $\phi$ has all exponential moments and 
  $\lim_n \log \pi \big[ e^{\gamma |\phi^n -\phi|} \big] =0$ for any
  $\gamma>0$ then $P^{n}(\phi^n) \to P(\phi)$.
\end{lemma}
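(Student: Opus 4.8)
The plan is to prove Lemma~\ref{t:lem} via a truncation argument combined with the basic entropy inequality $P(\psi) \le \mc H(P) + \log\int \de x\,\pi(\de v)\,e^\psi$, which controls how much mass of $P^n$ can sit where $\phi$ is large when $\sup_n \mc H(P^n) =: C_0 < \infty$. First I would reduce to nonnegative $\phi$ by writing $\phi = \phi_+ - \phi_-$ and treating each part separately (each still has all exponential moments with respect to $\pi$). For the first assertion, fix a truncation level $M>0$ and set $\phi_M := (\phi \wedge M) \vee (-M)$, a bounded continuous function; then $P^n(\phi_M) \to P(\phi_M)$ by weak convergence. The error $|P^n(\phi) - P^n(\phi_M)|$ is bounded by $P^n\big(|\phi|\,\id_{\{|\phi|>M\}}\big)$, and I would estimate this uniformly in $n$: since $|\phi|\,\id_{\{|\phi|>M\}} \le \tfrac{1}{\gamma} e^{\gamma|\phi|}\id_{\{|\phi|>M\}}$ for suitable $\gamma$, or more cleanly, applying the entropy inequality with $\psi = \gamma|\phi|\id_{\{|\phi|>M\}}$ gives $P^n(\gamma|\phi|\id_{\{|\phi|>M\}}) \le C_0 + \log\big(1 + \pi[e^{\gamma|\phi|}\id_{\{|\phi|>M\}}]\big)$, and the last $\pi$-integral tends to $0$ as $M\to\infty$ by dominated convergence since $\pi[e^{\gamma|\phi|}] < \infty$. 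Hence $\sup_n |P^n(\phi) - P^n(\phi_M)| \to 0$ as $M\to\infty$, and the same bound with $C_0$ replaced by $\mc H(P)$ (using Fatou/lower semicontinuity of $\mc H$, so $\mc H(P) \le C_0$) controls $|P(\phi) - P(\phi_M)|$. A standard $3\epsilon$ argument — choose $M$ large to kill both tails, then $n$ large for the bounded part — concludes $P^n(\phi) \to P(\phi)$.

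For the second assertion, I would split $|P^n(\phi^n) - P(\phi)| \le |P^n(\phi^n) - P^n(\phi)| + |P^n(\phi) - P(\phi)|$. The second term $\to 0$ by the first part. For the first term, I would again truncate: write $\phi^n - \phi = (\phi^n - \phi)\id_{\{|\phi^n - \phi| \le \delta\}} + (\phi^n - \phi)\id_{\{|\phi^n-\phi|>\delta\}}$. On the small-deviation set the contribution is at most $\delta$. On the large-deviation set, apply the entropy inequality with $\psi = \gamma|\phi^n-\phi|\id_{\{|\phi^n-\phi|>\delta\}}$: then $P^n(|\phi^n-\phi|\id_{\{|\phi^n-\phi|>\delta\}}) \le \tfrac1\gamma\big(C_0 + \log \pi[e^{\gamma|\phi^n-\phi|}]\big)$ and $\log\pi[e^{\gamma|\phi^n-\phi|}] \to 0$ by hypothesis, so this term is bounded by $\tfrac1\gamma(C_0 + o(1))$; letting $\gamma$ be large (the hypothesis is for \emph{any} $\gamma>0$) and then $\delta \to 0$ makes $\sup_n|P^n(\phi^n)-P^n(\phi)|$ arbitrarily small. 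Note that $\phi^n$ itself need not have finite $P^n$-integral a priori, but the bound $\pi[e^{\gamma|\phi^n|}] \le \pi[e^{\gamma|\phi^n-\phi|} e^{\gamma|\phi|}]^{} \le \pi[e^{2\gamma|\phi^n-\phi|}]^{1/2}\pi[e^{2\gamma|\phi|}]^{1/2}$ via Cauchy-Schwarz shows it is finite for all large $n$, so $P^n(\phi^n)$ is well defined and the manipulations are legitimate.

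The main obstacle, and the only place demanding care, is the uniform-in-$n$ tail control: one must make sure the entropy inequality is applied to a genuinely bounded (or at least $\pi$-integrable-exponentially) test function so that $\log\int e^\psi\,\de x\,\pi(\de v)$ is finite, which is why I truncate \emph{before} invoking it, and one must verify the relevant $\pi$-integrals of truncated exponentials vanish in the limit — this rests on $\pi[e^{\gamma|\phi|}]<\infty$ for all $\gamma$ plus dominated convergence. Everything else is bookkeeping with the $3\epsilon$ splitting. I expect the whole argument to be short, which is consistent with the authors' remark that the details are omitted.
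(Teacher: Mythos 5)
Your overall strategy --- truncate, use weak convergence on the bounded part, and control the tails uniformly in $n$ through the basic entropy inequality --- is precisely the route the paper indicates for this lemma, and your tail estimates and the proof of the second assertion are sound (note only that for fixed $\gamma$ your bound on $\sup_n|P^n(\phi)-P^n(\phi_M)|$ tends to $C_0/\gamma$ as $M\to\infty$, not to $0$, so you must also send $\gamma\to\infty$; this is harmless bookkeeping).

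There is, however, one step that fails as written: you treat $\phi_M=(\phi\wedge M)\vee(-M)$ as a bounded \emph{continuous} function, which presupposes that $\phi$ is continuous. The lemma is stated for any function with all exponential moments, and in the paper it is applied to genuinely Borel functions: $b$ and $\sigma$ are only assumed Borel, and the lemma is invoked for $b\cdot\nabla_x\phi$ when passing to the limit in the balance equation, for the integrand $\int\pi(\de v')\,\sigma(v,v')\big[1-e^{\phi(x,v')-\phi(x,v)}\big]$ in the variational formula for $\mc E$, and, in item (iii) of Theorem 2.7, with $\phi^n$ built from $b^n,\sigma^n$. For a bounded Borel $\psi$, weak convergence of $P^n$ does \emph{not} give $P^n(\psi)\to P(\psi)$, so the appeal to weak convergence for $\phi_M$ must be supplemented. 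Two standard fixes: (a) by Lusin's theorem pick a continuous $\psi$ with $\|\psi\|_\infty\le M$ coinciding with $\phi_M$ off a Borel set $A$ of small reference measure $m(A)$ (with $m=\de x\,\pi(\de v)$), and control $P^n(A)$ uniformly in $n$ by the entropy inequality applied to $\gamma\id_A$, namely $P^n(A)\le \gamma^{-1}\big[\mc H(P^n)+\log\big(1+m(A)e^{\gamma}\big)\big]$ with $\gamma=\log(1/m(A))$; or (b) observe that the entropy bound makes the densities $f^n$ uniformly integrable with respect to $m$, so by Dunford--Pettis (together with the weak convergence, which identifies the limit) $f^n\to f$ weakly in $L^1(m)$, and bounded Borel test functions are then admissible directly. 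With either addition, the remainder of your argument goes through unchanged.
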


In view of the variational definition \eqref{varD}, this lemma readily
implies that the Dirichlet form $\mc E$ is lower semicontinuous on
sublevel sets of the entropy. Analogously, recalling \eqref{varR}, the
kinematic term $\mc R$ is lower semicontinuous on the sets $\big\{
(P,\Theta) \in \mc S_\mathrm{be} \colon \sup_{ t\in [0,T]} \mc H(P_t)
\le \ell \big\}$, $\ell \in \bb R_+$.

\begin{proof}[Proof of Theorem~\ref{t:eu}]
  We start by proving uniqueness, and in particular by showing that if $(P^1, \Theta^1)$ and $(P^2, \Theta^2)$ are solutions then $P^1=P^2$. Assume by contradiction that there exists $t\in (0,T]$ such that $P^1_t\neq P^2_t$ and let $(\bar P, \bar \Theta)=\frac 1 2 \big (P^1, \Theta^1 \big)+\frac 1 2 (P^2, \Theta^2)$. In view of Remark \ref{t:rem}, by the convexity of $\mathcal E$, $\mathcal R$ and the strict convexity of $\mc H$
    \begin{equation*}
      \mc H(\bar P_t)+\int_0^t \!\de s\, \mc E (\bar P_s) + \mc R ^{0,t}(\bar P, \bar \Theta)< \mc H (Q),  
    \end{equation*}  
    which by \eqref{jj1t} provides the desired contradiction.
    Uniqueness is now concluded  by observing that, for a given $P\in C_{\mathrm{be}}\big([0,T]; \mc P(\bb T^d\times\mc V)\big)$, the map $\Theta\mapsto \mc R(P, \Theta)$ is strictly convex, so that we can repeat the argument above with $P^1=P^2=P$ and deduce $\Theta^1=\Theta^2$.

    Postponing the proof of the existence, we show item {(i)}. We write $\Theta=\Theta^P+\tilde{\Theta}$ and we observe that, in view of the balance equation \eqref{beq},  $\tilde\Theta(\zeta)=0$ if  $\zeta(t,x,v,v')=z(t,x,v')-z(t,x,v)$ for some function  $z$. By choosing in the variational formula \eqref{varR} $\zeta=z'-z$, where $z=z(t,x,v)$ and $z'=z(t,x,v')$, we get
    \begin{equation*}
       \begin{split}
    \mc R (P,\Theta^P +\tilde\Theta)
    \geq &  \sup_{z,\alpha} \bigg\{
    \Theta^P (z'-z) - 
    \int_0^T\!\de t \iiint \! P_t(\de x,\de v)\pi(\de v') \, 
    \sigma(v,v') 
    \\ & \qquad \qquad
    \times \big[ \varch (z'-z) -1 \big] \big[ \alpha(t,x,v,v') +
    \alpha(t,x,v',v)^{-1} \big]
    \bigg\}\\
    = & \mc R (P,\Theta^P),
\end{split}
    \end{equation*}  
  where the last equality follows by direct computation. We conclude by uniqueness.

  We next prove item {(ii)}.  Remark~\ref{t:rem} implies
  that
  \begin{equation}
    \label{apb}
    \varlimsup_{n\to \infty} \, \sup_{t\in [0,T]} \, 
    \mc H(P_t^n) \le \mc H(Q).  
  \end{equation}
  In view of the lower semicontinuity of $\mc H$ and the observation
  after Lemma~\ref{t:lem} regarding $\mc E$ and $\mc R$, using the uniqueness
  it is enough to show precompactness of the sequence
  $\{(P^n,\Theta^n)\}\subset \mc S_\mathrm{be}$. 
  Observe indeed that, by Assumption~\ref{t:asb} and Lemma~\ref{t:lem},
  we can take the limit $n\to \infty$ in the balance equation
  \eqref{beq}.

  To prove precompactness of $\{\Theta^n\}$, observe that from
  \eqref{ineq1-calore2} and the variational representation
  \eqref{varR} it follows
  \begin{equation*}
    \sup_n\, \sup_{\zeta\colon \|\zeta\|_\infty\le 1} 
    \Theta^n(\zeta) < +\infty
  \end{equation*}
  and we conclude by the Banach-Alaoglu theorem. 

  The bound \eqref{apb} implies, by the coercive properties of the 
  relative entropy and Prohorov theorem, that there exist a compact $\mc K
  \subset\subset \mc P(\bb T^d\times \mc V)$ such that $P^n_t\in \mc
  K$ for any $n$ and $t\in[0,T]$. Hence,
  by Ascoli-Arzel\`a theorem, to prove the
  precompactness of $\{P^n\}$ it is enough to show that for each
  continuous $g\colon \bb T^d\times \mc V\to \bb R$ with compact support and continuously
  differentiable with respect to $x$ we have
  \begin{equation}
   \label{equic} 
    \lim_{\delta\downarrow 0} \, \sup_n\, \sup_{|t-s|<\delta} 
    \big| P^n_t(g) - P^n_s(g) \big| =0. 
  \end{equation}
  From the balance equation \eqref{beq} we deduce 
  \begin{equation*}
    \begin{split}
    & P^n_t(g) - P^n_s(g) = 
    - \int_s^t\!\de \tau\, P^n_\tau\big( b \cdot \nabla_x g \big) 
    \\
    &\qquad \qquad 
    -\frac 12 \int_{[s,t]\times \bb T^d\times \mc V\times \mc V} 
    \Theta^n(\de \tau, \de x, \de v, \de v')
    \big[ g(x,v) - g(x,v')\big].
    \end{split}
  \end{equation*}
  By Assumption~\ref{t:asb}, \eqref{apb}, and the basic entropy
  inequality, the first term on the right hand side vanishes as
  $|t-s|\to 0$ uniformly in $n$.
  On the other hand, by choosing $\alpha=1$ in the variational 
  representation \eqref{varR},
  \begin{equation*}
    \begin{split}
     & \Big| \int_{[s,t]\times \bb T^d\times \mc V\times \mc V} 
    \Theta^n(\de \tau, \de x, \de v, \de v')
    \big[ g(x,v) - g(x,v')\big] \Big| 
    \le  \mc R (P^n,\Theta^n)
    \\ 
    & + \quad 
    2 \int_s^t\!\de \tau \iiint \! P_\tau(\de x,\de v)\pi(\de v') \, 
    \sigma(v,v') 
    \big\{ \varch \big[ g(x,v) - g(x,v')  \big] -1 \big\}. 
  \end{split}
  \end{equation*}
  Replacing $g$ by $\gamma g$ with $\gamma>0$, using \eqref{apb},
  Assumption~\ref{t:asb}, the
  basic entropy inequality, and $\mc R (P^n,\Theta^n) \le \mc H(Q^n)$,
  we obtain that there exists a constant $C$ independent on $n$, $t,s$
  such that 
  \begin{equation*}
    \begin{split}
     & \Big| \int_{[s,t]\times \bb T^d\times \mc V\times \mc V} 
    \Theta^n(\de \tau, \de x, \de v, \de v')
    \big[ g(x,v) - g(x,v')\big] \Big| 
    \\ & \qquad 
    \le \frac 1\gamma \, \sup_n \mc H(Q^n)  
    + \frac C\gamma \, |t-s| \, \exp\{ 2 \gamma \| g\|_\infty\}. 
  \end{split}
  \end{equation*}
  By choosing $\gamma = (2 \|g\|_\infty)^{-1} \log
  (1/|t-s|)$ when $|t-s|\le 1$ the bound \eqref{equic} follows.

  In view of the second statement in Lemma~\ref{t:lem}, the proof of 
  item (iii) is achieved by the same arguments.
  
  We finally prove the existence result. Consider first the case in which
  $b$ and $\sigma$ are continuous and bounded. 
  If $Q(\de x, \de v) =f_0(x,v) \de x \pi(\de v)$ for
  some continuous density $f_0$ uniformly bounded away from zero, by
  classical results, the linear Boltzmann equation \eqref{BE} has a
  continuous solution $f(t,x,v)$ uniformly bounded away from zero. Set
  $P_t(\de x, \de v) := f(t,x,v) \de x \pi(\de v)$ and $\Theta(\de t,
  \de x, \de v, \de v') := \de t \de x \pi(\de v) \pi (\de v') \sigma
  (v,v') \big[ f(t,x,v)- f(t,x,v')\big]$.  It is then straightforward
  to justify the informal computations presented before and deduce
  that $(P, \Theta)$ solves the linear Boltzmann equation according to
  Definition~\ref{t:dlbe}.  Existence in the general case of $Q$
  satisfying the relative entropy bound $\mc H (Q) <+\infty$ and for
  coefficients $b$ and $\sigma$ satisfying Assumption~\ref{t:asb} is then
  achieved by items {(ii)} and {(iii)}.
\end{proof}

\section{Diffusive limit}

In this Section we discuss the asymptotic behavior of linear Boltzmann
equation, showing that in the  diffusive scaling limit  
the marginal distribution of the position evolves according to the
heat flow. This is a classical topic and has been much investigated in
the literature, see \cite{LK, BLP, BSS, EP} and \cite{DGP} for a more general setting.
We observe that this issue has
natural counterpart in probabilistic terms namely, the central limit
for additive functional of Markov chains. Indeed, the linear Boltzmann
equation \eqref{BE} is the Fokker-Planck equation for the Markov
process $(V_t,X_t)$ where $V_t$ is the continuous time Markov chain on
$\mc V$ with transition rates $\sigma(v',v)\pi(\de v')$ while $X_t$ is
the $\bb R^d$-valued  additive functional $X_t=\int_0^t\!\de s\,
b(V_s)$. We refer to \cite{KLO} for a recent monograph on this
topic. 

The gradient flow formulation of linear Boltzmann equations discussed
before allows a novel approach to the analysis of the diffusive limit.
According to a general scheme formalized in \cite{SaSe, Se}, a gradient flow formulation is particularly handy
for analyzing asymptotic evolutions and it does not require a direct analysis of the
dynamics.
Indeed, by comparing
Definition~\ref{t:dlbe} and Definition~\ref{he} we realize 
that the balance equation \eqref{beq} immediately leads to the
continuity equation \eqref{eqcont}. Moreover, taking into account
the convexity and lower semicontinuity of the entropy, in order to
establish the diffusive limit we only need to prove two limiting
variational inequalities comparing the Dirichlet form and the kinematic term
for the linear Boltzmann equation with the corresponding ones for the
heat flow. We shall prove these variational inequalities but, maybe
surprisingly, the two terms exchange their role in the diffusive
limit: the Dirichlet form $\mc E$ leads to  $R$ while
the kinematic term $\mc R$ leads to the Fisher information $E$.

To carry out the analysis of the diffusive limit of linear Boltzmann
equations a few extra conditions, implying in particular homogenization of the velocity, are needed.
As we show in the next section, this assumptions are satisfied for few natural models.
To this end, recalling that the scattering rate $\lambda$ is defined by $\lambda(v)=\int\pi(\de v')\sigma(v',v)$,
let $\tilde\pi$ be the probability on $\mc V$ defined by
\begin{equation}
  \label{tpi}
\tilde \pi(\de v) :=\frac{\lambda(v)}{\pi(\lambda)}\pi(\de v).
\end{equation}
\begin{assumption}$\phantom{i}$\label{assumpt1}
  \begin{itemize}  
 \item[(i)] The drift $b:\mc V\to\mathbb R^d$ is centered with respect to the
   measure $\pi$, namely 
   %, $b\in(L^1(d\pi);\bb R^d)$ and
   $\pi(b)=0$. 
   \item[(ii)] The scattering rate $\lambda$ satisfies $\pi[\lambda=0]=0$.
   \item[(iii)]$|b|^2/\lambda$ has all exponential moments, i.e. 
   $\pi[\exp \{\gamma |b|^2/\lambda\}]< +\infty$ for any $\gamma>0$.
   \item[(iv)] There exists a constant $C_0>0$
such that for any $g\in L^2(\tilde \pi)$
\begin{equation}
\label{pin}
  \int \de \tilde\pi \big[ g-\tilde\pi(g)\big]^2\leq C_0\iint\tilde\pi(\de v)\tilde\pi(\de v')\frac {\sigma(v,v')}
  {\lambda(v)\lambda(v')}\big[g(v)-g(v') \big]^2. 
\end{equation}
\end{itemize}
 \end{assumption}
 We remark that, as in the case of phonon Boltzmann equation, in  item (ii) we allow the case in which   $\lambda(v)=0$ for some $v\in \mc V$. 
Item (iv) corresponds to the assumption that  
the continuous time Markov chain with transition rates $\frac
{\sigma(v,v')} {\lambda(v)\lambda(v')}\tilde\pi(\de v')$ has spectral gap.
 The generator of this Markov chain  is $(K-\id)$, where
$K$ is given by
\begin{equation}\label{def:K}
\big(K g\big)(v)=\pi(\lambda)\int  \tilde\pi(\de v')\frac{\sigma(v,v')}{ \lambda(v)\lambda(v')}g(v').
\end{equation}  
Observe that
\begin{equation}
- \big(\mc L f\big)(v)=\lambda(v) \big[\big(\id-K\big) f\big](v),
\end{equation}
where $\mc L$ is the generator of the original Markov chain as defined in \eqref{def:L}. We emphasize that we do not assume the spectral gap of the generator $\mc L$, in fact 
the linear phonon Boltzmann equation, that will be discussed in the next section, meets the requirements in Assumption \ref{assumpt1} but its generator has no spectral gap.

Assumption \ref{assumpt1} implies that there exists $\xi\in L^2(\tilde \pi; \mathbb R^d)$ such that $-\mathcal L \xi =b$. 
Indeed, item (i) implies that $b/\lambda$ is centered with respect to $\tilde\pi$, item (ii) implies that $b/\lambda\in L^2(\tilde \pi; \mathbb R^d)$,
and finally item (iii) implies that $\xi:=(\id -K)^{-1}(b/\lambda)\in L^2(\tilde \pi; \mathbb R^d)$.

We need another technical condition on which we will rely to carry out a truncation on $\xi$.
\begin{assumption}\label{assumpt3}
One of the following alternatives holds
\begin{itemize}
 \item[(i)]$(-\mc L)^{-1}b$ is bounded, or
\item[(ii)] there exists $C<\infty$ such that $\big\|\big(\id -K\big)^{-1}f\big\|_{\infty}\leq C\|f\|_{\infty}$, for any $f$ such that $\tilde\pi(f)=0$.
\end{itemize}
\end{assumption}
Observe that if the map $(v,v')\mapsto\frac{\sigma(v,v')}{\lambda(v)\lambda(v')}$ is continuous and bounded then alternative (ii) holds.

For notation convenience, in this section we formulate the linear Boltzmann equation in terms of the density $(f,\eta)$,
where $\de P_t=f(t)\de x\pi(\de v)$ and $\de \Theta= \eta \de t\de x\pi(\de v)\pi (\de v')$. Indeed, the boundedness of the entropy implies the existence of $f$, while the boundedness of the metric term $\mc R$ implies the existence of $\eta$.

Let $\ve>0$ be a scaling parameter and consider the linear Boltzmann equation on a torus of linear size $\ve^{-1}$, equipped with the uniform probability distribution, on the time interval $[0, \ve^{-2}T]$. Under a diffusive rescaling of space and time, the rescaled solution 
  $(f^\ve, \eta^\ve)$ is defined on the torus of linear size one on the time interval $[0,T]$. It solves
\begin{eqnarray}
\label{constr}
&&\partial_t f^\ve(t,x,v)+\frac 1 \ve b(v)\cdot \nabla_x  f^\ve(t,x,v)+\frac 1 {\ve^2}\int\pi(\de v')
  \eta^\ve(t,x,v,v')=0\qquad\\
\label{ineq}
&&\mathcal H(f^\ve(T))+\frac 1 {\ve^2}\int_0^T \de t\, \mathcal E(f^\ve(t))+\frac 1{\ve^2} \mathcal R (f^\ve,\eta^\ve)\leq \mathcal H(f^\ve_0).\qquad
\end{eqnarray}

%%%%%%%%%%%%%%%%%%%%%%%%%%%%%%%%%%%%%%%%%%%%%%%%%%%%%%%%%%%%%%%%%%%%%%%%%%%%%%%%%%%%%%%%%%%%%%%%%%%%%%%%%%%%%%%%%%%%%%%%%

We set 
\begin{equation}
\label{drJ}
\begin{split}
 & \rho^\ve(t,x):=\int\pi (\de v) f^\ve(t,x,v)   \\
 & j^\ve(t,x):=\frac 1 \ve \int\pi(\de v)  f^\ve(t,x,v)b(v).
\end{split}
\end{equation}
Since $\eta^\epsilon(t,x,v,v')$ is antisymmetric with respect to the
exchange of $v$ and $v'$, by integrating \eqref{constr} with respect to
$\pi(dv)$ we deduce the continuity equation 
\begin{equation}
  \label{cont}
  \partial_t \rho^\epsilon + \nabla\cdot j^\epsilon = 0.
\end{equation}

\begin{theorem}
  \label{teo:3}
  Assume that $\rho^\ve_0\to\rho_0$ in $\mathcal P(\mathbb T^d)$ and
  $\lim_{\ve\to 0} \mathcal H(f^\ve_0)= H(\rho_0)$.  Then the sequence
  $(\rho^\ve, j^\ve)$ converges in $C([0,T]; \mathcal P(\mathbb
  T^d))\times \mathcal M ([0,T]\times \bb T^d; \mathbb R^d)$ to the
  solution to the heat equation as in Definition \ref{he}, with initial datum
  $\rho_0$ and diffusion coefficient
  \begin{equation}\label{D}
   D=\pi \big(b \otimes(-\mc L)^{-1}b\big).
  \end{equation}
\end{theorem}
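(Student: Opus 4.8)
The plan is to follow the general strategy for $\Gamma$-type convergence of gradient flows formalized in \cite{SaSe, Se}: pass to the limit in the rescaled entropy dissipation inequality \eqref{ineq} and recover the inequality \eqref{ineq1-calore} characterizing the heat equation, using Proposition \ref{t:uniq} to identify the limit. The hypothesis $\lim_\ve \mc H(f^\ve_0) = H(\rho_0)$ together with the a priori bound \eqref{ineq} gives $\sup_\ve \mc H(f^\ve(t)) \le \mc H(f^\ve_0) < \infty$ uniformly, so the relative entropy is bounded along the whole family of paths; by the coercivity of $\mc H$ and Prohorov this yields tightness of $\{P^\ve_t\}$, hence of $\{\rho^\ve_t\}$, and the balance equation \eqref{constr} gives the equicontinuity needed for Ascoli--Arzel\`a, exactly as in the proof of Theorem \ref{t:eu}(ii). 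For the currents, the bound \eqref{ineq} forces $\frac 1{\ve^2}\mc R(f^\ve,\eta^\ve) \le \mc H(f^\ve_0)$, and combining the variational representation \eqref{varR} (with $\alpha\equiv 1$) with the exponential moment Assumption \ref{t:asb} and the basic entropy inequality one controls $\sup_\ve\sup_{\|\zeta\|_\infty\le1}\frac 1{\ve}\Theta^\ve(\zeta)$, giving weak-$*$ precompactness of $\{j^\ve\}$ by Banach--Alaoglu. The continuity equation \eqref{cont} passes to the limit trivially.

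The heart of the matter is the two variational inequalities. First, one must show that along the converging subsequence
\begin{equation*}
  \varliminf_{\ve\to 0} \frac 1{\ve^2}\mc R(f^\ve,\eta^\ve) \ge R(\rho,j),
\end{equation*}
where $R$ is the heat-equation kinematic term with $D$ as in \eqref{D}. The mechanism, as the paper announces, is that $\mc R$ produces $R$ in the limit: using the quadratic small-$\xi$ behaviour $\Psi_\sigma(p,q;\xi)\sim \xi^2/(4\sigma\sqrt{pq})$, the rescaled $\frac 1{\ve^2}\mc R$ behaves like a quadratic form in $\eta^\ve/\ve$, and the key is to extract from $\eta^\ve/\ve$ the macroscopic current $j^\ve$. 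This is where $\xi = (-\mc L)^{-1}b \in L^2(\tilde\pi;\mathbb R^d)$ enters: testing the variational formula \eqref{varR} with $\zeta$ built from $\xi$ (suitably truncated, which is the role of Assumption \ref{assumpt3}), multiplied against a smooth test vector field $w(t,x)$, one gets on one side the pairing with $\Theta^\ve$ which, via the balance equation and the definition of $j^\ve$ in \eqref{drJ}, reproduces $\int_0^T\!\de t\,\int\!\de x\, j^\ve\cdot w$; on the other side the cosh-term contributes, after rescaling and using the entropy bound plus Lemma \ref{t:lem}-type arguments, $\frac 12\int_0^T\!\de t\int\!\de x\,\rho_t\, w\cdot D w$. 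Optimizing over $w$ yields exactly the Legendre representation of $R(\rho,j)$. Second, and symmetrically, one needs
\begin{equation*}
  \varliminf_{\ve\to 0}\frac 1{\ve^2}\int_0^T\!\de t\,\mc E(f^\ve(t)) \ge \int_0^T\!\de t\, E(\rho_t),
\end{equation*}
with $E$ the Fisher information for the same $D$. Here the Dirichlet form $\mc E$ produces $E$: writing $g^\ve = \sqrt{f^\ve}$ and using the variational formula \eqref{varD} with $\phi$ of the form $\ve\,\varphi(x) + $ a velocity correction chosen along the eigenfunction $\xi$ of $(\id - K)$, the spectral gap in Assumption \ref{assumpt1}(iv) controls the velocity fluctuations of $g^\ve$ at order $\ve$, and a two-scale / oscillation argument shows that the leading contribution comes from the spatial gradient of the $v$-average, producing $2\int\nabla\sqrt{\rho}\cdot D\nabla\sqrt{\rho}$. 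The Poincar\'e inequality \eqref{pin} is precisely what is needed to show the part of $\mc E$ not captured by this projection is negligible at the diffusive scale, and the decomposition $-\mc L = \lambda(\id - K)$ is what makes $D = \pi(b\otimes \xi)$ emerge with the correct symmetric form.

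Once both liminf inequalities are in hand, the conclusion is immediate: since $\mc H(f^\ve(T)) \ge H(\rho_T)$ by lower semicontinuity of the entropy under the marginal projection and the convergence $\rho^\ve_T\to\rho_T$, and $\mc H(f^\ve_0)\to H(\rho_0)$ by hypothesis, passing to the liminf in \eqref{ineq} gives
\begin{equation*}
  H(\rho_T) + \int_0^T\!\de t\, E(\rho_t) + R(\rho,j) \le H(\rho_0),
\end{equation*}
which together with the continuity equation is precisely Definition \ref{he}. Proposition \ref{t:uniq} then identifies $(\rho,j)$ as the unique solution of the heat equation with initial datum $\rho_0$ and diffusion matrix $D$, and since every subsequence has a further subsequence converging to this same limit, the whole family converges. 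The main obstacle I expect is the lower-bound matching in the two variational inequalities --- in particular making rigorous the truncation of the unbounded corrector $\xi$ (where Assumption \ref{assumpt3} is used) while keeping the pairing with $\Theta^\ve$ and the cosh-terms under control via the entropy bound, and correctly matching constants so that the \emph{same} matrix $D = \pi(b\otimes(-\mc L)^{-1}b)$ appears in both limits with its manifestly symmetric positive form; the exchange of roles between $\mc E$ and $\mc R$ is conceptually the delicate point, since it means neither inequality is a naive lower-semicontinuity statement but genuinely uses the homogenization structure encoded in Assumption \ref{assumpt1}.
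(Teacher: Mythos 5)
Your global skeleton (precompactness, liminf inequalities in the rescaled entropy dissipation inequality \eqref{ineq}, identification of the limit through Proposition \ref{t:uniq} and a subsequence argument) is indeed the paper's strategy, but the core of your argument assigns the two liminf inequalities the wrong way round, and the mechanisms you sketch would not produce them. The paper proves --- and explicitly announces --- the \emph{exchange} of roles: the Dirichlet form yields the kinematic term, $\varliminf_\ve \frac1{\ve^2}\int_0^T\!\de t\,\mc E(f^\ve(t))\ge R(\rho,j)$ (Lemma \ref{teo:3.2}), while the kinematic term yields the Fisher information, $\varliminf_\ve \frac1{\ve^2}\mc R(f^\ve,\eta^\ve)\ge \int_0^T\!\de t\,E(\rho(t))$ (Lemma \ref{teo:3.3}); you claim the opposite pairing and even attribute it to the paper. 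The obstruction to your pairing is structural. In the variational formula \eqref{varD} the test function enters only through the difference $\phi(x,v')-\phi(x,v)$ at the same point $x$, so no choice of $\phi$ can generate $\nabla_x\rho$: the Fisher information cannot be extracted from $\mc E$ by the two-scale argument you sketch, and the spectral gap \eqref{pin} only bounds the size of the velocity fluctuations, saying nothing about their alignment with $\nabla\rho$. What $\mc E$ does produce is the current: testing \eqref{varD} with $\log\big(1+\ve\, w(t,x)\cdot\omega_n(v)\big)$, with $\omega_n$ the truncated corrector, the linear term is $\frac1\ve\int f^\ve\, w\cdot(-\mc L)\omega_n\approx \int j^\ve\cdot w$ by \eqref{drJ}, the quadratic term gives $\frac12\int\rho\, w\cdot D w$, and optimizing via \eqref{varR-calore} yields $R(\rho,j)$. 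Symmetrically, $j^\ve$ is not a marginal of $\eta^\ve$, so ``extracting the macroscopic current from $\eta^\ve/\ve$'' as you describe does not go through: plugging a corrector-built $\zeta$ into \eqref{varR} and using the balance equation \eqref{constr} converts $\Theta^\ve(\zeta)$ into terms involving $\partial_t$ and $b\cdot\nabla_x$ acting on the test function, i.e.\ exactly the spatial-derivative terms which, after optimizing over $\phi$ in \eqref{varD-calore}, produce $E(\rho)$ --- not $R(\rho,j)$.

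A second, smaller but genuine, gap is the compactness of the currents. A bound on $\sup_{\|\zeta\|_\infty\le 1}\frac1\ve\Theta^\ve(\zeta)$ controls $\Theta^\ve$, but $j^\ve$ is defined through $f^\ve b$ in \eqref{drJ} and is not a functional of $\Theta^\ve$, so Banach--Alaoglu cannot be invoked this way. The paper instead proves the estimate $|J^\ve_{s,t}(w)|\le C\sqrt{t-s}\,\|w\|_\infty$ of \eqref{bonJ} (Lemma \ref{teo:3.0}) by means of the decomposition $\sqrt{f^\ve}=\bar u_\ve+\tilde u_\ve$ in \eqref{def:u} and the fluctuation bounds of Lemma \ref{t:l4}, which come from the entropy bound and from the Poincar\'e inequality \eqref{pin} applied to the Dirichlet-form term of \eqref{ineq}; the same estimate gives the equicontinuity of $\rho^\ve$ at the diffusive scale, where the factors $1/\ve$, $1/\ve^2$ prevent a verbatim repetition of the argument in Theorem \ref{t:eu}(ii). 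These fluctuation bounds, together with the truncation $\omega_n$ of the corrector (Assumption \ref{assumpt3}) and the convergence steps \eqref{3.2}--\eqref{3.5} identifying the matrix $D$ of \eqref{D}, are the quantitative backbone missing from your proposal.
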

%In order to prove the Theorem we will prove a compactness result (Lemma \ref{teo:3.0}) and pass to the limit in inequality \eqref{ineq}.
Note that, by  Assumption \ref{assumpt1}, $b/\lambda$ and $ \xi=(-\mc L)^{-1} b$ are in $ L^2(\tilde\pi; \mathbb R^d)$, hence the diffusion coefficient 
$D=\pi(\lambda)\,\tilde\pi\big((b/\lambda)\otimes\xi\big)$ is finite. 

The proof of this theorem will be achieved according to  the following strategy. We first show precompactness of the sequence $\{(\rho^\ve, j^\ve)\}$, 
we then consider a converging subsequence $(\rho^\ve, j^\ve)\to (\rho, j)$ and take the inferior limit in the inequality \eqref{ineq}. By the hypothesis of the theorem, 
$\mc H(f_0^\ve)\to H(\rho_0)$ and we  prove that the inferior limit of the left hand side of \eqref{ineq} majorizes the left hand side of \eqref{ineq1-calore}.
The statement follows by the uniqueness in Proposition \ref{t:uniq}. 
We introduce the following notations. 
If $(f^\ve,\eta^\ve)$ satisfy \eqref{constr}, \eqref{ineq},
recalling \eqref{tpi}, we set
\begin{equation}\label{def:u}\begin{split}
 & f^\ve(t,x,v)=u_\ve^2(t,x,v),\quad u_\ve(t,x,v)=\bar u_\ve(t,x) +
 \tilde u_\ve(t,x,v),\\
 & \bar u_\ve(t,x)=\tilde\pi (u_\ve(t,x,\cdot)).
\end{split}\end{equation}
We will use the following bounds that hold uniformly for $t\in [0,T]$.
By Cauchy-Schwarz inequality
\begin{equation}\label{ineq4}
  \int\!\de x\, \bar u_\epsilon^2(t) = \frac 1{\pi(\lambda)^2} 
  \int\!\de x
  \bigg(\int\!d\pi\, \lambda  \sqrt{f^\epsilon(t)}\bigg)^2 
  \le \frac {\pi(\lambda^2 )}{\pi(\lambda)^2}. 
\end{equation}
Moreover, by the basic
entropy inequality, for each $\gamma>0$
\begin{equation}\label{ineq1}
  \begin{split}
    \int \!\de x \!\int\! \de \pi  f^\ve(t) 
    \frac {|b|^2} {\lambda}
    &\le
    \frac 1\gamma \mc H (f^\epsilon(t)) + \frac 1\gamma \log \pi\Big(
    \exp\Big\{ \gamma  \frac{|b|^2}\lambda \Big\}\Big)
    \\
    &\le \frac 1\gamma \mc H (f^\epsilon_0) + \frac 1\gamma \log \pi\Big(
    \exp\Big\{ \gamma  \frac{|b|^2}\lambda \Big\}\Big).
   \end{split}
\end{equation}

\begin{lemma}
  \label{t:l4}
  There exists a constant $C$  such that for any   $\ve\in (0,1)$
  \begin{eqnarray}\label{ineq2}
 &&  \displaystyle\int \de x\int \de \pi
\,\tilde u_\ve(t)^2 \frac{|b|^2}{\lambda} < C,\qquad t\in[0,T],
  \\
  \label{ineq3}
&&\frac 1 {\ve^2}\int_0^T \de t\int dx \int \de \tilde\pi \tilde u_\ve(t)^2 <C .
  \end{eqnarray}
  \end{lemma}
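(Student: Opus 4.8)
The plan is to extract both bounds from the dissipation inequality \eqref{ineq}, exploiting the spectral gap Assumption~\ref{assumpt1}(iv) and the exponential moment bounds \eqref{ineq4}, \eqref{ineq1}. The starting point is to rewrite the Dirichlet form $\mc E(f^\ve(t))$ in terms of $u_\ve$. Since $f^\ve = u_\ve^2$, we have
\[
  \mc E(f^\ve(t)) = \int\!\de x\iint\pi(\de v)\pi(\de v')\,\sigma(v,v')\big[u_\ve(t,x,v') - u_\ve(t,x,v)\big]^2,
\]
and I would change variables to $\tilde\pi$: writing $\pi(\de v) = \tfrac{\pi(\lambda)}{\lambda(v)}\tilde\pi(\de v)$ gives $\mc E(f^\ve(t)) = \pi(\lambda)^2\int\!\de x\iint\tilde\pi(\de v)\tilde\pi(\de v')\,\tfrac{\sigma(v,v')}{\lambda(v)\lambda(v')}[u_\ve(t,x,v')-u_\ve(t,x,v)]^2$. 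By the spectral gap inequality \eqref{pin} applied to $g = u_\ve(t,x,\cdot)$ for each fixed $x$ (noting $\tilde\pi(u_\ve(t,x,\cdot)) = \bar u_\ve(t,x)$ so that $u_\ve - \bar u_\ve = \tilde u_\ve$), this is bounded below by a constant multiple of $\int\!\de x\int\tilde\pi(\de v)\,\tilde u_\ve(t,x,v)^2$. Plugging this into \eqref{ineq} and using $\mc H(f^\ve(T))\ge 0$, $\mc R\ge 0$, and $\sup_\ve \mc H(f^\ve_0) < \infty$ (which holds since $\mc H(f^\ve_0)\to H(\rho_0)<\infty$ under the hypotheses of Theorem~\ref{teo:3}), we obtain
\[
  \frac{1}{\ve^2}\int_0^T\!\de t\int\!\de x\int\tilde\pi(\de v)\,\tilde u_\ve(t)^2 \;\le\; C\Big(\sup_\ve \mc H(f^\ve_0) + 1\Big),
\]
which is precisely \eqref{ineq3}.

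For \eqref{ineq2}, the idea is to compare the weight $|b|^2/\lambda$ against the flat weight controlled by \eqref{ineq3}, but since \eqref{ineq3} is only integrated in time while \eqref{ineq2} is pointwise in $t$, I instead argue directly. I would use the decomposition $u_\ve = \bar u_\ve + \tilde u_\ve$ and the elementary inequality $\tilde u_\ve^2 \le 2 u_\ve^2 + 2\bar u_\ve^2 = 2 f^\ve + 2\bar u_\ve^2$, so that
\[
  \int\!\de x\int\!\de\pi\,\tilde u_\ve(t)^2\frac{|b|^2}{\lambda}
  \;\le\; 2\int\!\de x\int\!\de\pi\, f^\ve(t)\frac{|b|^2}{\lambda}
  \;+\; 2\int\!\de x\,\bar u_\ve(t,x)^2\int\!\de\pi\,\frac{|b|^2}{\lambda}.
\]
The first term is bounded uniformly in $t$ and $\ve$ by \eqref{ineq1} (choosing e.g. $\gamma = 1$ and using the uniform bound on $\mc H(f^\ve_0)$ together with Assumption~\ref{assumpt1}(iii)). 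For the second term, $\int\!\de\pi\,|b|^2/\lambda < \infty$ again by Assumption~\ref{assumpt1}(iii), while $\int\!\de x\,\bar u_\ve(t,x)^2 \le \pi(\lambda^2)/\pi(\lambda)^2$ by \eqref{ineq4}; hence the second term is also bounded uniformly. This gives \eqref{ineq2}.

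The main obstacle is the correct bookkeeping in the change of variables from $\pi$ to $\tilde\pi$ in the Dirichlet form and the verification that the constant $C_0$ from the spectral gap survives the $x$-integration — since \eqref{pin} holds for each fixed $x$ with the same constant, Fubini gives the $x$-integrated version directly, so this is routine once set up carefully. A secondary subtlety is that $\tilde u_\ve$ need not have a sign and $\bar u_\ve$ depends on $x$, so one must be careful that $\bar u_\ve(t,x) = \tilde\pi(u_\ve(t,x,\cdot))$ is genuinely the $\tilde\pi$-average appearing in \eqref{pin}; this is exactly the definition \eqref{def:u}, so no issue arises. One should also note that \eqref{ineq4} as stated controls $\int\!\de x\,\bar u_\ve^2$ by $\pi(\lambda^2)/\pi(\lambda)^2$ via Cauchy–Schwarz and the normalization $\int\!\de\pi\, f^\ve(t) = \rho^\ve(t,x)$ integrating to $1$, which is all that is needed above.
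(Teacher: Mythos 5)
Your proposal is correct and follows essentially the same route as the paper: \eqref{ineq3} comes from applying the Poincar\'e inequality \eqref{pin} fiberwise in $x$ to $u_\ve(t,x,\cdot)$ and bounding the resulting Dirichlet form through the dissipation inequality \eqref{ineq}, while \eqref{ineq2} comes from $\tilde u_\ve^2\le 2f^\ve+2\bar u_\ve^2$ together with the entropy inequality \eqref{ineq1} and the bound \eqref{ineq4}. The bookkeeping in the change of measure between $\pi$ and $\tilde\pi$ is the only difference in presentation and is handled correctly.
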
  
  
 \begin{proof}
In order to prove \eqref{ineq2}, 
since $\tilde u_\epsilon =\sqrt{f^\epsilon} -\bar
u_\epsilon$, for each $t\in [0,T]$
\begin{equation*}
%\label{1.4}
  \begin{split}
    %& 
    \int\!\de x\!\int\!\de \tilde\pi\, \tilde
    u_\epsilon^2(t) \frac {|b|^2}{\lambda^2} 
    %\\&\qquad 
    \le  \frac 2{\pi(\lambda)} 
    \int\de x \!\int\de  \pi\,  
    f^\ve(t) \frac{|b|^2}{\lambda}
    + \frac 2{\pi(\lambda)} 
    \int\de x\,  \bar u_\epsilon^2(t)
    \int\de \pi\,   \frac {|b|^2} {\lambda}.
 \end{split}
\end{equation*}
The first term on the right hand side 
is bounded  by \eqref{ineq1}, while
the second term is bounded, since 
$b^2/\lambda$ has finite exponential moments, by \eqref{ineq4}.

Regarding \eqref{ineq3},
by the Poincar\'e inequality \eqref{pin},
\begin{equation*}
  \begin{split}
    & \frac 1 {\epsilon^2} \int_0^T\!\de t\!\int\!\de x\!\int\!\de \tilde\pi\,
    \tilde u_\epsilon^2(t) 
    \\
    &\qquad \le \frac {C_0}{\epsilon^2 }\int_0^T
    \!\de t\!\int\de x \!\iint\tilde\pi(\de v)\tilde\pi(\de v') \frac {\sigma(v,v')}
    {\lambda(v)\lambda(v')} \big[u_\ve(t,x,v)- u_\ve (t,x,v')\big]^2
    \\
    &\qquad \le \frac{C_0}{\epsilon^2} \pi (\lambda)^2\int_0^T
    \de t\,\mathcal E(f^\ve(t))\leq C \mathcal H(f^\ve_0),
\end{split}
\end{equation*}
which concludes the proof.
\end{proof}

\begin{lemma}
  \label{teo:3.0}
  The set $\{(\rho^\ve, j^\ve)\}_{\epsilon\in (0,1]} \subset C([0,T]; \mathcal P(\mathbb T^d))\times \mathcal M
  ([0,T]\times \bb T^d; \mathbb R^d) $
  is
  precompact.
\end{lemma}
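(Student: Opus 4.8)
\textbf{Proof proposal for Lemma~\ref{teo:3.0}.}

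The plan is to prove precompactness of the two components separately, since $C([0,T];\mc P(\bb T^d))$ and $\mc M([0,T]\times\bb T^d;\bb R^d)$ carry independent topologies, and then invoke Tychonoff. For the current component $\{j^\ve\}$ I would first show that it is bounded in total variation, uniformly in $\ve$: by definition \eqref{drJ}, $|j^\ve(t,x)|\le \ve^{-1}\int\pi(\de v)\,f^\ve(t,x,v)\,|b(v)|$, and writing $f^\ve=u_\ve^2=(\bar u_\ve+\tilde u_\ve)^2$ one has, after an application of Cauchy--Schwarz in the velocity variable, $|b|\le (|b|^2/\lambda)^{1/2}\lambda^{1/2}$, a bound of the form $\ve^{-1}\big(\int\de\tilde\pi\,\tilde u_\ve^2\big)^{1/2}$ times a factor controlled by \eqref{ineq1} and \eqref{ineq4}; the point is that $\bar u_\ve$ is velocity-independent so $\int\pi(\de v)\,\bar u_\ve^2\,b$ is (up to the sign of $b$) killed by the centering Assumption~\ref{assumpt1}(i) after integration in $x$, leaving only the $\tilde u_\ve$ contribution, which carries the crucial $\ve^{-1}$-to-$\ve^0$ gain from \eqref{ineq3} once one integrates in time. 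Concretely one gets $\int_0^T\!\de t\int\!\de x\,|j^\ve(t,x)|\le C$ from the product of the two square-root bounds in Lemma~\ref{t:l4}, and boundedness in $\mc M([0,T]\times\bb T^d;\bb R^d)$ then gives weak-$*$ precompactness by Banach--Alaoglu.

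For the density component $\{\rho^\ve\}\subset C([0,T];\mc P(\bb T^d))$ I would use Ascoli--Arzel\`a. Tightness in $\mc P(\bb T^d)$ is automatic since $\bb T^d$ is compact, so $\mc P(\bb T^d)$ itself is compact, and it remains to prove equicontinuity: for each test function $g\in C^1(\bb T^d)$ I would show
\begin{equation*}
  \lim_{\delta\downarrow 0}\,\sup_\ve\,\sup_{|t-s|<\delta}\big|\rho^\ve_t(g)-\rho^\ve_s(g)\big|=0.
\end{equation*}
From the continuity equation \eqref{cont}, $\rho^\ve_t(g)-\rho^\ve_s(g)=\int_s^t\!\de\tau\int\!\de x\,j^\ve(\tau,x)\cdot\nabla g(x)$, so $|\rho^\ve_t(g)-\rho^\ve_s(g)|\le \|\nabla g\|_\infty\int_s^t\!\de\tau\int\!\de x\,|j^\ve(\tau,x)|$. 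This is where the $L^2$-in-time bound, rather than merely an $L^1$-in-space-time bound, is needed: I would instead bound $\int_s^t\!\de\tau\int\!\de x\,|j^\ve(\tau,x)|$ by $|t-s|^{1/2}$ times $\big(\int_0^T\!\de\tau\,(\int\!\de x\,|j^\ve(\tau,x)|)^2\big)^{1/2}$, and the inner quantity is controlled, via Cauchy--Schwarz in $x$ and $v$ and then \eqref{ineq2}--\eqref{ineq3}, by $C$ uniformly in $\ve$ — the factor $\ve^{-2}=\ve^{-1}\cdot\ve^{-1}$ being absorbed: one $\ve^{-1}$ pairs with $\tilde u_\ve$ in $\int\de\tilde\pi\,\tilde u_\ve^2$, integrated in time by \eqref{ineq3}, and the other $\ve^{-1}$ with the $\int\de x\,\tilde u_\ve^2|b|^2/\lambda$ factor which is bounded for each $t$ by \eqref{ineq2}, but this leaves a stray $\ve^{-1}$, so in fact I expect the correct estimate produces an $L^2$-in-time bound on $t\mapsto\int\de x\,|j^\ve|$ directly from the product of one $\ve^{-1}$-weighted time integral \eqref{ineq3} with one $\ve^{0}$ pointwise-in-time bound \eqref{ineq2}, giving equicontinuity with modulus $|t-s|^{1/2}$.

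The main obstacle I anticipate is precisely this bookkeeping of the two powers of $\ve^{-1}$ in $j^\ve$ (one from the scaling $\ve^{-1}b$, one hidden in the fact that the macroscopically relevant part of $f^\ve$ is the small fluctuation $\tilde u_\ve$ of size $\ve$): one must split $\int\pi(\de v)f^\ve b = \int\pi(\de v)\bar u_\ve^2 b + 2\int\pi(\de v)\bar u_\ve\tilde u_\ve b + \int\pi(\de v)\tilde u_\ve^2 b$, note that the first term integrates to zero in $x$ by $\pi(b)=0$ (so contributes nothing to $\rho^\ve_t(g)-\rho^\ve_s(g)$ after the integration by parts, or is simply absent), and estimate the cross term $\ve^{-1}\int\de x\int\pi(\de v)\bar u_\ve\tilde u_\ve b$ by Cauchy--Schwarz as $\ve^{-1}\|\bar u_\ve\|_{L^2(\de x)}\big(\int\de x\int\de\pi\,\tilde u_\ve^2|b|^2/\lambda\big)^{1/2}\big(\int\de x\int\de\pi\,\lambda\big)^{1/2}$ — wait, more carefully, one writes $|b| = (|b|^2/\lambda)^{1/2}\lambda^{1/2}$ and splits the velocity integral so that $\tilde u_\ve$ pairs with $(|b|^2/\lambda)^{1/2}$, leaving $\lambda^{1/2}$ which pairs with $\bar u_\ve$; using \eqref{ineq4} for $\bar u_\ve$ and \eqref{ineq2} for the $\tilde u_\ve$ factor one is left with the single power $\ve^{-1}$, which after integrating the square of $\|j^\ve(t)\|_{L^1}$ over $t\in[0,T]$ is tamed by \eqref{ineq3}. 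Once all three pieces are in hand the equicontinuity estimate closes, Ascoli--Arzel\`a applies to $\{\rho^\ve\}$, and the lemma follows.
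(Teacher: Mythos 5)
Your overall strategy is the same as the paper's: decompose $f^\ve$ as in \eqref{def:u}, use $\pi(b)=0$ to discard the $\bar u_\ve^2$ contribution to $j^\ve$ (note it vanishes pointwise in $(t,x)$, since $\bar u_\ve$ does not depend on $v$ — no integration in $x$ is needed), control the remaining pieces with Lemma~\ref{t:l4} and \eqref{ineq4} so as to get a modulus of continuity of order $\sqrt{t-s}$, then conclude by Banach--Alaoglu for $\{j^\ve\}$ and Ascoli--Arzel\`a for $\{\rho^\ve\}$. The paper obtains the $\sqrt{t-s}$ gain through the bound \eqref{bonJ}, proved by Young's inequality with a parameter $\gamma$ optimized at $\gamma=\sqrt{t-s}$; your Cauchy--Schwarz-in-time variant (an $L^2$-in-time bound on $t\mapsto\|j^\ve(t)\|_{L^1(\de x)}$) is equivalent in principle.

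However, the estimate on which everything hinges — the cross term $\ve^{-1}\int\!\de x\int\!\pi(\de v)\,\bar u_\ve|\tilde u_\ve|\,|b|$ — is mis-executed in your final version. With your splitting, $|\tilde u_\ve|(|b|^2/\lambda)^{1/2}$ paired against $\bar u_\ve\lambda^{1/2}$, the factor containing $\tilde u_\ve^2$ is already weighted by $|b|^2/\lambda$ and is therefore controlled only by \eqref{ineq2}, which carries no power of $\ve$: you are left with a bound of order $C/\ve$ pointwise in time, and the subsequent appeal to \eqref{ineq3} is unavailable because no bare factor $\int\!\de x\int\!\de\tilde\pi\,\tilde u_\ve^2$ survives in your estimate. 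The correct splitting puts the whole weight on the $\bar u_\ve$ side: in $\tilde\pi$-variables, bound $|\tilde u_\ve|\cdot\bar u_\ve|b|/\lambda$ by Cauchy--Schwarz as $\big(\int\!\de x\int\!\de\tilde\pi\,\tilde u_\ve^2\big)^{1/2}\big(\int\!\de x\,\bar u_\ve^2\,\tilde\pi(|b|^2/\lambda^2)\big)^{1/2}$, using \eqref{ineq4} and the exponential moments of $|b|^2/\lambda$ for the second factor; the first factor, squared and integrated in time, then absorbs the full $\ve^{-2}$ through \eqref{ineq3}. The same bookkeeping removes your ``stray $\ve^{-1}$'' worry for the quadratic term: after squaring $\|j^\ve(t)\|_{L^1}$ the single $\ve^{-1}$ becomes $\ve^{-2}$, which is exactly the weight appearing in \eqref{ineq3}, while \eqref{ineq2} is used uniformly in $t$, so nothing is left over. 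This is precisely what the paper's Young-inequality step $\tfrac2\ve\bar u_\ve|\tilde u_\ve|\tfrac{|b|}{\lambda}\le\tfrac\gamma{\ve^2}\tilde u_\ve^2+\tfrac1\gamma\bar u_\ve^2\tfrac{|b|^2}{\lambda^2}$ accomplishes, with $\gamma=\sqrt{t-s}$. With this correction your argument closes and coincides with the paper's proof.
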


\begin{proof}
Given $0\le s <t \le T$,  the restriction of
the measure $\de J^\epsilon= j^\ve \de t \de x$ to $[s,t]\times \bb T^d$ is denoted by  $J^\epsilon_{s,t}$.
We will prove the
following bound. There exists a constant $C$ independent on $s,\,t$, such that
for any $\epsilon\in (0,1]$ and any $w\in C\big([s,t]\times \bb
T^d;\bb R^d\big)$
\begin{equation}
  \label{bonJ}
  \big|J^\epsilon_{s,t}(w)\big| = 
  \bigg|\int_s^t\!\de \tau\!\int\de x
  \, j^\epsilon(\tau,x)\cdot w(\tau,x)\bigg|
  \le C \,  \sqrt{t-s} \, \|w\|_\infty.
\end{equation}
Let us first show that it implies the statement. 
Choosing $s=0$, $t=T$, and applying the Banach-Alaoglu theorem,
\eqref{bonJ} directly yields the precompactness of $\{j^\epsilon\}$. 
Since $\mc P(\bb T^d)$ is compact, by the Ascoli-Arzel\`a theorem, 
to prove precompactness of $\{\rho^\epsilon\}$ it is enough to show
that for each $\phi\in C^1( \bb T^d)$
\begin{equation}\label{1.5}
  \lim_{\delta\downarrow 0} \, \sup_{\epsilon\in (0,1]} \, 
  \sup_{\substack{t,s\in [0,T]\\ |t-s|<\delta}}  \bigg| \int\!\de x\, 
  \big[\rho^\epsilon(t,x) -\rho^\epsilon(s,x)\big] \phi(x) \bigg| =0. 
\end{equation}
From the continuity equation \eqref{cont} we  deduce 
\begin{equation*}
  %\label{1.6}
  \int\!\de x\, 
  \big[\rho^\epsilon(t,x) -\rho^\epsilon(s,x)\big] \phi(x) 
  = \int_s^t\!\de \tau\!\int\!\de x\,  j^\epsilon(\tau,x)\cdot \nabla\phi(x)=
  J^\epsilon_{s,t}(\nabla\phi)
\end{equation*}
so that \eqref{1.5} follows readily from \eqref{bonJ}.

To prove \eqref{bonJ}, using the decomposition \eqref{def:u}, 
since $b$ has mean zero with respect to $\pi$,
by definition \eqref{drJ} of $j^\epsilon$ we get
\begin{equation*}
  \begin{split}
    &J^\epsilon_{s,t}(w) = 
    \frac 1\epsilon \int_s^t\!\de \tau\!\int\!\de x\!\int\!\pi(\de v)\, 
    f^\epsilon(\tau,x,v) \, b(v)\cdot w(\tau,x)\\
    &
    = \frac {\pi(\lambda)}\epsilon \int_s^t\!\de \tau\!\int\!\de x\!
    \int\!\tilde\pi(\de v)\, 
    \big[\tilde u_\epsilon^2(\tau,x,v) 
    + 2\bar u_\epsilon(\tau,x) \tilde u_\epsilon(\tau,x,v)\big]
    \, \frac{b(v)}{\lambda(v)}\cdot w(\tau,x).
  \end{split}
\end{equation*}
By Young's inequality, for each $\gamma>0$
\begin{equation*}
\begin{split}
& \frac 1{\ve} \tilde u_\ve^2 \frac {|b|}{\lambda} 
\le\frac \gamma{2\ve^2} \tilde u_\ve^2 + 
\frac 1{2\gamma} \tilde u_\ve^2\frac {|b|^2}{\lambda^2} \\
&\frac 2{\ve} \bar u_\ve \,|\tilde u_\ve| \frac {|b|}{\lambda}
\le \frac \gamma{\ve^2} \tilde u_\ve^2 + 
\frac 1{\gamma} \bar u_\ve^2\frac {|b|^2}{\lambda^2}
\end{split}
\end{equation*}
Then we obtain
\begin{equation*}
\label{split3}
  \begin{split}
|J^\epsilon_{s,t}(w)| \le 
\|w\|_\infty & \left\{  \frac {3\gamma}{2\epsilon^2}
    \int_0^T\!\de \tau\!\int\!\de x\!\int\!\de \tilde\pi\, \tilde
    u_\epsilon^2 + \frac 1{2\gamma}
    \int_s^t\!\de \tau\!\int\!\de x\!\int\!\de \tilde\pi\, \tilde
    u_\epsilon^2 \frac {|b|^2}{\lambda^2} + \right. \\
 &+ \left. \frac 1{\gamma}
    \int_s^t\!\de \tau\!\int\!\de x\, \bar
    u_\epsilon^2  \int\!\de \tilde\pi\, \frac {|b|^2}{\lambda^2} 
\right\}.
\end{split}
\end{equation*}
Using \eqref{ineq3}, \eqref{ineq2}, the fact that $|b|^2/\lambda$
has finite exponential moments, and \eqref{ineq4}, we obtain that there exists $C$ such that
\begin{equation*}
|J^\epsilon_{s,t}(w)| \le \frac C 2
\|w\|_\infty \left(  \gamma + \frac{t-s}{\gamma} \right),
\end{equation*}
then \eqref{bonJ} is obtained by choosing $\gamma = \sqrt{t-s}$.
\end{proof}

\begin{lemma}
  \label{teo:3.1}
  Assume that $\rho^\ve\to \rho$. Then for each $t\in [0,T]$
  \begin{equation*}
    \varliminf_{\ve\to 0} \mathcal H(f^\ve(t))\geq 
    H(\rho(t)). 
  \end{equation*}
\end{lemma}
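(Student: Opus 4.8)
The plan is to compare the entropy of the full distribution $f^\ve(t,x,v)$ with the entropy of its $\pi$-marginal $\rho^\ve(t,x)$, using the convexity of the function $r\mapsto r\log r$. Recall that $\de P_t^\ve = f^\ve(t,x,v)\,\de x\,\pi(\de v)$ and that $\rho^\ve(t,x) = \int\pi(\de v)\,f^\ve(t,x,v)$. The elementary superadditivity / Jensen inequality for relative entropy (contraction under marginalization) gives, for every $t$,
\begin{equation*}
  \mc H(f^\ve(t)) = \int\!\de x\!\int\!\pi(\de v)\, f^\ve(t)\log f^\ve(t)
  \;\ge\; \int\!\de x\, \rho^\ve(t)\log\rho^\ve(t) = H(\rho^\ve(t)),
\end{equation*}
which follows by applying Jensen's inequality to the convex map $r\mapsto r\log r$ with the probability measure $\pi(\de v)$ at each fixed $x$: $\int\pi(\de v)\, f^\ve\log f^\ve \ge \big(\int\pi(\de v) f^\ve\big)\log\big(\int\pi(\de v) f^\ve\big)$.

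So first I would establish, uniformly in $\ve$, the pointwise-in-$t$ bound $\mc H(f^\ve(t))\ge H(\rho^\ve(t))$ as above. The second and final step is to pass to the liminf: since $\rho^\ve\to\rho$ in $C([0,T];\mc P(\bb T^d))$, in particular $\rho^\ve(t)\to\rho(t)$ in $\mc P(\bb T^d)$ for each fixed $t$, and the entropy $H$ is lower semicontinuous on $\mc P(\bb T^d)$ with respect to weak convergence. Hence
\begin{equation*}
  \varliminf_{\ve\to 0}\mc H(f^\ve(t)) \;\ge\; \varliminf_{\ve\to 0} H(\rho^\ve(t)) \;\ge\; H(\rho(t)),
\end{equation*}
which is exactly the claim.

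There is essentially no serious obstacle here: the only points requiring a line of justification are (a) the marginalization inequality, which is a standard consequence of convexity of $r\log r$ and Jensen, valid whether or not $\mc H(f^\ve(t))$ is finite (if it is infinite the inequality is trivial, and if finite the computation above applies), and (b) the lower semicontinuity of $H$, already recorded in the preliminaries of the excerpt. The only mild subtlety is making sure the inequality in step (a) is used at the fixed time $t$ at which we want the conclusion, which is unproblematic since $t\in[0,T]$ is arbitrary and the convexity argument is purely static in $t$. I would therefore present the proof in just two short displays as above.
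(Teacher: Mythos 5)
Your proof is correct and follows exactly the route the paper intends: the paper's one-line proof ("direct consequence of the convexity and lower semicontinuity of the relative entropy") is precisely your two steps, namely the Jensen/marginalization inequality $\mc H(f^\ve(t))\ge H(\rho^\ve(t))$ from convexity of $r\mapsto r\log r$, followed by lower semicontinuity of $H$ along $\rho^\ve(t)\to\rho(t)$. Nothing is missing; you have simply written out the details the paper leaves implicit.
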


\begin{proof}
  The statement is a direct consequence of the convexity and lower
  semicontinuity of the relative entropy.
\end{proof}

\begin{lemma}
  \label{teo:3.2}
  Assume that $(\rho^\ve, j^\ve)\to (\rho, j)$. Then
  \begin{equation}
    \varliminf_{\ve\to 0}\frac 1 {\ve^2}\int_0^T \de t \,\mathcal
    E(f^\ve(t))\geq R(\rho, j). 
  \end{equation}
\end{lemma}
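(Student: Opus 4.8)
The plan is to deduce the inequality from the variational representation~\eqref{varR-calore} of $R$: it suffices to prove that for every $w\in C\big([0,T]\times\bb T^d;\bb R^d\big)$
\begin{equation*}
  \varliminf_{\ve\to 0}\frac 1{\ve^2}\int_0^T\!\de t\,\mc E(f^\ve(t))
  \ \geq\ 2\int_0^T\!\de t\int\!\de x\, j\cdot w \ -\ 2\int_0^T\!\de t\int\!\de x\,\rho\,w\cdot D w ,
\end{equation*}
since replacing $w$ by $w/2$ in the supremum and using~\eqref{varR-calore} gives $\sup_w\big\{2\int_0^T\de t\int\de x\,j\cdot w-2\int_0^T\de t\int\de x\,\rho\,w\cdot Dw\big\}=R(\rho,j)$. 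The bound is obtained by completing the square in $\mc E$, using as conjugate variable the corrector $\xi:=(-\mc L)^{-1}b$, which by Assumption~\ref{assumpt1} belongs to $L^2(\tilde\pi;\bb R^d)$. Writing $f^\ve=u_\ve^2$ as in~\eqref{def:u} and setting $\psi^\ve(t,x,v,v'):=\tfrac 12\big(u_\ve(t,x,v)+u_\ve(t,x,v')\big)\,w(t,x)\cdot\big(\xi(v)-\xi(v')\big)$, the elementary inequality $\big(\tfrac1\ve[u_\ve(v)-u_\ve(v')]-\psi^\ve\big)^2\ge 0$, integrated against $\de t\,\de x\iint\pi(\de v)\pi(\de v')\,\sigma(v,v')$, yields
\begin{equation*}
  \frac 1{\ve^2}\int_0^T\!\de t\,\mc E(f^\ve(t))\ \ge\
  \frac 2\ve\int_0^T\!\de t\int\!\de x\iint\pi\pi\sigma\,[u_\ve(v)-u_\ve(v')]\,\psi^\ve
  \ -\ \int_0^T\!\de t\int\!\de x\iint\pi\pi\sigma\,(\psi^\ve)^2 .
\end{equation*}

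The choice of $\psi^\ve$ is dictated by the identity $[u_\ve(v)-u_\ve(v')][u_\ve(v)+u_\ve(v')]=f^\ve(v)-f^\ve(v')$: it makes the linear term equal to $\tfrac1\ve\int_0^T\!\de t\int\!\de x\, w\cdot\iint\pi\pi\sigma\,[f^\ve(v)-f^\ve(v')][\xi(v)-\xi(v')]$, and since $-\mc L\xi=b$ and $\sigma$ is symmetric, an integration by parts gives $\iint\pi\pi\sigma\,[f^\ve(v)-f^\ve(v')][\xi(v)-\xi(v')]=2\int\!\pi\,f^\ve b=2\ve\,j^\ve$ by~\eqref{drJ}; hence the linear term equals $2\int_0^T\!\de t\int\!\de x\,w\cdot j^\ve$, which tends to $2\int_0^T\!\de t\int\!\de x\,w\cdot j$ because $j^\ve\to j$ and $w$ is continuous. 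For the quadratic term, $(u_\ve(v)+u_\ve(v'))^2\le 2(f^\ve(v)+f^\ve(v'))$ and a symmetrisation in $v\leftrightarrow v'$ give
\begin{equation*}
  \int_0^T\!\de t\int\!\de x\iint\pi\pi\sigma\,(\psi^\ve)^2\ \le\
  \int_0^T\!\de t\int\!\de x\int\!\pi(\de v)\,f^\ve(t,x,v)\,h^w(t,x,v),\quad
  h^w(t,x,v):=\int\!\pi(\de v')\,\sigma(v,v')\big(w(t,x)\cdot(\xi(v)-\xi(v'))\big)^2 .
\end{equation*}
Since $\int\!\pi(\de v)\,h^w=\iint\pi\pi\sigma\big(w\cdot(\xi(v)-\xi(v'))\big)^2=2\,w\cdot D w$ — the last equality being~\eqref{D} rewritten as $D_{ij}=\tfrac12\iint\pi\pi\sigma(\xi_i(v)-\xi_i(v'))(\xi_j(v)-\xi_j(v'))$ via $b=-\mc L\xi$ — the desired inequality follows once we prove the homogenisation estimate
\begin{equation*}
  \varlimsup_{\ve\to 0}\int_0^T\!\de t\int\!\de x\int\!\pi(\de v)\,f^\ve(t,x,v)\,h^w(t,x,v)\ \le\
  2\int_0^T\!\de t\int\!\de x\,\rho\, w\cdot D w .
\end{equation*}

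This homogenisation estimate is the main obstacle. Writing $\overline{h^w}:=\int\!\pi(\de v)\,h^w$ and $\rho^\ve=\int\!\pi(\de v)\,f^\ve$, split $\int\!\pi\,f^\ve h^w=\rho^\ve\,\overline{h^w}+\int\!\pi\,(f^\ve-\rho^\ve)\,(h^w-\overline{h^w})$; the first summand integrates to $\int\!\de x\,\rho^\ve\,\overline{h^w}\to\int\!\de x\,\rho\,\overline{h^w}=2\int\!\de x\,\rho\,w\cdot D w$ because $\rho^\ve\to\rho$ and $\overline{h^w}$ is continuous. For the remainder, inserting $u_\ve=\bar u_\ve+\tilde u_\ve$ and using that $\bar u_\ve^2$ is constant in $v$ while $h^w-\overline{h^w}$ has zero $\pi$-mean in $v$, everything reduces by Cauchy--Schwarz to controlling $\int\!\pi\,\tilde u_\ve^2\,h^w$ and $\big(\int\!\pi\,\tilde u_\ve^2\big)^{1/2}$ times quantities bounded uniformly in $\ve$ by~\eqref{ineq4}. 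Since $h^w\lesssim\lambda$ pointwise (when $\xi$ is bounded), one has $\int\!\pi\,\tilde u_\ve^2\,h^w\lesssim\pi(\lambda)\int\!\tilde\pi\,\tilde u_\ve^2$, whose time--space integral is $O(\ve^2)$ by~\eqref{ineq3}; and $\int_0^T\!\de t\int\!\de x\int\!\pi\,\tilde u_\ve^2\to 0$, for on $\{\lambda\ge\delta\}$ this is $\le\delta^{-1}\pi(\lambda)\int_0^T\!\de t\int\!\de x\int\!\tilde\pi\,\tilde u_\ve^2=O(\ve^2)$ by~\eqref{ineq3}, while on $\{\lambda<\delta\}$ one bounds $\tilde u_\ve^2\le 2f^\ve+2\bar u_\ve^2$ and uses the uniform integrability of $\{f^\ve(t)\}$ on sets of small $\pi$-measure (from $\mc H(f^\ve(t))\le\mc H(f^\ve_0)$) together with $\pi[\lambda<\delta]\to\pi[\lambda=0]=0$, by Assumption~\ref{assumpt1}(ii). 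This establishes the lemma when $\xi$ is bounded; in general one first replaces $\xi$ by a bounded continuous $\xi_M\to\xi$ in $L^2(\tilde\pi)$, which is legitimate by Assumption~\ref{assumpt3}, runs the whole argument with $\xi_M$ — so that $D$ is replaced by $D^M=\tfrac12\iint\pi\pi\sigma(\xi_M(v)-\xi_M(v'))^{\otimes 2}\to D$ — and lets $M\to\infty$; the only new term, arising in the linear step because $\mc L\xi_M\ne-b$, equals $\tfrac2\ve\int_0^T\!\de t\int\!\de x\,w\cdot\int\!\pi\,f^\ve\,\mc L(\xi_M-\xi)$ and is bounded, by Cauchy--Schwarz and~\eqref{ineq}, by $C\big(\ve^{-2}\int_0^T\!\de t\,\mc E(f^\ve(t))\big)^{1/2}\le C$ times a quantity vanishing as $M\to\infty$. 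Apart from this truncation and the uniform integrability input, everything is elementary inequalities together with the a priori bounds of Lemma~\ref{t:l4}.
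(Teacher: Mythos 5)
Your main estimate is correct and is essentially a ``complete the square'' version of the paper's argument: where the paper plugs the test function $\log\big(1+\ve\, w\cdot\omega\big)$ into the variational formula \eqref{varD} and Taylor expands, you perturb the Dirichlet form directly by $\psi^\ve=\tfrac12(u_\ve+u_\ve')\,w\cdot(\xi-\xi')$; the linear term gives $2\int j^\ve\cdot w$ via $-\mc L\xi=b$ and the quadratic term leads to the same homogenisation estimate that the paper proves in \eqref{3.4}--\eqref{3.5} (your splitting $\int\!\de\pi\, f^\ve h^w=\rho^\ve\,\overline{h^w}+\int\!\de\pi\,(f^\ve-\rho^\ve)(h^w-\overline{h^w})$, the reduction to $\int\!\de\tilde\pi\,\tilde u_\ve^2$ via \eqref{ineq3}, the cut on $\{\lambda<\delta\}$ and the entropy/uniform-integrability input are exactly the paper's steps). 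Under alternative (i) of Assumption~\ref{assumpt3}, i.e.\ $\xi=(-\mc L)^{-1}b$ bounded, your proof is complete and correct, and the identity $2\,w\cdot Dw=\iint\pi\pi\,\sigma\,(w\cdot(\xi-\xi'))^2$ is the right reading of \eqref{D}.

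The gap is in your last step, the reduction of the general case to bounded $\xi$. First, under alternative (ii) of Assumption~\ref{assumpt3} it is not ``legitimate'' to replace $\xi$ by an arbitrary bounded $\xi_M\to\xi$ in $L^2(\tilde\pi)$: what that assumption provides is boundedness of $(\id-K)^{-1}$ on mean-zero bounded functions, and the paper exploits it through the specific choice $\omega_n=(\id-K)^{-1}a_n$ with $a_n$ bounded, $\tilde\pi(a_n)=0$ and $|a_n|\le|b|/\lambda$, precisely so that the error in the drift identity is $(-\mc L)\omega_n-b=\lambda a_n-b$, pointwise dominated by $2|b|$. Second, your proposed bound of the error term $\tfrac2\ve\int w\cdot\int\!\de\pi\,f^\ve\,\mc L(\xi_M-\xi)$ by Cauchy--Schwarz against $\big(\ve^{-2}\int_0^T\mc E(f^\ve)\big)^{1/2}$ does not close: the second Cauchy--Schwarz factor is
\begin{equation*}
\Big(\int_0^T\!\de t\int\!\de x\iint\pi(\de v)\pi(\de v')\,\sigma\,\big[u_\ve(v)+u_\ve(v')\big]^2\big(w\cdot\big[(\xi_M-\xi)(v)-(\xi_M-\xi)(v')\big]\big)^2\Big)^{1/2},
\end{equation*}
i.e.\ $f^\ve$ integrated against $G_M(v):=\int\pi(\de v')\sigma(v,v')|(\xi_M-\xi)(v)-(\xi_M-\xi)(v')|^2$, and $\|\xi_M-\xi\|_{L^2(\tilde\pi)}\to0$ only makes $G_M$ small in $L^1(\pi)$. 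Since the only uniform control on $f^\ve$ is the entropy bound, smallness of $\int\!\de\pi\,f^\ve G_M$ uniformly in $\ve$ requires either exponential-moment smallness of $G_M$ (as in Lemma~\ref{t:lem}) or pointwise domination by a fixed function with exponential moments together with a.e.\ convergence; neither follows from $L^2(\tilde\pi)$ convergence alone (e.g.\ $\lambda(\xi_M-\xi)^2$ need not have any exponential moments). This is exactly why the paper controls its error $A_{\ve,n}$ not by Cauchy--Schwarz with the Dirichlet form but by the decomposition $f^\ve=\bar u_\ve^2+2\bar u_\ve\tilde u_\ve+\tilde u_\ve^2$, Young's inequality, the a priori bounds \eqref{ineq2}--\eqref{ineq4} of Lemma~\ref{t:l4}, and dominated convergence based on $|(-\mc L)\omega_n-b|\le2|b|$ with $|b|^2/\lambda$ having all exponential moments. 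To repair your proof in case (ii) you should run your completed-square argument with $\omega_n$ in place of $\xi$ and estimate the resulting error term by the paper's mechanism rather than by Cauchy--Schwarz.
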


\begin{proof}
Assume first that condition (ii) in Assumption \ref{assumpt3} holds.
  Recall \eqref{tpi} and observe that, in view of
  item (iii) in Assumption \ref{assumpt1}, $b/\lambda\in L^2(\tilde\pi)$. 
  Choose a
  sequence $\{a_n\}$, $a_n:\mc V\to\mathbb R^d$, converging to
  $b/\lambda$ in $L^2(\tilde\pi)$, such that: $a_n$ is bounded,
  $\tilde\pi(a_n)=0$ and $|a_n(v)|\leq |b(v)|/\lambda(v)$ for any $n\geq 1$.
  Upon extracting a subsequence, $a_n\to b/\lambda$ $\tilde\pi$-a.e.
  Set $\omega_{n}:=(\id-K)^{-1}a_n$. By the Poincar\'e inequality 
   \eqref{pin} $(\id -K)^{-1}$ is a  bounded
    operator on the subspace of $L^2 (\tilde\pi)$ orthogonal to the
   constants; hence $\omega_n$ converges to $\xi=(\id-K)^{-1}(b/\lambda)$ in
   $L^2(\tilde\pi)$. Moreover, by condition (ii) in Assumption \ref{assumpt3}, for each $n\geq 1$
   $\omega_n$ is bounded.

   Fix $w\in C([0,T]\times \bb T^d; \mathbb R^d)$.
   In the variational representation \eqref{varD} for $\mathcal E$  we
  chose the test function $\log\big(1+\ve w(t,x)\cdot \omega_{n}(v)\big)$, with $\ve$ small enough, and we deduce
  \begin{equation*}
    \begin{split}
    \frac 12 \frac 1 {\ve^2}\int_0^T \!\de t\, \mathcal E(f^\ve(t))\geq \frac 1
    {\ve}\int_0^T \! \de t \int\!\de x \int\!\de\pi 
    f^\ve\,\frac{w\cdot (-\mc L)\omega_n}{1+\ve w\cdot \omega_n}.
  \end{split}
\end{equation*}
Since $\omega_{n}$ is bounded, by Taylor expansion we obtain
\begin{equation*}\begin{split}
 \varliminf_{\ve\to 0}\frac 12 \frac 1 {\ve^2}\int_0^T \de t\, \mathcal E(f^\ve(t))\geq
 \varliminf_{\ve\to 0}\frac 1 \ve \int_0^T \de t \int\de x
 \int\de \pi  f^\ve \, w\cdot (-\mc L)\omega_{n}\\ 
 -  
 \varlimsup_{\ve\to 0}\int_0^T \de t \int\de x \int\de \pi
 f^\ve\, w\cdot \omega_{n} w\cdot (-\mc L)\omega_{n}. 
\end{split}\end{equation*}
Regarding  the first term on the right hand side, we write
\begin{equation*}
 \begin{split}
  \frac 1 \ve \int_0^T \de t \int\de x \int\de\pi  f^\ve\,
  w\cdot (-\mc L)\omega_{n}= 
   \int_0^T \de t \int\de x  j^\ve \cdot w + A_{\ve,n},
 \end{split}
\end{equation*}
 with 
\begin{equation*}
 A_{\ve,n}=\frac 1 \ve \int_0^T \de t \int\de x \int\de\pi  f^\ve\, w\cdot \big((-\mc L)\omega_{n} -b\big).
\end{equation*}
We will show that
\begin{equation}\label{lim_A}
 \lim_{n\to\infty} \sup_{\ve>0} |A_{\ve,n}|=0
\end{equation}
 and
 \begin{equation}
   \label{D2}
   \begin{split}
     \varlimsup_{n\to\infty}\varlimsup_{\ve\to 0}\int_0^T \de t
     \int\de x \int\de\pi f^\ve\, w\cdot \omega_{n}
     w\cdot (-\mc L)\omega_{n} \leq \int_0^T \de t \int\de x \rho w\cdot D\,w,
   \end{split}
 \end{equation}
 with $D$ given by \eqref{D}. Then, by optimizing over $w$ and using the variational representation
 \eqref{varR-calore}, the statement follows.
%%%%%%%%%%%%%%%%%%%%%%%%%%%%%%%%%%%%%%%%%%%%%%%%%%%%%%
%%%%%%%%%%%%%%%%%%%%% altro caso%%%%%%%%%%%%%%%%%%%%%

Postponing the proof of these two bounds, we
consider the case that  condition (i) in Assumption \ref{assumpt3} holds. Fix $w\in C\big([0,T]\times\mathbb T^d;\,\mathbb R^d\big)$, then in  the variational representation %\eqref{varD} 
for $\mathcal E$
we choose the test function $\log\big(1+\ve w(t,x)\cdot (-\mc L)^{-1}b(v)\big)$, with $\ve$ small enough. By Taylor expansion we deduce
 \begin{equation*}\begin{split}
 \varliminf_{\ve\to 0}\frac 12 \frac 1 {\ve^2}\int_0^T \de t \,\mathcal E(f^\ve(t))\geq
 \varliminf_{\ve\to 0}\frac 1 \ve \int_0^T \de t \int\de x
 \int\de\pi  f^\ve\, w\cdot b\\ 
 -  
 \varlimsup_{\ve\to 0}\int_0^T \de t \int\de x \int\de \pi
  f^\ve\, w\cdot (-\mc L)^{-1}b\, w\cdot b. 
\end{split}\end{equation*}
Recalling \eqref{drJ} and the variational representation \eqref{varD-calore}, it suffices to show
\begin{equation}
   \label{Db2}
   \begin{split}
     \varlimsup_{\ve\to 0}\int_0^T \de t
     \int\de x \int\de\pi f^\ve \,w\cdot b \,
     w\cdot (-\mc L)^{-1}b \leq \int_0^T \de t \int\de x \rho\, w\cdot D w.
   \end{split}
 \end{equation}

\smallskip
\noindent \emph{Proof of \eqref{lim_A}}.
According to the decomposition of $f^\ve $ in \eqref{def:u}, we write 
\begin{equation}
\begin{split}
    A_{\ve,n}=\frac 1\ve \int_0^T \de t \int\de x \int
    \de\pi \,\tilde u_\ve^2\, w\cdot \big((-\mc L)\omega_{n} -b\big)\\+ \frac 2
    \ve \int_0^T \de t \int\de x \int\de\pi\, \tilde u_\ve\,
    \bar u_\ve \,w\cdot \big((-\mc L)\omega_{n} -b\big),
\end{split}\end{equation}
where we used that $\pi[(-\mc L)\omega_n-b]=0$.  By Young's inequality,
for any $\gamma>0$ 
\begin{equation*}
\begin{split}
&\frac 1{\eps}  \tilde u_\ve^2 \,|w|\,\frac 1{\lambda} 
\big|(-\mc L)\omega_{n} -b\big|
\le \frac \gamma{2\eps^2}  \tilde u_\ve^2  + 
\frac 1{2\gamma} |w|^2\tilde u_\ve^2  \frac 1{\lambda^2} 
\big|(-\mc L)\omega_{n} -b\big|^2 ,\\
&\frac 2 \ve \frac 1{\lambda}\,\bar u_\ve |\tilde u_\ve|\,
   \,|w|\, \big|(-\mc L)\omega_{n} -b\big| 
\le \frac \gamma{\eps^2}  \tilde u_\ve^2+ \frac 1{\gamma}
 |w|^2\bar  u_\ve^2   \frac 1{\lambda^2} \big|(-\mc L)\omega_{n} -b\big|^2.
\end{split}
\end{equation*}
Then
\begin{equation}
\label{stima:A}
\begin{split}
 |A_{\ve,n}| \le &   \frac {3\gamma}{2\eps^2} 
 \int_0^T \!\de t \int \!\de x \int \!\de \tilde \pi\,  \tilde u_\ve^2  \\
  &+ 
\frac {\pi[\lambda]^2} {2\gamma} \|w\|_{\infty}^2
\int_0^T \!\de t \int\! \de x
\int \!\de \tilde \pi \, \tilde u_\ve^2 \frac 1 {\lambda^2}
\big| \big((-\mc L)\omega_{n} -b\big)\big|^2
\\
& + \frac {\pi[\lambda]^2} {\gamma}
 \|w\|_{\infty}^2\int_0^T \!\de t \int \!\de x \, \bar u_\ve^2
 \int \!\de \tilde \pi \,  \frac 1 {\lambda^2} \big| \big((-\mc L)\omega_{n} -b\big)\big|^2.
\end{split}\end{equation}
We claim that for each $\gamma >0$ the second and the third term
on the right hand side vanishes as $n\to \infty$ uniformly in
$\epsilon$. Since the first term on the right hand side can be bounded
by using \eqref{ineq3}, we then conclude taking the limit $\gamma \to 0$. 

To prove the claim, observe that,  by construction of the sequence $\{\omega_n\}$, 
\begin{equation}
\label{3.2}
  \begin{split}
  &[(-\mc L)\omega_n](v) = \lambda (v) a_n(v) \to b(v)\quad \pi\textrm{-a.e.}   \\
  & \big|(-\mc L)\omega_n(v) -b(v)\big| \le 2 |b(v)|.
  \end{split}
\end{equation}
As  $ \int\!\de x\, \bar u_\ve^2$ is bounded uniformly in $\ve$ by
\eqref{ineq4}, we conclude by dominated convergence and \eqref{ineq2}.

\smallskip
\noindent \emph{Proof of \eqref{D2}}.
It is enough to show that for each $n$
\begin{equation}
  \label{3.4}
  \lim_{\epsilon\to 0} 
  \int_0^T \!\de t \int\!\de x 
     \int \!\de \pi \, \big(f^\ve-\rho^\epsilon\big)\, w\cdot \omega_{n}
     w\cdot (-\mc L)\omega_{n}  =0.
\end{equation}
Indeed, by construction of the sequence $a_n$
\begin{equation*}\begin{split}
& \lim_{n\to\infty}\pi\big(\omega_n \otimes (-\mc L)\omega_n\big) =\lim_{n\to\infty}
\pi(\lambda)\tilde  \pi\big( \omega_n \otimes a_n\big)
=\pi(\lambda)\tilde\pi\big(\xi\otimes \tfrac b \lambda\big)=D.
\end{split}\end{equation*}

In order to prove \eqref{3.4}, by using  the decomposition of  $f^\ve$ in \eqref{def:u},
\begin{equation*}
  f^\epsilon -\rho^\epsilon = 
  2 \bar u_\epsilon \big[ \tilde u_\epsilon - \pi\big(\tilde u_\epsilon\big)
  \big] + \tilde u_\epsilon^2 - \pi \big(\tilde u_\epsilon^2\big).  
\end{equation*}
Since $\omega_n$ is bounded and  $|(-\mc L) \omega_n (v)| \le |b(v)|$, it
suffices
\begin{equation}
  \label{3.5}
  \begin{split}
  \lim_{\epsilon\to 0} 
  \int_0^T\!\de t\int\!\de x\, 
  \Big\{ 
  \bar u_\epsilon \pi\big( |\tilde u_\epsilon| {|b|}\big)
  + \bar u_\epsilon \pi\big(|\tilde u_\epsilon|\big)
  \pi\big(|b| \big)  + \pi\big( \tilde u_\epsilon^2 |b|\big)
  + \pi\big( \tilde u_\epsilon^2 \big)\pi\big(|b|\big)
  \Big\} =0.
  \end{split}
\end{equation}
By Cauchy-Schwarz, Lemma \ref{t:l4}, and \eqref{ineq4}, we directly
conclude that the first and third term vanishes as $\epsilon\to 0$.
To analyze the fourth term, given  $\delta>0$ we write
\begin{equation*}\label{g4}
  \begin{split}
 & \int_0^T \de t\,\int \!\de x\, \pi\big(\tilde u_\ve ^2\big)
 =\int_0^T \de t\,\int\!\de x\, \pi\big(\tilde u_\ve ^2 \,\chi_{\{\lambda\geq \delta\}}\big)
  +\int_0^T dt\,\int\!\de x\, \pi\big(\tilde u_\ve ^2 \,\chi_{\{\lambda<\delta\}}\big).
\end{split}\end{equation*}
By \eqref{ineq3}, the first term on the right hand side vanishes as $\ve\to 0$. It is therefore enough to show that the second term vanishes as $\delta\to 0$ uniformly in $\ve$. 
To this end, recalling that $\tilde u_\ve=u_\ve-\bar u_\ve$ with $u_\ve ^2=f^\ve$, 
\begin{equation*}%\label{g5}
  \begin{split}
    & %\int_0^T dt\,
    \int\!\de x\, \int\!\de \pi \, \tilde u_\ve ^2 \,\chi_{\{\lambda< \delta\}}
    =\int\!\de x\, \int\! \de \pi \, \big(u_\ve-\bar u_\ve \big)^2 \,\chi_{\{\lambda< \delta\}}\\
   & \leq 2  \int\!\de x\, \int\!\de \pi \, f^\ve \,\chi_{\{\lambda< \delta\}}
    + 2\,\pi\big(\lambda< \delta  \big)\int\!\de x\, \bar u _\ve ^2 ,
  \end{split}
\end{equation*}
and we conclude by using   \eqref{ineq4},  the basic entropy inequality and the assumption $\pi (\lambda=0)=0$.
To complete the proof of \eqref{3.5}, we  observe that by Schwartz inequality and \eqref {ineq4} the previous argument also implies that the second term vanishes as $\ve\to 0$.

\smallskip

\noindent
\emph{Proof of \eqref{Db2}}. As before, it suffices  to show that 
\begin{equation*}
  \lim_{\epsilon\to 0} 
  \int_0^T \de t\int\!\de x 
     \int\!\de \pi [f^\ve-\rho^\epsilon]\, w\cdot b\,
     w\cdot (-\mc L)^{-1}b  =0.
\end{equation*}
Since $(-\mc L)^{-1}b$ is bounded,  this follows from \eqref{3.5}.
\end{proof}

%%%%%%%%%%%%%%%%%%%%%%%%%%%%%%%%%%%%%%%%%%%%%%%%%%%%%%%%%%%%%%%%%%%%%%%%%%%%%%%%%%%%%%%%%%%%%
%%%%%%%%%%%%%%%%%%%% QUIIIIIIIIIIIIIIIIIIIIIIIIIIIIIIIIIIIIII
%%%%%%%%%%%%%%%%%%%%%%%%%%%%%%%%%%%%%%%%%%%%%%%%%%%%%%%%%%%%%%%%%%%%%%%%%%%%%%%%%%%%%%%%%%%%

\begin{lemma}
  \label{teo:3.3}
Assume that $(\rho^\ve, j^\ve)\to (\rho, j)$. Then
\begin{equation}
 \varliminf_{\ve\to 0}\frac 1{\ve^2}\mathcal R
 (f^\ve,\eta^\ve)\geq \int_0^T dt\, E(\rho(t)). 
\end{equation}
\end{lemma}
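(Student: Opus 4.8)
The plan is to prove that, for every $\phi\in C^2_c\big((0,T)\times\bb T^d\big)$,
\begin{equation*}
  \varliminf_{\ve\to 0}\frac 1{\ve^2}\,\mc R(f^\ve,\eta^\ve)\;\ge\;
  2\int_0^T\!\de t\,\Big(-\int\!\de x\;\rho(t,x)\,e^{-\phi(t,x)}\,
  \nabla_x\cdot D\nabla_x e^{\phi(t,x)}\Big),
\end{equation*}
and then to conclude by optimizing over $\phi$: by the variational representation \eqref{varD-calore} of the Fisher information applied pointwise in $t$, the supremum of the right hand side over such $\phi$ equals $\int_0^T\!\de t\,E(\rho(t))$ (a standard density/measurable-selection argument lets one pass from the $t$-independent test functions of \eqref{varD-calore} to $t$-dependent ones supported in $(0,T)$). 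To get the displayed inequality I use the pointwise Legendre bound \eqref{legpsi}: recalling \eqref{metric}, $\mc R(f^\ve,\eta^\ve)=\int_0^T\!\de t\int\!\de x\iint\pi(\de v)\pi(\de v')\,\Psi_\sigma\big(f^\ve,(f^\ve)';\eta^\ve\big)$ and $\Psi_\sigma(p,q;\xi)\ge\lambda\xi-2\sigma\sqrt{pq}\,(\varch \lambda-1)$ for every $\lambda\in\bb R$, so that inserting
\begin{equation*}
  \lambda^\ve(t,x,v,v')=\ve\,\nabla_x\phi(t,x)\cdot\big(\omega(v')-\omega(v)\big),
  \qquad |\lambda^\ve|\le 1\ \text{ for }\ve\ \text{small},
\end{equation*}
yields $\tfrac 1{\ve^2}\mc R(f^\ve,\eta^\ve)\ge A_\ve-B_\ve$, where $A_\ve:=\tfrac 1{\ve^2}\int\lambda^\ve\eta^\ve$ and $B_\ve:=\tfrac 1{\ve^2}\int 2\sigma\sqrt{f^\ve(f^\ve)'}\,(\varch \lambda^\ve-1)$, with $\int$ short for $\int_0^T\!\de t\int\!\de x\iint\pi(\de v)\pi(\de v')$. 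Here $\omega$ is the corrector $\xi=(-\mc L)^{-1}b$ itself when alternative (i) of Assumption~\ref{assumpt3} holds (so that $\xi$ is bounded), and under alternative (ii) one takes for $\omega$ the bounded approximation $\omega_n=(\id-K)^{-1}a_n$ of $\xi$ constructed in the proof of Lemma~\ref{teo:3.2} --- recall $\omega_n$ bounded, $\omega_n\to\xi$ in $L^2(\tilde\pi)$, $(-\mc L)\omega_n=\lambda a_n$ with $|(-\mc L)\omega_n|\le|b|$ and $(-\mc L)\omega_n\to b$ $\pi$-a.e. --- performs the computation below for fixed $n$, and lets $n\to\infty$ only at the very end.

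For the linear term, the antisymmetry of $\eta^\ve$ and of $\omega(v')-\omega(v)$ reduces $A_\ve$ to an integral of $\int\pi(\de v')\,\eta^\ve$, which by the balance equation \eqref{constr} equals $-\ve^2\partial_t f^\ve-\ve\,b\cdot\nabla_x f^\ve$. The $\partial_t f^\ve$ contribution carries a spare factor $\ve$ and vanishes as $\ve\to 0$ (integrate by parts in $t$ --- no boundary term since $\supp\phi\subset(0,T)$ --- and use $\int\!\de x\int\!\de\pi\,f^\ve\equiv 1$ and $\|\omega\|_\infty<\infty$), while the transport contribution, after an integration by parts in $x$, equals $-2\int_0^T\!\de t\int\!\de x\int\!\de\pi\,f^\ve\sum_{i,j}b_i\,\omega_j\,\partial_i\partial_j\phi$; by the homogenization $f^\ve\rightharpoonup\rho\otimes\pi$ --- a consequence of $\rho^\ve\to\rho$ together with the bound \eqref{ineq3} on $\tilde u_\ve$ and the $\{\lambda<\delta\}$ truncation, exactly as in the proof of \eqref{3.5}, the unboundedness of $b$ being absorbed by the cut-off $\{|b|\le M\}$ and the basic entropy inequality --- this tends to $-2\int_0^T\!\de t\int\!\de x\,\rho\sum_{i,j}\pi(b_i\omega_j)\,\partial_i\partial_j\phi$.

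For the quadratic term, Taylor expanding $\varch \lambda^\ve-1=\tfrac12(\lambda^\ve)^2+O(\ve^4)$ --- the remainder contributing $O(\ve^2)$, since $\iint\pi(\de v)\pi(\de v')\,\sigma\sqrt{f^\ve(f^\ve)'}\le\int\pi(\de v)\,\lambda f^\ve$ has $(t,x)$-integral bounded uniformly in $\ve$ by the basic entropy inequality --- reduces $B_\ve$ to $\int\sigma\sqrt{f^\ve(f^\ve)'}\,\big(\nabla_x\phi\cdot(\omega(v')-\omega(v))\big)^2$; writing $\sqrt{f^\ve(f^\ve)'}=u_\ve u_\ve'$ and using the same homogenization (decomposition of $u_\ve$ and \eqref{ineq3}), this tends to $2\int_0^T\!\de t\int\!\de x\,\rho\,\nabla_x\phi\cdot D_\omega\nabla_x\phi$, with $D_\omega:=\tfrac12\iint\pi(\de v)\pi(\de v')\,\sigma\,(\omega'-\omega)\otimes(\omega'-\omega)$. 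Adding the two limits and --- under alternative (ii) --- then letting $n\to\infty$, using $\pi(b_i\omega_{n,j})\to\pi(b_i\xi_j)=D_{ij}$ and $D_{\omega_n}=\pi\big(\omega_n\otimes(-\mc L)\omega_n\big)\to D$ (the latter exactly as in the proof of \eqref{D2}, both facts following from $\omega_n\to\xi$, $a_n\to b/\lambda$ in $L^2(\tilde\pi)$ and $-\mc L\xi=b$), we obtain
\begin{equation*}
  \varliminf_{\ve\to 0}\frac 1{\ve^2}\mc R(f^\ve,\eta^\ve)\;\ge\;
  -2\int_0^T\!\de t\int\!\de x\,\rho\,\big(\nabla_x\cdot D\nabla_x\phi
  +\nabla_x\phi\cdot D\nabla_x\phi\big),
\end{equation*}
which coincides with the displayed right hand side by the identity $e^{-\phi}\,\nabla_x\cdot D\nabla_x e^{\phi}=\nabla_x\cdot D\nabla_x\phi+\nabla_x\phi\cdot D\nabla_x\phi$. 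Optimizing over $\phi$ completes the proof.

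The hard part will be the pair of homogenization limits for $A_\ve$ and $B_\ve$: one must pass to the limit in integrals of $f^\ve$, respectively $\sqrt{f^\ve(f^\ve)'}$, against $v$-dependent weights that are unbounded (through $b$) and additionally weighted by the possibly degenerate kernel $\sigma$, while keeping every error estimate uniform with respect to the successive limits $\ve\to 0$, $M\to\infty$ (cut-off of $b$), $\delta\to 0$ (the region $\{\lambda<\delta\}$) and $n\to\infty$ (truncation of $\xi$). All the needed quantitative ingredients are, however, already at hand --- the bounds of Lemma~\ref{t:l4}, the $L^\infty$-in-time estimate \eqref{ineq4}, and the truncation arguments in the proof of Lemma~\ref{teo:3.2}.
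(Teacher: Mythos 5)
Your proposal is correct and follows essentially the same route as the paper: the same test function $\zeta=\ve\,\nabla\phi\cdot(\omega(v')-\omega(v))$ (with $\omega=\omega_n$ under alternative (ii), $\omega=(-\mc L)^{-1}b$ under (i)) in the variational/Legendre bound for $\mc R$, the same use of the balance equation and antisymmetry for the linear term, the same Taylor expansion of $\varch$ for the quadratic term, the same homogenization estimates from Lemma~\ref{t:l4} and the proof of Lemma~\ref{teo:3.2}, and the same final optimization via \eqref{varD-calore}. The only deviations are cosmetic (keeping $2\sigma\sqrt{f f'}$ rather than choosing $\alpha=1$, and a cut-off of $b$ in place of the paper's Cauchy--Schwarz with \eqref{ineq2}), and they do not change the argument.
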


\begin{proof}
Assume first that condition (ii) in Assumption \ref{assumpt3} holds.
  Let $a_n$ and $\omega_n$ as in the previous lemma, and fix $\phi\colon (0,T)\times \bb T^d\to \bb R$ with compact support. In the variational formula \eqref{varR} we choose $\alpha=1$ and
$\zeta(t,x,v,v') = \epsilon \, \nabla \phi(t,x)\cdot \big(
  \omega_n(v')-\omega_n(v)\big)$,
  %antisymmetric with respect to the exchange of $v,v'$
then, by the antisymmetry of $\eta^\ve$ with respect to the exchange of $v,v'$,
\begin{equation*}
 \begin{split}
  \frac 1 {\ve^2} \mathcal R (f^\ve,\eta^\ve)\geq &
  -\frac 2{\epsilon}\int_0^T\! \de t\int\!\de x \, \nabla \phi(t,x) \cdot 
    \iint\!\pi(dv)\pi(dv') \eta^\epsilon(t,x,v,v')
    \omega_n(v)\\
    &-\frac 2 {\ve^2}   \int_0^T\! \de t\int\!\de x \, 
    \iint\!\pi(dv)\pi(dv')
    f^\ve(t,x,v)\sigma(v,v')\\
    & \qquad \times \big\{\varch \big(\epsilon \, \nabla \phi(t,x)\cdot \big(
  \omega_n(v')-\omega_n(v)\big)\big)-1  \big\}
 \end{split}
\end{equation*}
By the balance equation \eqref{constr},
\begin{equation*}
    \begin{split}
    &-\frac 2{\epsilon}\int_0^T\! \de t\int\!\de x \, \nabla \phi(t,x) \cdot 
    \iint\!\pi(dv)\pi(dv') \eta^\epsilon(t,x,v,v')
    \omega_n(v)\\
      &\quad = 
      -2 \epsilon  
      \int_0^T\!\de t\int\!\de x \, \partial_t\nabla \phi\cdot
      \int\!\de \pi f^\epsilon
      \,  \omega_n
      %\\
      %&\qquad\quad 
      - 2 \int_0^T\!\de t\int\!\de x\int\!\de \pi f^\epsilon
      \, \nabla\cdot \big[ \omega_n \cdot \nabla \phi \, b \big]. 
  \end{split}    
  \end{equation*}
Since $\omega_n$ is bounded, the first term on the right hand side
above vanishes as $\epsilon\to 0$. Therefore, by Taylor expansion of $\varch$,
\begin{equation*}
  \begin{split}
    &\varliminf_{\ve\to 0}\frac 1{\ve^2}
    \mathcal R(f^\ve,\eta^\ve)
    \\
    &\geq - \varlimsup_{\ve\to 0} 2 \int_0^T\!\de t\!\int\!\de x\!\int\!\pi(\de v)
    \,f^\epsilon(t,x,v) \nabla\cdot \big( \omega_n(v) \cdot \nabla
    \phi(t,x) \, b(v) \big)
    \\
    &%\qquad\;\;
    - \varlimsup_{\ve\to 0} \int_0^T\!\de t\!\int\!\de x\!\iint\!\pi(\de v)\pi(\de v')
    \,f^\epsilon(t,x,v) \sigma(v,v')%\\
     %&\qquad\qquad\qquad\qquad \times 
     \Big[ \nabla\phi(t,x) \cdot\big(\omega_n(v')-\omega_n(v)\big)\Big]^2.
   \end{split}
  \end{equation*}

We will show that 
\begin{equation}
  \label{lim1}
  \begin{split}
   \varlimsup_{n\to\infty} \varlimsup_{\ve\to 0} 2 \int_0^T\!\de t\int\!\de x\int\!\de \pi
  f^\epsilon \nabla\cdot \big[ \omega_n \cdot \nabla
  \phi \, b \big]
  \le 
  2\int_0^T\!dt\int\!dx\, \rho \nabla \cdot \big[D \nabla\phi\big]
\end{split}
\end{equation}
and
\begin{equation}
  \label{lim2}
\begin{split}
  &\varlimsup_{n\to\infty} \varlimsup_{\ve\to 0} 
  \int_0^T\!\de t\!\int\!\de x\!\int\!\pi(\de v)\pi(\de v')
  \,f^\epsilon(t,x,v) \sigma(v,v')
  %\\
  %&\qquad\qquad\qquad\qquad \times 
  \Big[ \nabla\phi(t,x)
  \cdot\big(\omega_n(v')-\omega_n(v)\big)\Big]^2
  \\
  &\qquad\qquad \le 2\int_0^T\!\de t\int\!\de x\, \rho(t,x) \nabla\phi(t,x)\cdot D
  \nabla\phi(t,x). 
\end{split}
\end{equation}
 Then, by optimizing over $\phi$ and using the variational representation
 \eqref{varD}, the statement follows.
 
 Assume now that condition (i) in Assumption \ref{assumpt3} holds. Then we choose as test functions $\alpha=1$ and  $\zeta(t,x,v,v')=\ve\nabla\phi(t,x)\cdot (-\mc L)^{-1}\big(b(v')- b(v)\big)$, 
 with a smooth $\phi\colon (0,T)\times\mathbb T^d\to\mathbb R^d$ with compact support. Using the fact that $(\mc L)^{-1}b$ is bounded we repeat the same arguments as above and therefore we have to show that  
\begin{equation}
  \label{lim1b}
  \begin{split}
   %&
  \varlimsup_{\ve\to 0} 2 \int_0^T\!\de t\int\!\de x\int\!\de \pi
  f^\epsilon \nabla\cdot \big[ (-\mc L)^{-1}b \cdot \nabla
  \phi \,b \big]
  %\\
  %&\qquad
  \le 
  2\int_0^T\!\de t\int\!\de x\, \rho \nabla \cdot \big[D \nabla\phi\big]
\end{split}
\end{equation}
and
\begin{equation}
  \label{lim2b}
\begin{split}
  &\varlimsup_{\ve\to 0} 
  \int_0^T\!\de t\!\int\!\de x\!\iint\!\pi(\de v)\pi(\de v')
  \,f^\epsilon(t,x,v) \sigma(v,v')
  %\\
  %&\qquad\qquad\qquad\qquad \times 
  \Big[ \nabla\phi(t,x)
  \cdot(-\mc L)^{-1}\big(b(v')-b(v)\big)\Big]^2
  \\
  &\qquad\qquad \le 2\int_0^T\!\de t\int\!\de x\, \rho(t,x) \nabla\phi(t,x)\cdot D
  \nabla\phi(t,x). 
\end{split}
\end{equation}
\smallskip
\noindent \emph{Proof of \eqref{lim1}}.
We claim that for each $n$
\begin{equation*}
  \lim_{\epsilon\to 0} 
  \int_0^T\!\de t\int\!\de x\int\!\de \pi
  \big[ f^\epsilon -\rho^\epsilon\big] 
  \nabla_\cdot \big[ \omega_n \cdot \nabla
  \phi b \big] =0, 
\end{equation*}
which  is proven exactly as \eqref{3.4} observing that there we used the bound
$|(-\mc L) \omega_n|\le |b|$.
Since $\rho^\epsilon\to \rho$ and, by construction of the sequence $\{\omega_n\}$,  $\lim_{n} \pi( \omega_n \otimes b)=
D$, we then conclude.

\smallskip
\noindent \emph{Proof of \eqref{lim2}}.
We first show that for each $n$
\begin{equation*}
  \begin{split}
  &\lim_{\ve\to 0} 
  \int_0^T\!\de t\!\int\!\de x\!\int\!\pi(\de v)\pi(\de v')
  \,\big[ f^\epsilon(t,x,v) -\rho^\epsilon(t,x)\big] \sigma(v,v')
  \\
  &\qquad\qquad\qquad\qquad \times 
  \Big[\nabla\phi(t,x)
  \cdot\big(\omega_n(v')-\omega_n(v)\big)\Big]^2 = 0.
  \end{split}
\end{equation*}
Since $\omega_n$ is bounded and $\lambda(v)=\int\!\pi(dv')
\sigma(v,v')$, it is enough to prove that 
\begin{equation*}
  \lim_{\ve\to 0} 
  \int_0^T\!\de t\!\int\!\de x\!\int\!\de\pi \lambda 
  \,\big| f^\epsilon -\rho^\epsilon\big| = 0,
\end{equation*}
whose proof is achieved by the same arguments used in the proof of \eqref{D2}.
We then conclude by observing that
\begin{equation*}\begin{split}
&\lim_{n\to\infty} \iint\!\pi(\de v)\pi(\de v')\sigma(v,v')\big(\omega_n(v')-\omega_n(v) \big)\otimes \big(\omega_n(v')-\omega_n(v) \big)\\
&\qquad =\lim_{n\to\infty}2\pi\big(\omega_n\otimes (-\mathcal L)\omega_n  \big)=2D.
\end{split}\end{equation*}

\smallskip
\noindent 
\emph{Proofs of \eqref{lim1b} and \eqref{lim2b}}. These are achieved as the proofs of \eqref{lim1} and \eqref{lim2}  with $(-\mc L)^{-1}b$ instead of $\omega_n$. 

\end{proof}

\section{Specific examples}
We consider three examples of linear Boltzmann equations and we show that they meet the requirements in the Assumptions \ref{t:asb}, \ref{assumpt1} and \ref{assumpt3}. 
For all of them the convergence to a diffusion is a classical result, 
therefore they are  suitable testers for the machinery. 
We emphasizes however that we do not  require  the initial distribution to be  in $L^2$, as it is usual in the classical approaches, 
but only with finite entropy with respect to the reference measure.

\subsection{Boltzmann-Grad limit for the Lorentz gas with hard scatterers}
The first example is the linear Boltzmann equation derived for the one particle distribution in \cite{Ga}, starting from the  Lorentz gas moving in a random array of 
fixed scatterers (hard spheres), in the  Boltzmann-Grad limit. Since collisions are elastic, the kinetic energy is preserved, therefore the phase space is $\mathbb T^d\times S_{|v|}^{d-1}$, 
$d\geq 2$,
where $S_{|v|}^{d-1}$ is the 
$d$-dimensional sphere with radius $|v|$. Without loss of generality, we assume $|v|=1$.
The equation then reads
\begin{equation}\label{ex1}
\partial_t f(t,x,v)+v\cdot \nabla_x f(t,x,v)= \int_{S^{d-1}} \de \hat n\,[\hat n \cdot v]_+ \big[ f(t,x,v')-f(t,x,v)\big],
\end{equation}
where $v'=v-2(v\cdot \hat n)\hat n$. The invariant measure  is the uniform measure $\de \hat n$ on $S^{d-1}$, the scattering rate $\lambda$ is equal to $c$, for some constant  $c$ 
depending on the dimension $d$.
In order to identify the scattering kernel, we consider here the case $d=2$, referring to Appendix \ref{s:A} for analogous computations if $d\geq 3$.   
By identifying the velocity $v\in S^1$ with  the angle $\theta$, 
we rewrite the previous equation as
\begin{equation*}
 \partial_t f(t,x,\theta)+b(\theta)\cdot \nabla_x f(t,x,\theta)=\frac 1 2  \int_{S^1} \de \theta'\,\Big|\sin\frac {\theta-\theta'}2 \Big|\big[ f(t, x,\theta')-f(t,x,\theta)\big],
\end{equation*}
with $b(\theta)=(\cos \theta,\, \sin\theta)$. In particular, the scattering kernel is  $\sigma(\theta,\theta')=\Big|\sin\frac {\theta-\theta'}2 \Big|$.  
Recalling the definition \eqref{tpi} of $\tilde\pi$ , the operator $K$ with kernel $\sigma/\lambda$ is compact in $L^2(\tilde\pi )$. Since $1$ is a simple 
eigenvalue of $K$, then  the modified chain has spectral gap. Moreover, as shown in \cite{BNPP}, Lemma 4.1, $L^{-1}v$ is bounded.
Hence Assumptions \ref{t:asb}, \ref{assumpt1} and the alternative (i) of Assumption \ref{assumpt3} hold.

\subsection{Rayleigh-Boltzmann equation}
The Rayleigh-Boltzmann equation, also known as linear Boltzmann
equation or Lorentz-Boltzmann equation, has been derived in the
Boltzmann-Grad limit by looking at the distribution of a tracer
particle in a gas of particles (hard spheres) in thermal equilibrium
\cite{vBLLS}. The velocity space is then $\mathbb R^d$, $d\geq 2$, and
the reference measure is the Maxwellian distribution with temperature
$\beta^{-1}$, whose density with respect to the Lebesgue measure is
denoted by $h_\beta$.  The equation reads
\begin{equation}\label{ex3}\begin{split}
  &\partial_t f(t,x,v)+v\cdot \nabla_x f(t,x,v)\\
  &\qquad= \int_{\mathbb R^d}\!\de v_1\,
  h_\beta(v_1)\int_{S^{d-1}}\!\de\hat n \,\big[\hat n\cdot(v-v_1)\big]_+\big[f(t,x,v')-f(t,x,v)  \big],
\end{split}\end{equation}
where $v'= v-\hat n\cdot(v-v_1)\,\hat n$. As shown in the Appendix \ref{s:A},
he scattering rate is
$$
\lambda(v)=\chi\int_{\mathbb R^d}\de v_1 h_\beta(v_1)|v-v_1|,
$$
where $\chi$ is the constant given by $\chi=\int_{S^{d-1}}\!\de \hat n [\hat n\cdot\hat v]_+$, in which $\hat v\in S^{d-1}$.  
In particular $\lambda$ is bounded away from $0$ and it has linear growth for 
large  $|v|$.
Therefore  $|v|$ and $|v|^2/\lambda(v)$ have all the exponential moments with respect to $h_{\beta}(v)\de v$.
In Appendix \ref{s:A} we identify the scattering kernel $\sigma$, see  \eqref{sfuturo}. From this expression and the properties of $\lambda$,
recalling the definition \eqref{tpi},
it follows that $K$ in \eqref{def:K}
has a kernel in  $L^2(\tilde\pi\times\tilde \pi)$.
Hence $K$ is compact in $L^2(\tilde\pi)$. Since $1$ is simple eigenvalue of $K$, then $(\id-K)$ has spectral gap. 
The previous statements imply that Assumptions \ref{t:asb} and \ref{assumpt1} hold.
The proof of alternative (i) in Assumption \ref{assumpt3} is the content of Appendix \ref{s:A}.

\subsection{Linear phonon Boltzmann equation}
The equation has been derived in the kinetic limit  starting from an harmonic chain of oscillator perturbed by a stochastic conservative noise \cite{BOS}. 
It describes  the evolution
of the energy density of the normal modes, or phonons, identified by a wave-number $k\in\mathbb T^d$. The velocity space is then $\mathbb T^d$. 
Let $\omega$ be the dispersion relation of the harmonic lattice, 
i.e. $\omega(k)=\big(\nu +4\sum_{i=1}^d\sin^2(\pi k_i) \big)^{1/2}$, where $\nu\geq 0$ is the intensity of the pinning.
A phonon with wave-number $k$ travels with velocity $\frac 1 {2\pi}\nabla\omega$, then it is scattered.
The corresponding Fokker-Planck equation reads
\begin{equation}\label{ex2}
 \partial_t f(t,x,k)+\frac 1 {2\pi}\nabla\omega(k)\cdot \nabla_x f(t,x,k)=\int_{\mathbb T^d}\!\de k'\,\sigma(k,k')\big[f(t,x,k') -f(t,x,k) \big],
\end{equation}
where the scattering kernel has the form $\sigma(k, k')=
\sum_{i=1}^d \sin^2(\pi k_i)\sin^2(\pi k'_i)$ for $d\geq 2$.
In dimension one it has a slightly different form, but despite the details the main features are that $\sigma$ is positive, bounded and symmetric in the exchange $k, k'$.
Then the invariant measure $\pi$ is the Haar measure on the torus.
Moreover $\sigma$ vanishes in zero, since  $\sigma(k,k')\sim |k|^2$ for small $k$, and the scattering rate $\lambda$ has the same behavior. 
More precisely, $\lambda=c \sum_{i=1}^d \sin^2(\pi k_i)$, for some constant $c>0$.
Since $\partial_i \omega(k)= 2\pi\sin(2\pi k_i)/\omega(k)$, $i=1,\dots,d$,
in order to guarantee that $|\nabla\omega|^2/\lambda$ has exponential moments we have to restrict to the case $\nu>0$.
This corresponds to assume that in the underlying harmonic chain the translational symmetry is broken,
due to the presence of a on-site potential (pinning).
Recalling the definition \eqref{tpi} of $\tilde\pi$, 
by the properties of $\sigma$ and $\lambda$ we deduce that the operator $K$ defined in \eqref{def:K} has a kernel of the form $p(k,k')\tilde\pi(dk')$, with $p$ strictly positive and bounded.
Then $K$ is a compact operator in $L^2(\tilde\pi)$, and since  $1$ is a simple eigenvalue, than the modified chain has spectral gap. The previous statements imply that Assumptions 
\ref{t:asb} and \ref{assumpt1} hold.
Finally, since the modified chain  satisfies the Doeblin condition, then for each $f\in L^\infty(\mathbb T^d)$ such that $\tilde\pi(f)=0$ we have 
$\|\sum_{n\geq 0} K^n f\|_{\infty}\leq c\|f\|_{\infty}$, which implies 
alternative (ii) in Assumption \ref{assumpt3}.

In the unpinned case $\nu=0$ the diffusion coefficient $D$ diverges in dimension $d=1,2$. In these cases the asymptotics of the linear phonon Boltzmann equation is in fact a superdiffusion when $d=1$ \cite{JKO, BaBo} and a diffusion under an anomalous scaling,
i.e. with  logarithmic  corrections,  when $d=2$ \cite{Ba}; see also \cite{MMM} for other models with super-diffusive behavior.
On the other hand, for $d\geq 3$ the diffusion coefficient $D$ is finite even if $|b|^2/\lambda$ does not have exponential moments.
Moreover, as discussed in \cite{Ba}, if the initial distribution satisfies suitable integrability conditions, the diffusive scaling holds. As the gradient flow approach here introduced requires only an entropy bound on the initial condition, it does not cover this case. It is not clear if this is just a limitation of the present approach or the diffusive limit fails if the integrability conditions are not satisfied. Indeed, phonons with small wave number are responsible of the ballistic transport which, in dimension one and two, induces the superdiffusion.
If the initial conditions gives enough weight to those phonons, similar effects might occur also for $d\geq 3$.

\appendix
\section{Entropy balance}\label{app2}
We here prove \eqref{jj1t}.
We can assume that its left hand side  is finite. Using
also that $P_r$ has bounded entropy for each $r\in[s,t]$ we deduce
that $P_r(\de x, \de v) = f_r(x,v) \de x \, \pi(\de v)$ and 
$\Theta_{[s,t]} (\de r, \de x, \de v, \de v') = \eta_r(x,v,v') \de r
\, \de x\, \pi(\de v) \pi(\de v')$. Moreover, recalling the function
$\Psi_{\varkappa}$  in \eqref{legpsi},
\begin{equation*}
  \begin{split}
    &  \int_s^t \!\de r\, \mc E(P_r) = 
    \int_s^t \!\de r \int\!\de x \iint \!\pi(\de v)\pi(\de v')\,
    \sigma(v,v') \Big[ \sqrt{f_r(x,v')} -\sqrt{f_r(x,v)}\Big]^2 \\
    & \mc R^{s,t} (P, \Theta_{[s,t]})
     = 
    \int_s^t \!\de r \int\!\de x \iint \!\pi(\de v)\pi(\de v')\,
    \Psi_{\sigma(v,v')} \big( f_r(x,v),f_r(x,v');  \eta_r(x,v,v')
    \big) 
  \end{split}
\end{equation*}
and
\begin{equation}\label{intrep}
\int_s^t \!\de r\, \mc E(P_r) +  \mc R^{s,t} (P, \Theta_{[s,t]}) < +\infty.
\end{equation}  

We claim that, for $f$ and $\eta$ as above, the following entropy
balance holds
\begin{equation}\begin{split}
  \label{entbal}
 & \mc H(P_t) -\mc H(P_{s'})\\ & = \frac 12 
  \int_{s'}^t \!\de r\! \int\!\de x\! \iint \!\pi(\de v)\pi(\de v')\, 
  \eta_r(x,v,v') \big[ \log f_r(x,v')- \log f_r(x,v) \big]
\end{split}\end{equation}
for any $0\leq s< s'<t\leq  T$.
Note  that the last term is well defined by Legendre duality and
\eqref{intrep}.  Informally, it is deduced by
choosing the test function $\id_{ [s',t]}(r)\log f_r(x,v)$ in the balance equation
\eqref{beq}. The actual proof is carried out by a truncation argument
that is next detailed.

\smallskip
\noindent\emph{Step 1. Approximation by space time convolutions.}
%Extend  $ [s,t]\ni r\mapsto f_r$ and  $[s,t]\ni r\mapsto \eta_r$ to functions defined on
%% the whole real line by setting $f_r = f_s$ for $r<s$, $f_r= f_t$ for
%% $r>t$ and $\eta_r =0$ for $r\not\in [s,t]$. Observe that the balance
%equation \eqref{beq} still holds for these extended functions.
For
$n\in\bb N$ let now $\chi_n \colon \bb R \to  \bb R_+$ be a smooth approximation of the identity with compact support contained in the positive axis, and 
$g_n\colon \bb T^d\to \bb R_+$ be a smooth approximation of the identity.
For $0\leq s< s'\leq r \leq t \leq T$, by choosing $n$ such that the $\mathrm{supp}\chi_n \subset [0,s'-s]$,
 we define    
\begin{equation*}
  \begin{split}
    f^n_r(x,v) := & \int\!\de r' \int \de y \, \chi_n(r-r') g_n(x-y)
    f_{r'}(y,v) \\
    \eta^n_r(x,v,v') := &\int\!\de r' \int \de y \, \chi_n(r-r') g_n(x-y)
    \eta_{r'}(y,v,v'). 
  \end{split}
\end{equation*}
As simple to check, the pair $(f^n,\eta^n)$ satisfies the balance
equation and, by \eqref{intrep} and convexity, there exists a constant
$C$ such that 
\begin{equation*}
  \begin{split}
    &\int_{s'}^t \!\de r \int\!\de x \iint \!\pi(\de v)\pi(\de v')\,
    \sigma(v,v') \Big[ \sqrt{f^n_r(x,v')} -\sqrt{f^n_r(x,v)}\Big]^2 
    \\
    &\qquad +
     \int_{s'}^t \!\de r \int\!\de x \iint \!\pi(\de v)\pi(\de v')\,
    \Psi_{\sigma(v,v')} \big( f^n_r(x,v),f^n_r(x,v');  \eta^n_r(x,v,v')
    \big)   \le C,
  \end{split}
\end{equation*}
and
\begin{equation*}
\sup_{r\in[s',t]} \mc H(f^n_r)\leq \sup_{r\in[0,T]} \mc H(f_r)\leq C.
\end{equation*}

\smallskip
\noindent\emph{Step 2. Truncation of $\log$.}
The balance equation \eqref{beq} implies 
\begin{equation}\begin{split}\label{beqt0}
  &  \int\! \de x \int \pi(\de v) f_t^n(x,v) \phi(t,x,v)-\int\! \de x \int \pi(\de v) f_{s'}^n(x,v) \phi(s',x,v)
    \\ &-\int_{s'}^t \! \de r \int \! \de x \int \pi(\de v)
    f_r^n(x,v)
\big\{
    \partial_r \phi(r,x,v)
    %-\int_{s'}^t\! \de r \!\int\! \de x\! \int \pi(\de v) f_r^n(x,v)
    +b(v)\cdot \nabla \phi(r,x,v)
\big\}
    \\
 & =\frac 1 2 \int_{s'}^t\! \de r \int\! \de x\iint\! \pi(\de v)\pi (\de v')\eta^n(r,x,v,v')\big[\phi(r,x,v')- \phi(r,x,v) \big]
\end{split}\end{equation}  
for  all continuous functions
$\phi\colon [s',t]\times \bb T^d\times \mc V$ with compact support in $\mc V$ and continuously differentiable in  the first two variables. Recalling that $b$ has exponential moments, since $f^n$ has bounded entropy and $\eta^n\in L^1$ we can use $\phi$ bounded instead of compactly supported.

Given $0<\delta <L$ set
\begin{equation}\label{def:tlog}
  \log_{\delta,\, L}(u)= \begin{cases}
    \log \delta \quad \mbox{if } 0<u<\delta\\
    \log u \quad \mbox{if } \delta \leq  u \leq L\\
    \log L   \quad \mbox{if } u>L.
    \end{cases}
\end{equation}
By a straightforward approximation  we can choose as test function in \eqref{beqt0} $\phi=\log_{\delta,\, L}(f^n)$, obtaining
\begin{equation}\begin{split}\label{beqt}
    &  \int\! \de x \int \pi(\de v) f_t^n(x,v) \log_{\delta,\, L}(f^n_t(x,v))
    -\int\! \de x \int \pi(\de v) f_{s'}^n(x,v) \log_{\delta,\, L}(f^n_{s'}(x,v))
    \\ &-\int_{s'}^t \! \de r \int \! \de x \int \pi(\de v) \id_{[\delta,L]}(f^n_r(x,v))
\big\{
    \partial_r f^n_r(x,v)
    %-\int_{s'}^t\! \de r \!\int\! \de x\! \int \pi(\de v) \id_{[\delta,L]}(f^n_r(x,v))
    +b(v)\cdot \nabla f^n_r(x,v)
\big\}
    \\
    & =\frac 1 2 \int_{s'}^t\! \de r \int\! \de x\iint\! \pi(\de v)\pi (\de v')\eta^n(r,x,v,v')
    \big[\log_{\delta,\, L}(f^n_r(x,v))- \log_{\delta,\, L}(f^n_r(x,v')) \big].
\end{split}\end{equation}
We observe that
\begin{equation}\label{tra}\begin{split}
& \int_{s'}^t \! \de r \int \! \de x \int \pi(\de v) \id_{[\delta,L]}(f^n_r(x,v))
\big\{
    \partial_r f^n_r(x,v)
    %-\int_{s'}^t\! \de r \!\int\! \de x\! \int \pi(\de v) \id_{[\delta,L]}(f^n_r(x,v))
    +b(v)\cdot \nabla f^n_r(x,v)
    \big\}\\
    &   =\int \! \de x \int \pi(\de v) \big(f^n_t(x,v)\wedge \delta  \big)\vee L
    -\int \! \de x \int \pi(\de v) \big(f^n_{s'}(x,v)\wedge \delta  \big)\vee L.
    \end{split}\end{equation}  

\smallskip
\noindent\emph{Step 3. Removing the convolution.}
Since $\log_{\delta, L}$ is bounded, by dominated convergence we can remove regularization in space and time and we obtain
\begin{equation}\begin{split}\label{beqt1}
    &  \int\! \de x \int \pi(\de v) f_t(x,v) \log_{\delta,\, L}(f_t(x,v))
    -\int\! \de x \int \pi(\de v) f_{s'}(x,v) \log_{\delta,\, L}(f_{s'}(x,v))
    \\
 &- \int \! \de x \int \pi(\de v) \big(f_t(x,v)\wedge \delta  \big)\vee L
    +\int \! \de x \int \pi(\de v) \big(f_{s'}(x,v)\wedge \delta  \big)\vee L.
    \\
    & =\frac 1 2 \int_{s'}^t\! \de r \int\! \de x\iint\! \pi(\de v)\pi (\de v')\eta(r,x,v,v')
    \big[\log_{\delta,\, L}(f_r(x,v))- \log_{\delta,\, L}(f_r(x,v')) \big].
\end{split}\end{equation}

\smallskip
\noindent\emph{Step 3. Removing the truncation of $\log $.}

Here we take the limit $\delta\downarrow 0$ and $L\uparrow +\infty$ in \eqref{beqt1}.
For the left hand side this is accomplished by monotone convergence, in particular it converges
to $\mc H(P_t) - \mc H(P_{s'})$. For the right hand side, it is enough to show that
\begin{equation}\label{van}\begin{split}
    %\lim_{\delta, L }
    \frac 1 2 \int_{s'}^t\! \de r \int\! \de x\iint\! \pi(\de v)\pi (\de v')\eta(r,x,v,v')
    \Big\{ \big[\log_{\delta,\, L}(f_r(x,v))- \log(f_r(x,v)) \big]\\
    - \big[\log_{\delta,\, L}(f_r(x,v'))- \log(f_r(x,v')) \big] \Big\}
\end{split}\end{equation}
vanishes as $\delta \downarrow 0$ and $L\uparrow +\infty$.
We apply  Young inequality in the form
\begin{equation*}
p q \leq \psi_{\alpha}(p) +\psi_{\alpha}^*(q),
\end{equation*}  
where, for $\alpha\geq 0$,
\begin{equation*}
  \psi_\alpha(p)= p\varash \frac p \alpha - \sqrt{p^2 +\alpha^2} +\alpha,\quad
  \psi^*_\alpha(q)= \alpha \big(\cosh  q -1\big).
\end{equation*}
Observe that $\psi_\alpha$ and $\psi_\alpha^*$ are even.
By choosing $\alpha=2\sigma(v,v')\sqrt{f_r(x,v)f_r(x,v')}$, $p=\eta_r(x,v,v')$ and  $q=\frac 12 \big[\log_{\delta,\, L}(f_r(x,v))- \log(f_r(x,v)) 
  - \log_{\delta,\, L}(f_r(x,v'))+ \log(f_r(x,v')) \big]$, the first term is
\begin{equation*}\begin{split}
& \int_{s'}^t\! \de r \int\! \de x\iint\! \pi(\de v)\pi (\de v')
\psi_{\alpha}(\eta)\big[1-\id_{[\delta, L]}(f_r(x,v)) \big]\\
  \leq
&  \int_{s'}^t \!\de r \int\!\de x \iint \!\pi(\de v)\pi(\de v')\,
    \Psi_{\sigma(v,v')} \big( f_r(x,v),f_r(x,v');  \eta_r(x,v,v')
    \big),
\end{split}\end{equation*}  
which vanishes by dominated convergence since $\mathcal R (f,\eta)$ is finite. 
The second term has the following expression
\begin{equation*}\begin{split}
  & \int_{s'}^t\! \de r \int\! \de x\iint\! \pi(\de v)\pi (\de v') \psi_\alpha^*(q)\\
    &= \int_{s'}^t\! \de r \int\! \de x\iint\! \pi(\de v)\pi (\de v')
    \big[\id_{[0, \delta)}(f_r(x,v))\id_{[0, \delta)}(f_r(x,v'))\\
       &   \quad  + \id_{(L, +\infty)}(f_r(x,v))\id_{(L, +\infty)}(f_r(x,v'))   \big]
        \sigma(v, v')\big(\sqrt {f_r(x,v)}- \sqrt {f_r(x,v')}    \big)^2\\
        & +2\int_{s'}^t\! \de r \int\! \de x\iint\! \pi(\de v)\pi (\de v')
        \id_{[0, \delta)}(f_r(x,v))\id_{[\delta, L]}(f_r(x,v'))\\
          & \quad\times
          \sigma(v,v')\sqrt{f_r(x,v)f_r(x,v')}\Big( \frac {\sqrt \delta} {\sqrt{f_r(x,v)}}+\frac{\sqrt{f_r(x,v)}}{\sqrt \delta}-2  \Big)\\
        & +2\int_{s'}^t\! \de r \int\! \de x\iint\! \pi(\de v)\pi (\de v')
        \id_{[0, \delta)}(f_r(x,v))\id_{(L, +\infty)}(f_r(x,v'))\\
          & \quad\times
          \sigma(v,v')\sqrt{f_r(x,v)f_r(x,v')}
          \Big( \frac {\sqrt{L f_r(x,v)}}{\sqrt{\delta  f_r(x,v')}}
          +\frac {\sqrt{\delta f_r(x,v')}}{\sqrt{L  f_r(x,v)}}-2  \Big)\\
          & + 2 \int_{s'}^t\! \de r \int\! \de x\iint\! \pi(\de v)\pi (\de v')
        \id_{[\delta, L]}(f_r(x,v))\id_{(L, +\infty)}(f_r(x,v'))\\
        & \quad\times  \sigma(v,v')\sqrt{f_r(x,v)f_r(x,v')}
        \Big( \frac {\sqrt L} {\sqrt{f_r(x,v')}}+\frac{\sqrt{f_r(x,v')}}{\sqrt L}-2  \Big)
          \end{split}\end{equation*}    
We observe that  the first integral on the right hand side vanishes as $\delta\downarrow 0$, $L\uparrow +\infty$ since $\int_{s'}^t \de r\,  \mathcal E(f_r) < +\infty$. The term in the second integral
is bounded by $\sigma(v,v')\sqrt{\delta}\sqrt{f_r(x,v')}$ and the term in the third is bounded by $\sigma(v,v')\sqrt{\delta/L}f_r(x,v') $,  then the two integrals vanish as $\delta\downarrow 0$, $L\uparrow +\infty$ since the scattering rate $\lambda$ has all exponential moments and $f_r$ has finite entropy. Finally,  the term in the last integral in bounded by $\sigma(v,v')f_r(x,v')\id_{(L, +\infty)}(f_r(x,v'))$, which vanishes by dominated convergence.

\smallskip
Now we show that \eqref{jj1t} holds. For $s'>s\geq 0$ it  follows
 by applying again  Young inequality with $p=-\eta_r(x,v,v')$ and
$q=\frac 12 \big[\log (f_r(x,v))- \log(f_r(x,v'))\big]$ with the entropy balance \eqref{entbal}. Finally  the case $s'=s$ is achieved by the lower semi-continuity of $\mathcal H$.

\section{Bounds on $(-\mc L)^{-1}b$ for the Rayleigh gas}\label{s:A}

We identify the scattering kernel $\sigma$ on the right hand side of \eqref{ex3}.
Setting $z = v-v_1$, the collision operator in \eqref{ex3} becomes
\begin{equation}
\label{Lmio}
\mc Lf(v) =
\int_{\mathbb R^d}\de z\, h_\beta(v-z)\int_{S^{d-1}}\de\hat n \,
[\hat n\cdot z]_+\,\{f(v')-f(v)\}
\end{equation}
where 
$$v'= v - (\hat n \cdot z) \, \hat n$$
Fixed $\hat n$, we can write $z = \alpha \hat n + z^\perp$,
where $z^\perp$ 
lies in the hyperplane of dimension $d-1$ orthogonal to $\hat n$.
We indicate with $v^\perp =  v-(\hat n \cdot v)\, \hat n$,
the projection of $v$ on this hyperplane. 
We have
$[\hat n \cdot z]_+ = [\alpha]_+$, $\de z = \de\alpha \, \de z^\perp$, and
$$\mc Lf(v) = \int_{S^{d-1}} \de\hat n
\int \de z^\perp h_\beta^{d-1}(v^\perp-z^\perp)
 \int_0^{+\infty} \de \alpha \,\alpha \,h_\beta^1(v\cdot \hat n - \alpha) 
\{f(v')-f(v)\}
$$
where $h_\beta^k$ is the Maxwellian distribution in dimension $k$,
and  now $v' = v -\alpha \hat n$.
The integral in $\de z^\perp$ gives 1.
Choosing the new variable $w = v - \alpha \hat n$, we have
$\alpha = |v-w|$, $v\cdot \hat n= v\cdot (v-w)/|(v-w)|$, 
$$v\cdot \hat n - \alpha =  v\cdot (v-w)/|(v-w)| - |v-w| = 
w \cdot (v-w)/|v-w|$$
and  $\de w = \alpha^{d-1} \de\alpha\,\de\hat n = |v-w|^{d-1} \de\alpha \,
\de\hat n$.
Then
$$\mc Lf(v) =
\int_{\R^d}\de w 
 h_\beta^1( w\cdot (v-w) /|v-w|) \frac 1{|v-w|^{d-2}}\{f(w)-f(v)\}
$$
which is of the form \eqref{def:L} with 
\begin{equation}\label{sfuturo}\begin{aligned}\sigma(v,w) &=  \frac 1{|v-w|^{d-2}} 
\frac {h_\beta^1( w\cdot (v-w) /|v-w|)}{h_\beta(w)} \\
&= 
\left( \frac {\beta}{2\pi}\right)^{\frac {1-d}2}
\frac 1{|v-w|^{d-2}}  
\exp \left\{ 
\frac {\beta}{2} \frac {|w|^2|v-w|^2 - (w\cdot (v-w))^2}{|v-w|^2}\right\}\\
&= 
\left( \frac {\beta}{2\pi}\right)^{\frac {1-d}2}
\frac 1{|v-w|^{d-2}}  
\exp \left\{ 
\frac {\beta}{2} \frac {|w|^2|v|^2 - (w\cdot v)^2}{|v-w|^2}
\right\}.
\end{aligned}
\end{equation}
which is symmetric.
%
%|w|^2 |v-w|^2 - (w (v-w))^2
%|w|^2 |v-w|^2 - (wv - w^2)^2
%w^2 (v^2 + w^2 - 2vw) - (wv)^2 + 2wv w^2 - w^4
%w^2 v^2  - (wv)^2  
%|v|^2 |v-w| - (v (v-w))^2
We remark that for $d=3$ this expression has been obtained in \cite{LS}.

In order to prove that $\xi = -\mc L^{-1} v$ is bounded, we decompose
$\mc L$ in the gain and loss terms
$$-(\mc Lf)(v) = \lambda(v)  f(v) - (Gf)(v) $$
where 
$$\lambda(v)=(G1)(v) =
\int_{\mathbb R^d}\de w h_\beta(w)\sigma(v,w) =  
\chi\int_{\mathbb R^d}\de v_1 h_\beta(v_1)|v-v_1|
= \lambda(|v|),$$
and $\chi = \int_{S^{d-1}}\! \de\hat n\, [\hat n\cdot \hat v]_+$
for any unit vector $\hat v$. Observe that $(Gf)(v) = \lambda(v) (Kf)(v)$,
with $K$ defined in \eqref{def:K}.
Note that, for convexity, 
\begin{equation}
\label{dislambda}
\lambda(v) \ge \chi |v|.
\end{equation}
We search for a bounded function $\gamma(|v|)$ such that
$\xi (v) = \hat v \gamma(|v|)$.
%solves
%\begin{equation}
%\label{Lgv}
%v = -\mc L\xi (v) =  \hat v \lambda(v)  \gamma(|v|) - (G\xi)(v) 
% \end{equation}
Then we have
%From \eqref{def:L}
\begin{equation*}\begin{split} (G\xi)(v) &=
\int_{\R^d} \de w \, h_\beta(w) \sigma(v,w) 
{\gamma(|w|)} \frac w{|w|} \\
&=\int_{\R^d} \de w \, h_\beta(w) \sigma(v,w) 
\frac {\gamma(|w|)}{|w|}
\left[
(w - (\hat v \cdot w)\, 
\hat v)+ (\hat v \cdot  w )\, \hat v )\right]
\end{split}\end{equation*}
where in the last step we decomposed $w$ into the component along 
$\hat v$ and the orthogonal part
$w^\perp=w-(\hat v \cdot w) \hat v$.
Since $|w|$ and $\sigma(v,w)$ are invariant
in the exchange $w^\perp \to - w^\perp$,
then 
%$$(w\cdot v)^2,\ \ 
%|w|^2 = |w - (\hat v \cdot w)\, \hat v|^2 + |\hat v \cdot w|^2,\ \ 
%|v-w|^2 = |w - (\hat v \cdot w)\, \hat v|^2 + |\hat v \cdot (v-w)|^2
%$$
%Then
%&=
$$
(G\xi)(v) = \hat v \int_{\R^d}
\de w \, h_\beta(w) \sigma(v,w) (\hat v \cdot \hat w) 
\gamma(|w|). %=: \hat v \,(\tilde G\gamma) (|v|),
%\end{aligned}
$$
Since  the integral is invariant under rotations of $v$,
we can define the operator $\tilde G$ acting on functions on the positive
half line by
%\begin{equation*}
%  \begin{split}
$$  (\tilde Gf)(\rho) := 
\int_{\R^d} \de w \,
h_\beta(w) \sigma(\rho \hat v,w)  (\hat v \cdot \hat w)  f(|w|), $$
so that, for $\xi (v) = \hat v \gamma(|v|)$,
$(G\xi)(v) = \hat v (\tilde G\gamma) (|v|)$.
As
$$ (\tilde Gf)(\rho) = 
%
% \int_0^{+\infty} \de r \,g(|v|,r) \gamma(r),$$
\int_{w \cdot \hat v>0} 
\de w \, h_\beta(w) (\sigma(\rho \hat v,w)  -
\sigma(\rho \hat v,-w)) (\hat v \cdot w)
\gamma(|w|)$$%\end{split}\end{equation*}
and, if $w \cdot v>0$, then  $\sigma(v,w) \ge  \sigma(v,-w)$,
the operator $\tilde G$ has positive kernel.

%using that $|v-w|^2\le |v+w|^2$, we have
% $1/|v-w|^{d-2} \ge 1/|v+w|^{d-2}$ and
%$$(|w|^2|v|^2-(w\cdot v)^2) /|v-w|^2 \ge (|w|^2|v|^2-(w\cdot v)^2) /|v+w|^2.$$ 
%moreover
%$$\begin{aligned}
%&(w\cdot (v-w))^2/(v-w)^2 \le (w\cdot (v+w))^2/(v+w)^2\\
%&\iff (v+w)^2 (w\cdot (v-w)^2\le (v-w)^2(w\cdot (v+w))^2\\
%&\iff (w \cdot v) (w \cdot v)^2 + |w|^4) <  (v^2 + w^2) w^2 (w\cdot v) \\
%& \iff 0 <  (w \cdot v) (|w|^2(|v|^2 + |w|^2)  -  (w \cdot v)^2 - |w|^4) =
% (w \cdot v) |w|^2|v|^2 -  (w \cdot v)^2 \end{aligned}$$
%and the last inequality is satified if $w\cdot v > 0$.

Setting 
$\eta(\rho)=\lambda(\rho)\gamma(\rho)$, we look for the solution of the equation
\begin{equation}
\label{Arho=}
 \rho = \eta(\rho) - (A\eta) (\rho), \ \ \ \rho \in \R^+\end{equation}
%\imath = \eta - A\eta\end{equation}
where 
\begin{equation}
\label{AA}
\begin{aligned}
  (A\eta)(\rho) &= \left({\tilde G}\frac {\eta}{\lambda}\right)(\rho) = 
  \int_{\R^d} \de w \, h_\beta(w) \sigma(\rho \hat v,w)
  (\hat v \cdot \hat w)  
\frac {\eta(|w|)}{\lambda(|w|)} \\
&= 
\int_{\R^d}
\de v_1 \, h_\beta(v_1) \int_{S^{d-1}} \de \hat n \,
[\hat n \cdot (\rho \hat v-v_1)]_+ (\hat v \cdot \hat v') 
\frac {\eta(|v'|)}{\lambda(|v'|)}
\end{aligned}
\end{equation}
in which
$v'= \rho \hat v - (\hat n \cdot (\rho \hat v -v_1) ) \hat n$. 
The operator $A$ is self-adjoint 
with respect to the scalar product
$$(f,g) = \int_{\R^d}
\de v \, h_{\beta}(v) \frac 1{\lambda(|v|)} f(|v|)g(|v|),$$
defined for  $f,g:[0,\infty)\to \R$.
From the positivity of the kernel of the operator, 
it follows that if $f(\rho)\ge g(\rho)$ for any $\rho\ge 0$, 
then $(Af)(\rho) \ge (Ag) (\rho)$ for any $\rho\ge 0$. 
Moreover, if $\eta$ is continuous in $[0,\infty)$, $(A\eta)$ is continuous
in $[0,\infty)$
as follow from  \eqref{AA}.

By definition of $\lambda(\rho)$
$$(A\lambda) (\rho) = 
\int_{\R^d} \de w \, h_\beta(w) \sigma(\rho \hat v,w)
(\hat v \cdot \hat w)  
< \lambda(\rho)$$
and the inequality is strict for any $\rho$ 
because $\hat v \cdot \hat w < 1$
in a set of full measure.
Observe that, using the definition of $v'$,
$$ \hat v \cdot \hat v'=
\frac {\rho(1-(\hat n \cdot \hat v)^2) + (\hat n \cdot v_1) (\hat n \cdot
  \hat v)}{\sqrt{\rho^2 (1-(\hat n \cdot \hat v)^2) + (\hat n \cdot v_1)^2}}
$$
which for fixed $v_1$ converges to $\sqrt{1- (\hat n \cdot \hat v)^2}$
when $\rho\to +\infty$, while $[\hat n\cdot (\rho\hat v-v_1)]_+/\rho
\to |\hat n
\cdot \hat v]_+$.
By dominated convergence
$$\lim_{\rho\to +\infty} \frac 1{\rho} (A\lambda)(\rho) = 
\int_{\R^d}\de v_1 \, h_\beta(v_1)
\int_{S^{d-1}} \de \hat n \,
[\hat n \cdot \hat v]_+ \sqrt{1- (\hat n \cdot \hat v)^2} < \chi. 
$$
Since   
$\lim_{\rho \to +\infty} {\lambda(\rho)}/{\rho} = 
\int_{S^{d-1}}\de \hat n \,[\hat n \cdot \hat v]_+  = \chi$,
we 
then conclude that there exists a constant $0<z<1$ such that
$$(A\lambda) (\rho) < z \lambda(\rho)$$
for any $\rho \ge 0$.
%l-al > l-zl > (1-z) l > x(1-z) r
Since $\lambda(\rho) \ge \chi \rho$ (see \eqref{dislambda}),
if $\zeta = \chi (1-z)$, then 
$$\lambda(\rho) \ge \zeta \rho + (A \lambda) (\rho).$$
Denoting by $\mathop{\rm id}:\R^+ \to \R^+$ the identity function ${\rm id}(\rho) = \rho$,
and iterating  the above expression, we get
%$$\lambda (\rho) \ge 
%\zeta \rho + \zeta A \rho + A^2 \lambda (\rho) \ge  \dots \ge 
%\zeta\sum_{k=0}^n A^k \rho + (A^{n+1} \lambda)(\rho)$$
$$\lambda \ge 
\zeta \mathop{\rm id} + \zeta A \mathop{\rm id}
+ A^2 \lambda \ge  \dots \ge 
\zeta\sum_{k=0}^n A^k \mathop{\rm id}
+ A^{n+1} \lambda,$$
which implies that 
$\eta = \sum_{k=0}^{+\infty} A^k \mathop{\rm id}$
is a well defined, positive function, bounded by ${\lambda}/\zeta$.
Since $\eta$ solves \eqref{Arho=}, then 
$\xi(v) = \hat v{\eta(|v|)}/{\lambda(|v|)}$,
which is bounded by $1/\zeta$.

\section*{Acknowledgments}
We are grateful to Mauro Mariani for useful discussions about gradient flows and for his comments on an earlier version of the current manuscript.
L. Bertini acknowledges the support by the PRIN 20155PAWZB
``Large Scale Random Structures''.


\begin{thebibliography}{99}

\bibitem{ADPZ} Adams S., Dirr N.,  Peletier M. A., Zimmer J.;
  { \it From a large-deviations principle to the Wasserstein gradient flow: a new micro-macro passage},
  Communications in Mathematical Physics 307, 3, 791â815, 2011.
  
\bibitem{AGS} Ambrosio L., Gigli N., Savar\'e G.; {\it Gradient flows in metric spaces and in the spaces of probability
   measures},  Lectures in Mathematics, ETH Zurich, BirkhÃ€user, 2005.

\bibitem{BSS} Bardos C., Santos R., Sentis R.; {\it Diffusion approximation and computation of the critical size},
  Trans. Amer. Math. Soc. 284, 617-649, 1984.

\bibitem{Ba} Basile G., {\it From a kinetic equation to a diffusion under an anomalous scaling},  Ann. Inst. H.
Poincare Prob. Stat. 50,  4, 1301â1322, 2014. 

\bibitem{BaBo} Basile G., Bovier A.; {\it Convergence of a kinetic equation to a fractional diffusion equation},
Markov Proc. Rel. Fields 16, 15-44, 2010.
  
\bibitem{BNPP}
Basile G.,  Nota A.,  Pezzotti F.,  Pulvirenti M.; {\it Derivation of the Fick's Law for the Lorentz model in a low density regime},
Commun. Math. Phys. 336, 3, 1607-1636, 2015.
%
\bibitem{BOS}
Basile G.,  Olla S.,  Spohn H.; {\it Energy transport in stochastically
perturbed lattice dynamics},  Arch. Ration. Mech. Anal. 195,
1, 171-203, 2010.
%
  \bibitem{vBLLS} van Beijeren H., Lanford III O. E., Lebowitz J. L., Spohn H.; {\it Equilibrium time correlation functions in the low-density limit},
    J. Stat. Phys. 22, 2, 237-257, 1980.

  \bibitem{BeBr} Benamou J.D., Brenier Y., {\it A computational fluid mechanics solution to the
    Monge-Kantorovich mass transfer problem}, Numer. Math., 84, 375â393, 2000.  
%
\bibitem{BLP} Bensoussan A., Lions J. L., Papanicolaou G.;
  {\it Boundary layers and homogenization of transport processes}, Publ. RIMS, Kyoto Univ.
  15, 53-157, 1979.
  
\bibitem{BFG} Bertini L., Faggionato A., Gabrielli D; {\it Flows, currents, and cycles for Markov chains: large deviation asymptotics}, Stochastic Process. Appl. 125, 7, 2786â2819,  2015.
  
  
\bibitem{BDGJL} 
Bertini L., De Sole A., Gabrielli D., Jona-Lasinio G., Landim C.; {\it Large deviations of the empirical current in interacting particle systems}, Theory Probab. Appl. 51,  1, 2â27, 2007.

\bibitem{BGS-R} Bodineau T., Gallagher I., Saint-Raymond L.; {\it The Brownian motion as the limit of a deterministic system of hard-spheres},
  Inventiones Mathematicae 203, 2, 493â553, 2016.

  \bibitem{DG} Dawson D. A., G\"artner J.;
{\it Large deviations from the McKean-Vlasov limit for weakly interacting diffusions},
Stochastics 20, 4, 247â308, 1987.

\bibitem{DGP} Degond P., Goudon T., Poupaud F.; {\it Diffusion limit for non homogeneous and non-micro-reversible processes}, Indiana University Mathematics Journal, 1175-1198, 2000.

\bibitem{DV} Donsker  M. D.,  Varadhan  S. R. S.; {\it Asymptotic evaluation of certain Markov
process expectations for large time. I. II.}, Comm. Pure Appl. Math. 28, 1â47,
ibid. 28,  279â301, 1975.

\bibitem{Er} Erbar M.; {\it Gradient flows of the entropy for jump processes},
  Ann. Inst. H. PoincarÃ© Probab. Statist. 50, 3, 920-945, 2014.

\bibitem{Er2} Erbar M.; {\it A gradient flow approach to the Boltzmann equationt}, arXiv:1603.00540v2, 2017.   

\bibitem{EP} Esposito R., Pulvirenti M.; {\it From Particles to Fuids},  Hand-Book of Mathematical Fuid Dynamics 
III, pp. 1â82. North-Holland, Amsterdam, 2004.

\bibitem{Ga} Gallavotti G.; {\it GradâBoltzmann limit and Lorentzâs Gas}, In: Statistical Mechanics. A short treatise.
  Appendix 1.A2. Springer, Berlin, 1999.

  

\bibitem{Gi} Gigli N.; {\it On the Heat flow on metric measure spaces: existence, uniqueness and stability}, Calc.
  Var. Part. Diff. Eq. 39, 101â120, 2010.


\bibitem{JKO}  Jara M., Komorowski T.,  Olla S.; {\it A limit theorem for an additive functionals of Markov
  chains}, Annals of Applied Probability 19,  6, 2270-2300, 2009.
  
\bibitem{KO} Kipnis C., Olla S.; {\it Large deviations from the hydrodynamical limit for a
  system of independent Brownian particles},  Stochastics and Stochastic Reports 33, 1-2, 17-25, 1990.

\bibitem{KLO}  Komorowski T., Landim C., Olla S.; {\it Fluctuations in Markov Processes. Time Symmetry and Martingale Approximation }  Grundlheren der Mathematischen Wissenschaften Vol. 345
Springer-Verlag, Berlin, New York, 2012. 



%

\bibitem{LK} Larsen, E., Keller, J. B.; {\it Asymptotic solution of neutron transport problems
for small mean free paths}, J. Math. Phys. 15, 75-81, 1974.



\bibitem{LS} Lebowitz J.L., Spohn H. {\it Steady State Self-Diffusion at Low Density}
J. Stat. Phys. 29, 1, 39-55, 1982 

\bibitem{Lo} Lorentz H.A.; {\it The motion of electrons in metallic bodies}, Proc. Acad. Amst. 7, 438â453, 1905.
  %
\bibitem{Ma} Maas J.; {\it Gradient flows of the entropy for finite Markov chains}, J. Funct. Anal. 261,
  2250â2292, 2011.

\bibitem{MMM} Mellet A.,  Mischler S.,  Mouhot C.; {\it Fractional diffusion limit for collisional kinetic equations}.
  Archive for Rational Mechanics and Analysis 199, 2, 493-525, 2011.  
  %
\bibitem{Mi} Mielke A.;  {\it Geodesic convexity of the relative entropy in reversible Markov chains},
  Calc. Var. Partial Differential Equations 48, 1â31, 2013.

\bibitem{MPR} Mielke A.,  Peletier M. A., Renger  D. R. M.; {\it On the relation between gradient flows
  and the large-deviation principle, with applications to Markov chains and diffusion},
  Potential Analysis 41, 4, 1293â1327, 2104. 

\bibitem{mt}
  Meyn S., Tweedie R.L.;
  \emph{Markov chains and stochastic stability. Second edition.}
  Cambridge University Press, Cambridge, 2009.
  %

\bibitem{SaSe} Sandier E., Serfaty S.; {\it Gamma-convergence of gradient flows with applications to Ginzburg-Landau}
  Communications on Pure and Applied mathematics 57, 12, 1627-1672, 2004.

\bibitem{Se} Serfaty S.; {\it Gamma-convergence of gradient flows on Hilbert and metric spaces and applications},
  Discrete Contin. Dyn. Syst. Ser. A, 31, 1427-1451,  2011.




  
 \bibitem{Sp} Spohn, H.; {\it Kinetic equations from Hamiltonian dynamics: Markovian limits},
 Rev. Mod. Phys., 52, 3, 569-615, 1980.
 


%BLP,  DGP


\end{thebibliography}
\end{document}